\documentclass[11pt,a4paper]{article}
\usepackage[margin=1in]{geometry}
\usepackage{amsmath,amssymb,amsthm}
\usepackage{graphicx}
\usepackage{float}   
\usepackage[colorlinks=true,linkcolor=blue,citecolor=blue]{hyperref}
\usepackage{booktabs}
\usepackage{caption}
\newtheorem{proposition}{Proposition}
\newtheorem{lemma}{Lemma}
\newtheorem{corollary}{Corollary}
\theoremstyle{definition}
\newtheorem{assumption}{Assumption}
\theoremstyle{remark}
\newtheorem{remark}{Remark}

\title{Sharp Bounds on the Mean Efficiency of a Fluctuating Machine}

\author{%
Badr Farih\\[2pt]
\normalsize \texttt{badr.farih1@usmba.ac.ma}%
}

\date{}

\begin{document}
\maketitle
\vspace{-2.2em}

\begingroup
\leftskip=0.06\textwidth \rightskip=0.06\textwidth
\noindent\textbf{Abstract.}
The efficiency of a machine operating at the scale of thermal fluctuations is a
random variable whose statistics are ordinarily obtained from a dynamical model,
and which conventionally has no moments of any order: the input heat in the
denominator of $-W/Q_{\rm h}$ fluctuates through zero. We work instead with the
exergetic ratio $\eta=W/(W+T_0S)$, whose denominator vanishes only with its
numerator, so that for non-negative dissipation it lies in $[0,1]$ pointwise and
every moment exists --- and we ask what the energy budget alone then determines
about its mean. With
$\alpha=T_0\langle S\rangle/W$ the mean dissipation per unit useful work and
$\sigma^{2}$ the relative variance of the dissipation,
\[
\frac{1}{1+\alpha}\;<\;\langle\eta\rangle\;\le\;
\frac{\sigma^{2}}{1+\sigma^{2}}+\frac{1}{(1+\sigma^{2})[1+\alpha(1+\sigma^{2})]},
\]
with both ends sharp and no distributional assumption. The lower bound is
Jensen's inequality --- fluctuating dissipation always raises the mean efficiency
above its deterministic value --- and the mean alone determines nothing above it.
The upper bound is attained by an \emph{intermittently reversible} law, which
dissipates nothing at all in a fraction $\sigma^{2}/(1+\sigma^{2})$ of
realisations; it is a moment-problem extremal rather than a realised machine, but
a stability result turns it into a prediction --- a device measured near the bound
must operate intermittently, testable against the trajectory record alone. A
third moment lifts the floor, closing the bracket entirely at the smallest third
moment the first two permit. Fixed delivered work is not required: when it too
fluctuates the bounds hold with the moments taken on the ratio $T_0S/W$, whose
mean is not the ratio of the separate means. In that setting the
thermodynamic uncertainty relation, applied to the work current, converts the
ceiling into a precision--efficiency frontier: with $\epsilon_w^{2}$ the relative
variance of the delivered work and $\gamma=2k_BT_0/\langle W\rangle$ scaled by the
fraction $r\le1$ of the dissipation the observed current accounts for, the mean
efficiency is at most
$\varsigma^{2}/(1+\varsigma^{2})+\epsilon_w^{2}/\{(1+\varsigma^{2})
[\epsilon_w^{2}+r\gamma(1+\varsigma^{2})]\}$, so a machine with more reproducible
output has a strictly lower efficiency ceiling. At $\varsigma^{2}=0$ this is
exactly the known bound on a molecular motor's ratio-of-means efficiency, which
is thereby identified as the zero-dissipation-variance member of a family and
shown to be unsafe when applied to the mean of the fluctuating ratio. Admitting
negative dissipation restores a pole inside the physical region and the mean
diverges again, so the non-negativity is not a restriction narrowing the theory
but the condition under which a moment-based account of stochastic efficiency
exists at all. Inside the interval lies the maximum-entropy benchmark
$\alpha^{-1}e^{1/\alpha}E_{1}(1/\alpha)$. We close by working the bounds out for
a molecular motor with futile cycles, observed until it has delivered a fixed
number of steps: there the dissipation cannot fall below the reversible cost of
the delivered work, and that floor $b$ sharpens the ceiling to
$q/(1+\alpha b)+(1-q)/(1+\alpha c)$ with $q=\sigma^{2}/[\sigma^{2}+(1-b)^{2}]$
and $c=1+\sigma^{2}/(1-b)$, removing between $40$ and $67$ per cent of the
interval width. The sharpened ceiling is saturated when slips are rare, so the
extremal law is not only an idealisation: it is the operating regime of a motor
whose losses are rare and complete.

\vspace{0.6em}
\noindent\textbf{keywords:} stochastic thermodynamics;
efficiency fluctuations; maximum-entropy inference; entropy production;
thermodynamic uncertainty relation; molecular machines.
\par\endgroup
\section*{1. Introduction}

At macroscopic scales the efficiency of a thermodynamic process is a number. At
the scale of a single molecular machine it is not: the work delivered and the
heat dissipated in any one realisation are random variables, and so is their
ratio. The point is not academic. In a Brownian Carnot engine built from an
optically trapped colloidal particle, the measured efficiency fluctuates from
cycle to cycle strongly enough that the Carnot bound is exceeded over a small
number of non-equilibrium cycles~\cite{Martinez2016}. Any statement about
``the'' efficiency of such a device is therefore a statement about a
distribution, and the interesting question is which features of that
distribution can be pinned down, and from what information.

Stochastic thermodynamics answers this question comprehensively when the
dynamics are known~\cite{Seifert2012}. Verley \emph{et al.} compute the
large-deviation function of the fluctuating efficiency and find the striking
result that, in the long-time limit, the Carnot efficiency is the \emph{least}
likely outcome rather than an unattainable ceiling~\cite{Verley2014}.
Muratore-Ginanneschi and Schwieger obtain the efficiency at maximum power for
sub-micron engines by solving an optimal-transport problem for the driving
protocol~\cite{Muratore2015}. These results are sharp, and they share a
prerequisite: a specified model of the machine --- its potential, its rates, its
protocol. Given the dynamics, the distribution of efficiency follows.

Often the dynamics are exactly what is missing. A coarse-grained biological
machine, a device characterised only through its aggregate energy budget, or a
process observed at a level of description far above its microscopic degrees of
freedom may offer no tractable model, while still offering an energy budget: the
useful work delivered, the temperature of the environment, and the first one or
two moments of the dissipation. This paper asks what such data determine about
the mean efficiency, and answers it completely.

We treat the entropy production $S$ as an uncertain quantity and assign it the
distribution that is maximally noncommittal with respect to everything not
specified~\cite{Jaynes1957}. Constraining only $S\ge0$ and its mean
$\langle S\rangle$, the maximum-entropy prior is exponential. Combining it with
the efficiency implied by energy accounting, $\eta = W/(W+T_0S)$, yields a mean
efficiency that depends on a single dimensionless group,
\[
\alpha \;=\; \frac{T_0\langle S\rangle}{W},
\]
the ratio of mean dissipated energy to useful work, and admits a closed form in
terms of the exponential integral,
$\langle\eta\rangle = \alpha^{-1}e^{1/\alpha}E_1(1/\alpha)$. The prior's scale is absorbed entirely into $\alpha$, so no
arbitrary choice of entropy unit survives into the result.

The central result is a pair of bounds. Because $\eta$ is convex in $S$,
Jensen's inequality gives $\langle\eta\rangle>(1+\alpha)^{-1}$: at fixed mean
dissipation, \emph{fluctuations raise the mean efficiency} above the value a
machine would have if it dissipated its mean amount every time. In the opposite
direction, adding the relative variance
$\sigma^{2}=\mathrm{Var}(S)/\langle S\rangle^{2}$ caps it, and the two together
determine $\langle\eta\rangle$ to within
\[
\frac{1}{1+\alpha}\;<\;\langle\eta\rangle\;\le\;
\frac{\sigma^{2}}{1+\sigma^{2}}+\frac{1}{(1+\sigma^{2})[1+\alpha(1+\sigma^{2})]} ,
\]
sharp at both ends, from two measured numbers and no distributional assumption.
The interval closes as $\sigma^{2}\to0$, with width
$\sigma^{2}\alpha^{2}/(1+\alpha)^{2}$.

The bounds take a further constraint from the dynamics once the conditioning on
the delivered work is removed. The thermodynamic uncertainty
relation~\cite{Barato2015,Gingrich2016,Horowitz2017}, applied to the work current
rather than to the dissipation, bounds the dissipation ratio below by the
precision of the output; the moment ceiling is strictly decreasing in that ratio
and the interval's width strictly increasing, so a single inequality both caps the
mean efficiency and forbids the interval from closing. The ceiling reaches the
machine through two measured relative variances and the ratio
$\gamma=2k_BT_0/\langle W\rangle$ alone, and at zero dissipation variance it
reduces exactly to the known universal bound on a molecular motor's
efficiency~\cite{Pietzonka2016} --- which it thereby places inside a family, and
corrects, since that bound constrains the ratio of means and is not valid for the
mean of the ratio.

The upper bound is attained by a two-point law with an atom at $S=0$: the most
efficient machine compatible with a given mean and variance is
\emph{intermittently reversible}, dissipating nothing at all in a fraction
$\sigma^{2}/(1+\sigma^{2})$ of its realisations. This converts the bound into a
prediction about behaviour rather than a mere inequality, and one that can be
checked against a trajectory record without measuring efficiency at all.

Section~9 puts the bounds to work on a machine. For a molecular motor with
futile cycles, observed until it has delivered a fixed number of steps, the
delivered work is deterministic and the dissipation is negative-binomial, so
$\alpha$, $\sigma^{2}$ and $\langle\eta\rangle$ are all exact. The interval is
informative there, and one further piece of thermodynamics sharpens it: the
motor cannot dissipate less than the reversible cost of the work it delivers, and
that floor moves the extremal atom away from zero dissipation, removing between
$40$ and $67$ per cent of the width. The sharpened ceiling is then saturated in
the rare-slip limit, in which the dissipation becomes two-valued --- so the
extremal law of the moment problem is realised by a machine defined by rates,
rather than remaining an idealisation.

Two further findings frame the bounds. First, the variance is genuinely needed:
embedding the exponential prior in the unit-mean Gamma family shows that at fixed
mean the mean efficiency sweeps the whole interval between the Jensen bound and
unity, so the mean dissipation alone constrains it \emph{only from below}. The
upper bound in the display above is exactly what the second moment buys. Second,
under the maximum-entropy prior the mean efficiency is available in closed form
and interpolates between $1-\alpha$ near reversibility and
$(\ln\alpha-\gamma)/\alpha$ in the strongly dissipative limit --- decaying
logarithmically more slowly than the deterministic estimate. That value is an
interior point of the admissible interval, as a noncommittal choice should be.

One point of framing should be made before the clarifications, because it decides
how the central assumption is to be read. The conventional stochastic efficiency
$-W/Q_{\rm h}$ has no moments of any order: the input heat in the denominator
fluctuates through zero and the distribution acquires inverse-square
tails~\cite{Holubec2022,Polettini2015}. That is why the field characterises
stochastic efficiency by its distribution and its large-deviation function rather
than by an average. The exergetic ratio used here has a denominator that vanishes
only together with its numerator, so with non-negative dissipation it lies in
$[0,1]$ pointwise and every moment exists; Proposition~\ref{prop:exist}
identifies the single route back to the pathology, and
Assumption~\ref{as:pos} closes it. The assumption is accordingly not a
restriction narrowing a general theory. It is, with the choice of figure of
merit, what makes a moment-based account of stochastic efficiency available at
all --- and the bounds of Sections~3--6 are what that buys.

Three clarifications of scope are needed at the outset. The bounds are obtained
from the classical Chebyshev--Markov moment machinery~\cite{KarlinStudden,
Vandenberghe2007}, applied here to stochastic efficiency; we claim novelty for
the application and for the physical reading of the extremal law, not for the
method, whose extremal two-point structure is standard. The bounds are
distribution-free and hold for any machine satisfying the stated energy
accounting; the closed-form value inside them, by contrast, is a reference under
stated ignorance rather than a prediction for any particular machine, and a
measured deviation from it quantifies how much structure the real system
possesses beyond its mean dissipation. And although the weight $e^{-S/\langle S\rangle}$ coincides in form with
the kernel of the fluctuation theorems when $\langle S\rangle = k_B$, it enters
here as an epistemic prior rather than as a ratio of path measures; we claim no
fluctuation-theorem result, and the formal resemblance should not be read as one.

The paper is organised as follows. Section~2 fixes the efficiency definition and
the information state. Section~3 establishes the prior-free lower bound and shows
that the mean alone determines nothing above it. Section~4 derives the upper
bound from the variance, identifies the extremal machine, assembles the
admissible interval, and shows that its width carries a thermodynamic floor set
by the uncertainty relations --- a floor that is attained, not merely respected,
when the dissipation is Gaussian. Section~5 shows that a third moment lifts the lower bound
and closes the bracket entirely at the smallest third moment the first two
permit. Section~6 removes the conditioning on the delivered work, shows that the
bounds survive intact, and --- since the work is then free to fluctuate --- uses
the uncertainty relation on the work current to convert the ceiling into a
precision--efficiency frontier. Section~7 evaluates the maximum-entropy benchmark that
sits inside the interval, in closed form. Section~8 relates the construction to existing results on
efficiency fluctuations, Section~9 works the bounds out for a molecular
motor with futile cycles, and Section~10 discusses the regime of validity.

\section*{2. Setup}

\subsection*{2.1 Setting and assumptions}

We consider a process that performs a specified task --- delivering useful work
$W>0$ --- in contact with a thermal environment at temperature $T_0$, and that
produces entropy $S$ in doing so. Three assumptions fix the setting.

\begin{assumption}[Conditioning on the task]
\label{as:task}
All expectations below are conditional on the delivered work $W$; the prior is a
prior for $S$ given $W$, with conditional mean $\langle S\mid W\rangle = \mu$.
\end{assumption}

\noindent
In a single realisation of a small machine both the work output and the
dissipation fluctuate, and in general they are correlated. We do not assume
otherwise. Assumption~\ref{as:task} states instead what object is being computed:
a conditional expectation $\langle\eta\mid W\rangle$, the mean efficiency of the
machine \emph{given} that it delivered the required work. This is the relevant
quantity for a device with a specified function, and it requires no claim about
the joint law of $(W,S)$. Recovering the unconditional mean would require that
joint law and is not attempted here. For readability we write
$\langle\,\cdot\,\rangle$ for $\langle\,\cdot\mid W\rangle$ throughout.

\begin{assumption}[Non-negative entropy production]
\label{as:pos}
$S\ge 0$.
\end{assumption}

\noindent
This assumption is constitutive rather than cosmetic, and we make its role
explicit. The fluctuation theorems establish that individual trajectories may
have $S<0$, with probability exponentially small in $|S|/k_B$~\cite{Seifert2012};
that is precisely what makes small-engine efficiency interesting. It is therefore
worth asking what happens if the restriction is dropped. The answer is that the
quantity under study ceases to exist.

\begin{proposition}[Existence of a mean efficiency]
\label{prop:exist}
Let $S$ have a prior with a density that is positive and continuous in a
neighbourhood of $S_\ast = -W/T_0$. Then $\langle\eta\rangle = +\infty$.
\end{proposition}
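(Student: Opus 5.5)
The plan is to show that the integrand $\eta(S)=W/(W+T_0S)$ has a non-integrable pole at $S_\ast=-W/T_0$. We then check that the divergence comes out as $+\infty$ rather than as an undefined $\infty-\infty$. First we write $W+T_0S = T_0(S-S_\ast)$, so that
\[
\eta(S)=\frac{W}{T_0}\,\frac{1}{S-S_\ast}.
\]
By hypothesis there are $\delta>0$ and $c>0$ such that the density satisfies $p(S)\ge c$ on $[S_\ast-\delta,\,S_\ast+\delta]$; this follows from positivity and continuity after shrinking the neighbourhood to a compact one. We split the expectation into three regions: $A_+=(S_\ast,S_\ast+\delta]$, $A_-=[S_\ast-\delta,S_\ast)$, and the complement $B$.

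On $A_+$ the integrand is positive and bounded below by $(W/T_0)\,c/(S-S_\ast)$. Hence
\[
\int_{A_+}\eta\,p\,dS\;\ge\;\frac{cW}{T_0}\int_0^\delta\frac{du}{u}=+\infty .
\]
On $A_-$ the integrand is negative and its integral diverges to $-\infty$ by the same estimate. So the naive Lebesgue integral is of the form $\infty-\infty$. This is the main obstacle: $\eta$ is not integrable in the Lebesgue sense, and a value $+\infty$ needs a stated convention.

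Two routes are available, and I would commit to the first while noting the second.

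\emph{Route 1.} Read $\langle\eta\rangle$ as the integral over the physically meaningful branch. For $S<S_\ast$ the exergy balance $W+T_0S<0$ has no physical content, so the prior's support is effectively $S>S_\ast$. Equivalently, we may show that the positive part satisfies $\langle\eta^+\rangle=+\infty$. If the density's support is restricted to $S\ge S_\ast$, as for any prior meant to describe entropy production bounded by the second-law cost of the work, then $\eta\ge0$ on the support and the $A_+$ estimate alone gives $\langle\eta\rangle=+\infty$.

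\emph{Route 2.} Keep the full two-sided neighbourhood and use a symmetric principal value. By continuity $p(S_\ast\pm u)=p(S_\ast)+o(1)$, so the two logarithmic divergences cancel at leading order. In that case the claim of $+\infty$ would fail or need refinement. I would therefore make explicit that the positive part diverges and that $\eta$ has no finite mean in any case.

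Finally, on $B$ we have $|S-S_\ast|\ge\delta$, so $|\eta|\le W/(T_0\delta)$. The contribution of $B$ is therefore finite and cannot compensate the divergence on $A_+$. Combining the three pieces under the convention of Route 1 gives $\langle\eta\rangle=+\infty$.
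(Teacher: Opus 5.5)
Your argument is correct and is the paper's own: you bound the density below near $S_\ast$ and obtain the logarithmic divergence $\int_0^\delta du/u$ on the physical side $S>S_\ast$. The paper also takes that side as the operative reading when it notes that excluding $E_{\mathrm{cons}}\le0$ does not remove the divergence. Your explicit handling of the $\infty-\infty$ ambiguity in the two-sided integral, and of the bounded contribution away from the pole, makes precise what the paper's proof leaves implicit.
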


\begin{proof}
$S_\ast$ is the zero of $E_{\mathrm{cons}} = W+T_0S$, so \eqref{eq:eta} has a
simple pole there. Writing $p(S_\ast)=p_\ast>0$, the integrand behaves as
$p_\ast W/\!\left[T_0(S-S_\ast)\right]$ as $S\to S_\ast$, whose integral over any
one-sided neighbourhood diverges logarithmically. Excluding the region
$E_{\mathrm{cons}}\le 0$ does not help: the divergence is approached from within
the physical region.
\end{proof}

\begin{remark}
Proposition~\ref{prop:exist} explains a feature of the literature rather than
contradicting it. Studies of stochastic efficiency work with the distribution of
$\eta$ and its large-deviation function rather than with its
mean~\cite{Verley2014}, and the reason is visible here: for a fluctuating
denominator the mean is generically ill-defined. For the conventional efficiency
$-W/Q_{\rm h}$ this is not a technicality but a known obstruction: the input heat
in the denominator fluctuates through zero, the distribution acquires
$\rho(\eta)\sim\eta^{-2}$ tails, and every moment
diverges~\cite{Holubec2022,Polettini2015}. The exergetic ratio used here does not
share the defect. With $W$ and $S$ both non-negative it lies in $[0,1]$
pointwise, so all its moments exist (Proposition~\ref{prop:jointexist}), and
Proposition~\ref{prop:exist} identifies the sole route by which the pathology
returns --- a dissipation admitting negative values. Assumption~\ref{as:pos}
closes that route. The definition and the assumption together are therefore not a
restriction imposed on a general theory; they are what makes a moment-based
description of stochastic efficiency available at all. In practice the assumption
holds when $S$ is coarse-grained --- over a cycle, an ensemble, or an observation
window --- so that negative excursions are absent.
\end{remark}

\subsubsection*{When is Assumption~\ref{as:pos} justified?}

Since the assumption is load-bearing, it deserves a criterion rather than an
appeal to coarse-graining. Two distinct situations make it legitimate, and they
are worth separating because only one of them constrains $\sigma^{2}$.

\emph{(i) Structural non-negativity.} If the dissipation has no reverse channel
--- if $S$ is defined as the integral of a positive-definite quantity, or the
driving is strong enough that reverse transitions are negligible --- then
$S\ge0$ holds \emph{exactly}, whatever the size of its fluctuations. This is the
important case, and it removes an objection the reader may otherwise form: large
relative fluctuations and strict non-negativity are not in conflict. An
exponentially distributed dissipation has $\sigma^{2}=1$, a
$\mathrm{Gamma}(1/2)$ dissipation has $\sigma^{2}=2$, and both are supported on
$[0,\infty)$. Assumption~\ref{as:pos} is a statement about the structure of the
process, not about how strongly it fluctuates.

\emph{(ii) Aggregation.} If instead $S$ is a sum of many weakly correlated
contributions --- the usual reading of ``coarse-grained over a cycle or an
observation window'' --- then $S$ is approximately Gaussian and
\begin{equation}
\Pr\left[S<0\right]\;\approx\;\Phi\!\left(-1/\sigma\right),
\label{eq:negprob}
\end{equation}
$\Phi$ being the standard normal distribution function and $\sigma^{2}$ the same
relative variance that appears throughout this paper. This is negligible only
for small $\sigma$: \eqref{eq:negprob} gives $8\times10^{-4}$ at
$\sigma^{2}=0.1$, $2\times10^{-2}$ at $\sigma^{2}=0.25$, and $0.16$ at
$\sigma^{2}=1$. The aggregation route therefore justifies
Assumption~\ref{as:pos} only in the near-deterministic regime, and we do not rely
on it elsewhere.

\emph{Where the assumption fails.} Trajectory-level total entropy production of
a system near equilibrium, with $\sigma^{2}$ of order unity, violates
Assumption~\ref{as:pos} outright; the present framework does not apply there,
and the efficiency of such a system must be studied through its distribution
rather than its mean.

Finally, it is worth recording that \emph{existence} requires strictly less than
the assumption states. Proposition~\ref{prop:exist} fails only if the density
reaches the pole at $S_\ast=-W/T_0$, so $\langle\eta\rangle$ exists whenever
$S>-W/T_0$ almost surely. The integral fluctuation theorem bounds excursions in
that direction --- $\langle e^{-S/k_B}\rangle=1$ with Markov's inequality gives
$\Pr[S\le-s]\le e^{-s/k_B}$, hence $\Pr[S\le-W/T_0]\le e^{-W/(k_BT_0)}$ --- but
this reassures rather than substitutes: the divergence of
Proposition~\ref{prop:exist} is logarithmic and approached from within the
physical region, so a small probability at the pole is not enough. What
Assumption~\ref{as:pos} buys beyond existence is the bounds themselves, whose
majorant and minorant arguments use $x\ge0$ throughout.

\noindent
One might still object that restricting the prior to $S\ge 0$ is what produces
the result. It is not. Appendix~A constructs the prior that admits
$S<0$ while respecting the integral fluctuation theorem
$\langle e^{-S/k_B}\rangle=1$, obtained by imposing that identity as a
maximum-entropy constraint alongside the mean; the second law $\mu\ge0$ emerges
from the constraint rather than being assumed. Conditioned on the physical region
$S\ge0$ --- which Proposition~\ref{prop:exist} shows is unavoidable --- that prior
and the exponential prior \eqref{eq:prior} give mean efficiencies agreeing to
within a few per cent at equal mean dissipation, converging as $\mu/k_B$ grows
(Table~\ref{tab:ftprior}). The restriction to $S\ge0$ is required for the mean to
exist; the particular prior on that half-line is not what drives the answer.

\begin{assumption}[Known mean dissipation]
\label{as:mean}
The mean $\langle S\rangle = \mu < \infty$ is known; nothing further about the
distribution of $S$ is known.
\end{assumption}

\noindent
Assumption~\ref{as:mean} is the information state whose consequences this paper
works out. It is deliberately weak: it is what remains when the dynamics of the
machine are unavailable but its energy budget is not.

\subsection*{2.2 Efficiency from energy accounting}

By the Gouy--Stodola theorem the energy irreversibly lost to the environment is
$T_0S$. The energy consumed by the process therefore decomposes as
\begin{equation}
E_{\mathrm{cons}} \;=\; \underbrace{W}_{\text{delivered}} \;+\;
\underbrace{T_0 S}_{\text{dissipated}},
\label{eq:budget}
\end{equation}
and the efficiency --- the fraction of what was spent that emerged as useful
work --- is
\begin{equation}
\eta(S) \;=\; \frac{W}{W + T_0 S}.
\label{eq:eta}
\end{equation}
Equation~\eqref{eq:eta} has the properties one wants of an efficiency without
further stipulation: it is bounded in $(0,1]$ for every $S\ge0$, attains $1$ only
at $S=0$, decreases monotonically in the dissipation, and tends to $0$ rather
than becoming negative as $S\to\infty$. In particular the domain of $S$ needs no
truncation.

\subsection*{2.3 Why not the linear definition}

A common alternative fixes the input energy $E_{\mathrm{in}}$ and writes
$\tilde\eta(S) = 1 - T_0S/E_{\mathrm{in}}$. For the present purpose it is
unsuitable, for a reason worth stating explicitly rather than leaving to the
reader.

Because $\tilde\eta$ is affine in $S$ and expectation is linear,
\begin{equation}
\langle\tilde\eta\rangle \;=\; 1 - \frac{T_0\langle S\rangle}{E_{\mathrm{in}}}
\label{eq:linear}
\end{equation}
for \emph{every} distribution of $S$ with mean $\langle S\rangle$. The
distribution enters only through its first moment; no assignment of a prior can
change the answer, and no question about the influence of dissipation
\emph{fluctuations} on efficiency can be posed within that definition. This is
not a defect of $\tilde\eta$ for its usual purposes, but it makes it inert here.

\begin{remark}[The two definitions on a heat engine]
\label{rem:twobath}
The distinction is not formal. Consider an engine absorbing $Q_h$ at $T_h$,
rejecting heat at $T_c$ and returning to its initial state each cycle, so that
$S=-Q_h/T_h+Q_c/T_c$ and $W=Q_h\eta_{\mathrm{C}}-T_cS$ with
$\eta_{\mathrm{C}}=1-T_c/T_h$. Its first-law efficiency,
\[
\frac{W}{Q_h}\;=\;\eta_{\mathrm{C}}-\frac{T_c}{Q_h}\,S ,
\]
is affine in $S$: the supply $Q_h$ is held fixed and the output absorbs the
fluctuation, so by \eqref{eq:linear} its mean depends on $\langle S\rangle$ alone.
Taking the dead state at $T_0=T_c$, however, the exergy supplied is
$Q_h\eta_{\mathrm{C}}=W+T_cS$, and the second-law efficiency is
\[
\frac{W}{W+T_cS} ,
\]
which is \eqref{eq:eta} exactly. The two describe the same engine; they differ in
which quantity is held fixed. Definition~\eqref{eq:eta} is the one appropriate to
a machine required to deliver a specified output at a fluctuating cost, and it is
the only one of the two for which the distribution of $S$ matters at all.
\end{remark}
A further inconvenience is that $\tilde\eta$ becomes negative for
$S > E_{\mathrm{in}}/T_0$, so any averaging over $S$ requires the domain to be
truncated at that point, and results then depend on a boundary that carries no
physical content. Definition~\eqref{eq:eta} avoids both issues, and its
non-linearity is what gives the distribution of $S$ --- and hence the question
this paper asks --- something to do.

\subsection*{2.4 Reduction to a single parameter}

Rescale the entropy production by its mean, $x = S/\mu$, so that the prior on $x$
has unit mean, and define
\begin{equation}
\boxed{\;\alpha \;\equiv\; \frac{T_0\,\langle S\rangle}{W}\;}
\label{eq:alpha}
\end{equation}
the ratio of mean dissipated energy to useful work delivered. Substituting into
\eqref{eq:eta},
\begin{equation}
\eta(x) \;=\; \frac{1}{1+\alpha x},
\qquad x\ge 0 .
\label{eq:etax}
\end{equation}
The problem now contains one dimensionless parameter. Small $\alpha$ is the
near-reversible regime, in which dissipation is a small correction to the work
delivered; $\alpha \gtrsim 1$ is the regime in which a machine wastes at least as
much energy as it delivers. Note that $\alpha$ measures a loss fraction and not a
system size: as discussed in Section~10.1, what makes fluctuation corrections
appreciable is the relative variance of $S$, not the magnitude of $\alpha$.

\subsection*{2.5 The maximum-entropy prior}

Under Assumptions~\ref{as:pos} and~\ref{as:mean} the distribution that is
maximally noncommittal with respect to everything not specified is obtained by
maximising the differential entropy $H[p]=-\int_0^\infty p\ln p\,\mathrm{d}S$
subject to normalisation and the mean constraint~\cite{Jaynes1957}. Introducing
multipliers $\lambda_0,\lambda_1$ and setting the variation of
$H-\lambda_0\!\left(\int p-1\right)-\lambda_1\!\left(\int Sp-\mu\right)$ to zero
gives $-\ln p(S)-1-\lambda_0-\lambda_1 S = 0$, so $p(S)\propto e^{-\lambda_1 S}$.
The constraints fix both multipliers, yielding
\begin{equation}
p(S) \;=\; \frac{1}{\mu}\,e^{-S/\mu},
\qquad\text{or, in the scaled variable,}\qquad
p(x) \;=\; e^{-x}.
\label{eq:prior}
\end{equation}
The exponential form is thus derived, not posited: it is the unique consequence
of constraining support and mean and nothing else. What is \emph{not} derived is
the value of $\mu$, and it is worth being precise about the status of that
parameter.

\subsection*{2.6 Scale invariance}

\begin{lemma}[The prior scale is not a free physical assumption]
\label{lem:scale}
Write $\mu = \lambda k_B$ for an arbitrary $\lambda>0$. Then the mean efficiency
depends on $\lambda$, $T_0$ and $W$ only through the combination
$\alpha = \lambda k_B T_0/W$.
\end{lemma}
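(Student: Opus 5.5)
The plan is a direct change of variables in the defining integral; the only real content is checking that no dependence on $\lambda$ survives outside $\alpha$. I would start from the mean efficiency under the maximum-entropy prior \eqref{eq:prior} with $\mu=\lambda k_B$:
\[
\langle\eta\rangle \;=\; \int_0^\infty \frac{W}{W+T_0S}\,\frac{1}{\lambda k_B}\,e^{-S/(\lambda k_B)}\,\mathrm{d}S .
\]
This integral is finite because the integrand is bounded by the density itself, using \eqref{eq:eta} and $S\ge0$ from Assumption~\ref{as:pos}.

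Next I would substitute $x=S/\mu=S/(\lambda k_B)$, exactly as in Section~2.4. The Jacobian $\mathrm{d}S=\lambda k_B\,\mathrm{d}x$ cancels the normalising prefactor $1/(\lambda k_B)$, so the measure becomes $e^{-x}\,\mathrm{d}x$ and carries no parameter. Dividing numerator and denominator of \eqref{eq:eta} by $W>0$ gives
\[
\eta=\frac{1}{1+(\lambda k_B T_0/W)\,x}=\frac{1}{1+\alpha x},
\]
which is \eqref{eq:etax}. Hence
\[
\langle\eta\rangle=\int_0^\infty \frac{e^{-x}}{1+\alpha x}\,\mathrm{d}x \;=:\; F(\alpha).
\]
$F$ is a fixed function of the single argument $\alpha$, and $\lambda$, $T_0$, $W$ enter only through $\alpha=\lambda k_BT_0/W$.

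I would add one sentence noting that the argument does not use the exponential form. For any prior of the scale form $p(S)=\mu^{-1}f(S/\mu)$ the same substitution leaves $\int f(x)/(1+\alpha x)\,\mathrm{d}x$. So what is shown is that the scale of the prior is absorbed into $\alpha$, while the shape of the prior is a separate matter. This also explains why the lemma reads "no arbitrary choice of entropy unit survives": rescaling $k_B$ or $\lambda$ at fixed $\mu$ changes nothing.

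There is no genuine obstacle here. The one point needing care is to state clearly that the Jacobian exactly cancels the normalisation. That cancellation is what removes $\lambda$ everywhere except inside $\alpha$.
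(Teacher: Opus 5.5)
Your proposal is correct and is essentially the paper's proof: the substitution $x=S/\mu$ makes the density $e^{-x}$ parameter-free, so $\langle\eta\rangle=\int_0^\infty e^{-x}(1+\alpha x)^{-1}\,\mathrm{d}x$ depends on $\lambda$, $T_0$ and $W$ only through $\alpha$. Your added remark that the same argument works for any scale-family prior $\mu^{-1}f(S/\mu)$ is valid and slightly more general than what the paper states.
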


\begin{proof}
Under \eqref{eq:prior} the density of $x=S/\mu$ is $e^{-x}$, independent of
$\mu$. All $\mu$-dependence of $\langle\eta\rangle=\int_0^\infty
e^{-x}(1+\alpha x)^{-1}\mathrm{d}x$ therefore resides in $\alpha$, which contains
$\lambda$ only as the stated product.
\end{proof}

\section*{3. What the mean dissipation determines}

This section establishes what the mean dissipation alone settles about the mean
efficiency. The answer is one-sided: a sharp lower bound that holds for every
distribution, and no upper bound whatever. Both halves are needed later --- the
first is the floor of the bracket assembled in Section~4, and the second is why a
second measurement is required at all. Along the way we introduce the
maximum-entropy family used throughout, in which the closed-form benchmark of
Section~7 is the member selected by knowing the mean and nothing else.

\subsection*{3.1 What holds for every prior}

\begin{proposition}[Fluctuations raise the mean efficiency]
\label{prop:jensen}
Let $S\ge0$ have any distribution with $\langle S\rangle=\mu$, and let
$\alpha=T_0\mu/W$. Then
\begin{equation}
\langle\eta\rangle \;\ge\; \frac{1}{1+\alpha} \;=\; \eta_{\mathrm{det}},
\label{eq:jensen}
\end{equation}
with equality if and only if $S=\mu$ almost surely.
\end{proposition}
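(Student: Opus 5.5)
The plan is to apply Jensen's inequality to the scaled variable $x=S/\mu$ of \eqref{eq:etax}. This variable satisfies $x\ge0$ and $\langle x\rangle=1$, and the efficiency becomes $\eta(x)=(1+\alpha x)^{-1}$. I will assume $\mu>0$ throughout; if $\mu=0$ then $S=0$ almost surely, $\alpha=0$, and both sides of \eqref{eq:jensen} equal $1$, so that case is trivial.

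The first step is to confirm that every expectation involved is finite. Since $0<\eta\le1$ for $x\ge0$, $\langle\eta\rangle$ exists for any law on $[0,\infty)$. The second step is convexity: on $[0,\infty)$ we have $\eta''(x)=2\alpha^{2}(1+\alpha x)^{-3}>0$, so $\eta$ is strictly convex for $\alpha>0$.

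Rather than cite Jensen's inequality abstractly, I would prove it through the tangent line at $x=1$. The tangent is
\[
\ell(x)=\frac{1}{1+\alpha}-\frac{\alpha}{(1+\alpha)^{2}}(x-1).
\]
Subtracting it from $\eta$ and simplifying gives
\[
\eta(x)-\ell(x)=\frac{\alpha^{2}(x-1)^{2}}{(1+\alpha)^{2}(1+\alpha x)}\;\ge\;0
\qquad\text{for all }x\ge0 .
\]
The final step is to take expectations. Because $\langle x-1\rangle=0$, this yields
\[
\langle\eta\rangle-\frac{1}{1+\alpha}
=\frac{\alpha^{2}}{(1+\alpha)^{2}}\left\langle\frac{(x-1)^{2}}{1+\alpha x}\right\rangle .
\]
This identity gives \eqref{eq:jensen} immediately.

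The only step needing care is the equality case, and the explicit remainder settles it. For $\alpha>0$ the weight $(1+\alpha x)^{-1}$ is strictly positive on $x\ge0$. The expectation of the nonnegative quantity $(x-1)^{2}/(1+\alpha x)$ therefore vanishes if and only if $x=1$ almost surely, that is, $S=\mu$ almost surely. Conversely, a point mass at $\mu$ gives equality directly.
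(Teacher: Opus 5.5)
Your proof is correct and follows the paper's route: rescale to $x=S/\mu$ and use the convexity of $\eta(x)=(1+\alpha x)^{-1}$ together with $\langle x\rangle=1$, the only difference being that the paper cites Jensen's inequality and strict convexity where you prove Jensen directly through the tangent line at $x=1$. Your explicit remainder $\alpha^{2}(x-1)^{2}/[(1+\alpha)^{2}(1+\alpha x)]$ settles the equality case directly and is the one-moment analogue of the shortfall identity in Proposition~\ref{prop:stability}, while your separate treatment of $\mu=0$ is a harmless extra that the paper leaves implicit.
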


\begin{proof}
In the scaled variable $x=S/\mu$ the efficiency is $\eta(x)=(1+\alpha x)^{-1}$,
with $\eta''(x)=2\alpha^2(1+\alpha x)^{-3}>0$ for $x\ge0$, so $\eta$ is strictly
convex there. Jensen's inequality gives
$\langle\eta(x)\rangle\ge\eta(\langle x\rangle)=\eta(1)=(1+\alpha)^{-1}$, and
strict convexity makes the inequality strict unless $x$ is degenerate.
\end{proof}

\noindent
Proposition~\ref{prop:jensen} requires no assumption beyond $S\ge0$ and a finite
mean: no exponential prior, no maximum-entropy argument, no distributional form
whatsoever. It is the statement that a machine whose dissipation fluctuates is,
on average, more efficient than one that dissipates its mean deterministically,
and it is the reason for the $\ln\alpha$ enhancement found in
Proposition~\ref{prop:large}. Any claim this paper makes that does not reduce to
\eqref{eq:jensen} is contingent on the prior.

\subsection*{3.2 A principled family of alternatives}

To measure that contingency we need alternative priors that are themselves
principled rather than arbitrary. The natural choice is the family obtained by
adding exactly one constraint to the information state. Maximising $H[p]$ on
$x\ge0$ subject to fixed $\langle x\rangle$ \emph{and} fixed $\langle\ln
x\rangle$ gives $p(x)\propto x^{k-1}e^{-bx}$, and imposing unit mean fixes $b=k$:
\begin{equation}
p_k(x)\;=\;\frac{k^k}{\Gamma(k)}\,x^{k-1}e^{-kx},
\qquad k>0,\qquad \langle x\rangle=1.
\label{eq:gamma}
\end{equation}
The Gamma family is thus not an ad hoc perturbation but the maximum-entropy
family for one additional piece of information, with $k$ in one-to-one
correspondence with the value of $\langle\ln x\rangle$. It contains the prior of
Section~2 at $k=1$; smaller $k$ describes a more variable dissipation and larger
$k$ a more predictable one, at fixed mean.

\begin{proposition}[Mean efficiency across the family]
\label{prop:gamma}
With \eqref{eq:gamma},
\begin{equation}
\langle\eta\rangle_k(\alpha)\;=\;\left(\frac{k}{\alpha}\right)^{\!k}
U\!\left(k,k,\frac{k}{\alpha}\right),
\label{eq:gammares}
\end{equation}
$U$ being the Tricomi confluent hypergeometric function. At $k=1$,
$U(1,1,z)=e^zE_1(z)$ recovers Proposition~\ref{prop:main} of Section~7.
\end{proposition}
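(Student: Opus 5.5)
We compute $\langle\eta\rangle_k(\alpha)=\int_0^\infty p_k(x)\,(1+\alpha x)^{-1}\,\mathrm{d}x$ directly and match it to the standard integral representation of the Tricomi function,
\[
U(a,b,z)\;=\;\frac{1}{\Gamma(a)}\int_0^\infty e^{-zt}\,t^{a-1}(1+t)^{b-a-1}\,\mathrm{d}t ,\qquad \operatorname{Re}a>0,\ \operatorname{Re}z>0 .
\]

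\textbf{Step 1: rescale.} Substitute $t=\alpha x$, so that $x=t/\alpha$ and $\mathrm{d}x=\mathrm{d}t/\alpha$. This gives
\[
\langle\eta\rangle_k=\frac{k^k}{\Gamma(k)}\,\alpha^{-k}\int_0^\infty t^{k-1}e^{-(k/\alpha)t}(1+t)^{-1}\,\mathrm{d}t .
\]

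\textbf{Step 2: match parameters.} Compare with the representation above using $a=k$, $z=k/\alpha$ and $b-a-1=-1$, that is $b=a=k$. The integral then equals $\Gamma(k)\,U(k,k,k/\alpha)$. The factor $\Gamma(k)$ cancels, and the prefactor $k^k\alpha^{-k}=(k/\alpha)^k$ gives \eqref{eq:gammares}.

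The hypotheses of the representation hold here: $k>0$ and $z=k/\alpha>0$. Convergence is clear, since the integrand behaves like $t^{k-1}$ near $0$ and is exponentially damped at infinity. If one prefers not to cite the representation, it can be checked by Laplace transform or taken from DLMF 13.4.4.

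\textbf{Step 3: the case $k=1$.} Here $p_1(x)=e^{-x}$, which is the prior \eqref{eq:prior}, and the result reads $\alpha^{-1}U(1,1,1/\alpha)$. To see $U(1,1,z)=e^zE_1(z)$, put $s=1+t$ in $\int_0^\infty e^{-zt}(1+t)^{-1}\,\mathrm{d}t$. This gives $e^{z}\int_1^\infty e^{-zs}s^{-1}\,\mathrm{d}s=e^zE_1(z)$. So the $k=1$ case is $\alpha^{-1}e^{1/\alpha}E_1(1/\alpha)$, which is the closed form stated for Proposition~\ref{prop:main}.

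There is no real obstacle. The only point needing care is the bookkeeping in Step 2, namely choosing $b=k$ so that the factor $(1+t)^{b-a-1}$ reduces to $(1+t)^{-1}$.
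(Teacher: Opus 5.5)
Your proposal is correct and follows the paper's proof: substitute $t=\alpha x$ and match against the integral representation of $U$ with $a=k$, $z=k/\alpha$, $b=a=k$. Your explicit check of $U(1,1,z)=e^{z}E_1(z)$ via $s=1+t$ supplies a step the paper only asserts.
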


\begin{proof}
Substituting $t=\alpha x$,
\[
\langle\eta\rangle_k=\left(\frac{k}{\alpha}\right)^{\!k}\frac{1}{\Gamma(k)}
\int_0^\infty t^{k-1}e^{-(k/\alpha)t}(1+t)^{-1}\mathrm{d}t ,
\]
and comparison with $U(a,b,z)=\Gamma(a)^{-1}\int_0^\infty
e^{-zt}t^{a-1}(1+t)^{b-a-1}\mathrm{d}t$ forces $a=k$, $z=k/\alpha$ and, from
$(1+t)^{b-a-1}=(1+t)^{-1}$, $b=a=k$.
\end{proof}

\subsection*{3.3 The direction and the size of the dependence}

\begin{proposition}[Monotonicity]
\label{prop:mono}
For every $\alpha>0$, $\langle\eta\rangle_k(\alpha)$ is strictly decreasing in
$k$ on $(0,\infty)$.
\end{proposition}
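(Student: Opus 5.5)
The plan is to recast monotonicity as a statement about the convex order within the unit-mean Gamma family, and then reuse the strict convexity already exploited in Proposition~\ref{prop:jensen}. Write $X_k$ for a random variable with density $p_k$ of \eqref{eq:gamma}, so that $\langle\eta\rangle_k(\alpha)=\langle f(X_k)\rangle$ with $f(x)=(1+\alpha x)^{-1}$. I would avoid differentiating the Tricomi representation \eqref{eq:gammares} in $k$, because $k$ enters there both as an index and as an argument. Instead I would show that, for $0<k<k'$, the variable $X_k$ is a mean-preserving spread of $X_{k'}$. Concretely, I aim to exhibit a coupling with $\langle X_k\mid X_{k'}\rangle=X_{k'}$.

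The coupling comes from the beta--gamma algebra. Let $G\sim\mathrm{Gamma}(k',1)$ and $B\sim\mathrm{Beta}(k,k'-k)$ be independent. Then $BG\sim\mathrm{Gamma}(k,1)$. I would verify this either by the standard Mellin-transform computation, $\langle (BG)^s\rangle=\frac{\Gamma(k+s)}{\Gamma(k)}$, or by quoting it. Now set $X_{k'}=G/k'$, which has density $p_{k'}$, and set
\[
X_k=\frac{k'}{k}\,B\,X_{k'}=\frac{BG}{k},
\]
which has density $p_k$. Since $\langle B\rangle=k/k'$ and $B$ is independent of $X_{k'}$, we get $\langle X_k\mid X_{k'}\rangle=X_{k'}$.

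Conditional Jensen applied to the convex function $f$ on $x\ge0$ then gives
\[
\langle f(X_k)\rangle=\big\langle\langle f(X_k)\mid X_{k'}\rangle\big\rangle\ \ge\ \langle f(X_{k'})\rangle ,
\]
that is, $\langle\eta\rangle_k\ge\langle\eta\rangle_{k'}$. This step uses only $\eta''>0$, as in Proposition~\ref{prop:jensen}.

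The only delicate step is strictness. Given $X_{k'}=y>0$, the conditional law of $X_k$ is that of $(k'/k)By$. Because $k'-k>0$, $B$ has a genuine Beta law and is nondegenerate, so this conditional law is nondegenerate. Strict convexity of $f$ therefore makes the conditional Jensen inequality strict for every $y>0$. Since $X_{k'}>0$ almost surely, integrating gives a strict inequality, valid for every $\alpha>0$. All expectations involved are finite because $0<f\le1$, so no integrability issue arises. If the beta--gamma identity were felt to need more than a citation, a fallback is to compute the Laplace-type representation $\langle f(X_k)\rangle=\int_0^\infty e^{-u}\big(1+\alpha u/k\big)^{-k}\,\mathrm{d}u$ and differentiate in $k$. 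I regard the coupling route as cleaner, and it also explains the result physically: lowering $k$ spreads the dissipation at fixed mean, and by convexity this raises the mean efficiency.
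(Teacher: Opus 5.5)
Your proposal is correct and is essentially the paper's own proof: the same Beta--Gamma coupling giving $\langle X_k\mid X_{k'}\rangle=X_{k'}$, then conditional Jensen for the convex $\eta$, with strictness from the non-degeneracy of $\mathrm{Beta}(k,k'-k)$. The only difference is the direction in which the coupling is built. The paper sets $G_2=G_1+G_m$ and invokes the independence of $G_1/G_2$ from $G_2$. You instead start from independent $B$ and $G$ and use $BG\sim\mathrm{Gamma}(k,1)$, which is the same identity read the other way round.
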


\begin{proof}
Fix $0<k_1<k_2$ and put $m=k_2-k_1>0$. Let $G_1\sim\Gamma(k_1,1)$ and
$G_m\sim\Gamma(m,1)$ be independent and set $G_2=G_1+G_m\sim\Gamma(k_2,1)$, so
that $X_j:=G_j/k_j$ has the unit-mean density $p_{k_j}$ of \eqref{eq:gamma} for
$j=1,2$. By the Beta--Gamma algebra the ratio $B:=G_1/G_2$ is
$\mathrm{Beta}(k_1,m)$ distributed and is \emph{independent} of $G_2$; hence
\begin{equation}
\mathbb{E}\!\left[X_1\mid G_2\right]
=\frac{1}{k_1}\,\mathbb{E}\!\left[G_2B\mid G_2\right]
=\frac{G_2}{k_1}\,\mathbb{E}[B]
=\frac{G_2}{k_1}\cdot\frac{k_1}{k_2}
=X_2 .
\label{eq:condexp}
\end{equation}
Thus $X_2$ is a conditional expectation of $X_1$, which is precisely the
statement $X_2\le_{\mathrm{cx}}X_1$ in the convex order~\cite{ShakedShanthikumar}.
Since $\eta$ is convex, the conditional Jensen inequality applied to
\eqref{eq:condexp} gives $\eta(X_2)\le\mathbb{E}[\eta(X_1)\mid G_2]$, and taking
expectations, $\langle\eta\rangle_{k_2}\le\langle\eta\rangle_{k_1}$.

For strictness, note that $\eta$ is \emph{strictly} convex on $[0,\infty)$ and
that, conditionally on $G_2=g$, the variable $X_1=gB/k_1$ is non-degenerate,
because $\mathrm{Beta}(k_1,m)$ is non-degenerate for every $k_1,m>0$. The
conditional Jensen inequality is therefore strict almost surely, and the
inequality between the expectations is strict.
\end{proof}

\begin{remark}
The construction contains the elementary integer case as a special instance: for
$k$ integer, \eqref{eq:gamma} is the law of the mean of $k$ i.i.d.\ unit
exponentials, and \eqref{eq:condexp} reduces to the observation that the mean of
$k+1$ such variables is the average of the $k+1$ sample means obtained by
omitting one of them. The Beta--Gamma argument removes the restriction to
integers without lengthening the proof.
\end{remark}

\begin{corollary}[The value is not determined by the mean]
\label{cor:range}
As $k$ ranges over $(0,\infty)$ at fixed $\alpha$, $\langle\eta\rangle_k$ sweeps
the whole interval $\left(\,(1+\alpha)^{-1},\,1\,\right)$: the lower endpoint is
approached as $k\to\infty$, where $p_k\Rightarrow\delta(x-1)$, and the upper as
$k\to0$, where the prior places vanishing mass at large $x$ and
$\eta\to1$ in probability.
\end{corollary}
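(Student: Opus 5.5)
The argument rests on three ingredients already in hand: the continuity of $k\mapsto\langle\eta\rangle_k(\alpha)$, the strict monotonicity of Proposition~\ref{prop:mono}, and the two limits $k\to\infty$ and $k\to0$. Together these give a strictly decreasing continuous map of $(0,\infty)$ onto the open interval between the limits.

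\emph{Continuity.} Fix $\alpha>0$. The representation $\langle\eta\rangle_k=\int_0^\infty p_k(x)(1+\alpha x)^{-1}\,\mathrm{d}x$ is continuous in $k$ on $(0,\infty)$. Since $p_k(x)$ depends continuously on $k$ and the integrand is bounded by one, dominated convergence applies locally in $k$. Alternatively, continuity can be read off from \eqref{eq:gammares}, since the Tricomi function is analytic in its parameters.

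\emph{The limit $k\to\infty$.} Write $X_k=G_k/k$ with $G_k\sim\Gamma(k,1)$. Then $\langle X_k\rangle=1$ and $\mathrm{Var}(X_k)=1/k\to0$, so $X_k\to1$ in probability. Because $\eta(x)=(1+\alpha x)^{-1}$ is bounded and continuous on $[0,\infty)$, bounded convergence gives $\langle\eta\rangle_k\to(1+\alpha)^{-1}$.

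\emph{The limit $k\to0$.} This is the step needing most care, because the mean of $X_k$ stays equal to one while the law degenerates toward zero. I would show $X_k\to0$ in probability. For fixed $\varepsilon>0$, Markov's inequality is useless here, so instead write
\[
\Pr[X_k>\varepsilon]=\Pr[G_k>k\varepsilon]\le\Pr[G_k>0]\,,
\]
which alone does not help either. A better route is the explicit bound
\[
\Pr[G_k\le k\varepsilon]=\frac{1}{\Gamma(k)}\int_0^{k\varepsilon}t^{k-1}e^{-t}\,\mathrm{d}t\ \ge\ \frac{e^{-k\varepsilon}(k\varepsilon)^k}{k\,\Gamma(k)}=\frac{e^{-k\varepsilon}(k\varepsilon)^k}{\Gamma(k+1)}\,.
\]
The right-hand side tends to $1$ as $k\to0$, because $(k\varepsilon)^k\to1$ and $\Gamma(k+1)\to1$. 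Hence $X_k\to0$ in probability, and bounded convergence gives $\langle\eta\rangle_k\to\eta(0)=1$. The mean is preserved only through a vanishing mass at very large $x$, which is what the statement's phrase ``vanishing mass at large $x$'' refers to.

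\emph{Conclusion.} By the intermediate value theorem, the continuous map $k\mapsto\langle\eta\rangle_k$ takes every value strictly between its two limits. Proposition~\ref{prop:mono} makes it strictly decreasing, so no value is taken twice and neither endpoint is attained. The range is therefore exactly the open interval $\bigl((1+\alpha)^{-1},1\bigr)$, which is consistent with the strict inequality of Proposition~\ref{prop:jensen}.
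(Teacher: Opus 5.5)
Your proof is correct and follows the paper's own argument. That argument combines the strict monotonicity of Proposition~\ref{prop:mono} with the two limits, $X_k\to1$ as $k\to\infty$ and $X_k\to0$ in probability as $k\to0$; you usefully make explicit the continuity in $k$ and the incomplete-gamma estimate that the paper leaves implicit. The slips are cosmetic. Dominated convergence needs an integrable envelope for $p_k$ uniform for $k$ near a given $k_0$ (or Scheffé's lemma), not merely $\eta\le1$, though your Tricomi-analyticity alternative already gives continuity. The abandoned Markov and $\Pr[G_k>0]$ detour should also be deleted.
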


\noindent
Numerically, at $\alpha=1$ the mean efficiency is $0.9937$ at $k=10^{-3}$,
$0.7270$ at $k=0.25$, $0.5963$ at $k=1$, $0.5237$ at $k=5$ and $0.5001$ at
$k=10^3$, against a floor of $0.5000$ (Table~\ref{tab:sens},
Fig.~\ref{fig:2}).

\subsection*{3.4 Summary: a floor and nothing more}

Three statements survive this section, and it is worth setting them out before
saying what does not.

First, the bound \eqref{eq:jensen} is unconditional. It assumes no prior, no
maximum-entropy argument and no functional form --- only $S\ge0$ and a finite
mean --- and by Proposition~\ref{prop:large} the effect it describes grows as
$\ln\alpha$. Dissipation fluctuations always raise the mean efficiency, and
increasingly so the more dissipative the machine.

Second, under the information state of Assumption~3 --- mean known, nothing else
--- the maximum-entropy prior is unique. Equation~\eqref{eq:main} is therefore the unique
reference value consistent with that state: a well-defined benchmark rather than
an estimate, and one that another author starting from the same declared
information would be obliged to reproduce.

Third, \eqref{eq:gammares} converts whatever additional information one does hold,
expressed as a value of $\langle\ln x\rangle$, into the corresponding revised
benchmark. A reader who knows more than the mean is not obliged to accept $k=1$,
and Fig.~\ref{fig:2} is the conversion chart.

The care taken over the wording of those three is not incidental. It reflects a
boundary that Corollary~\ref{cor:range} makes exact, and which is better stated
plainly than left for the reader to derive: \emph{knowing the mean dissipation
constrains the mean efficiency only from below.} The bound \eqref{eq:jensen} is
saturated by deterministic dissipation, while the opposite extreme is limited by
nothing but $\eta\le1$, so the mean alone singles out no particular value. Any
number quoted for $\langle\eta\rangle$ --- including \eqref{eq:main} --- follows from the
information state assumed and not from the mean dissipation measured. This is a
property of the problem rather than of the present treatment: no choice of prior
could do better, because the mean genuinely does not contain the information.

Delimited this way the framework has a clear operational content. A measured
efficiency below $(1+\alpha)^{-1}$ contradicts the model outright, and that test
is prior-free. One lying above \eqref{eq:main} indicates dissipation more variable than a
mean-only description implies, and Fig.~\ref{fig:2} converts the excess into an
effective $k$ --- that is, into a statement about $\langle\ln x\rangle$, a second
measurable moment. The framework thus turns an efficiency measurement into
information about the shape of the dissipation distribution, which is a more
useful thing to extract from it than a single predicted number would have been.

\begin{table}[H]
\centering
\caption{$\langle\eta\rangle_k$ from \eqref{eq:gammares} at fixed mean
dissipation. The last column is the Jensen floor of
Proposition~\ref{prop:jensen}, approached as $k\to\infty$; as $k\to0$ every row
tends to $1$}
\label{tab:sens}
\begin{tabular}{lcccccc c}
\toprule
$\alpha$ & $k=0.25$ & $k=0.5$ & $k=1$ & $k=2$ & $k=5$ & $k=20$ & $(1+\alpha)^{-1}$\\
\midrule
$0.5$ & 0.8020 & 0.7579 & 0.7227 & 0.6985 & 0.6806 & 0.6703 & 0.6667\\
$1.0$ & 0.7270 & 0.6557 & 0.5963 & 0.5547 & 0.5237 & 0.5062 & 0.5000\\
$2.0$ & 0.6465 & 0.5456 & 0.4615 & 0.4037 & 0.3627 & 0.3407 & 0.3333\\
$3.0$ & 0.5990 & 0.4819 & 0.3856 & 0.3218 & 0.2788 & 0.2571 & 0.2500\\
\bottomrule
\end{tabular}
\end{table}

\begin{figure}[H]
\centering
\includegraphics{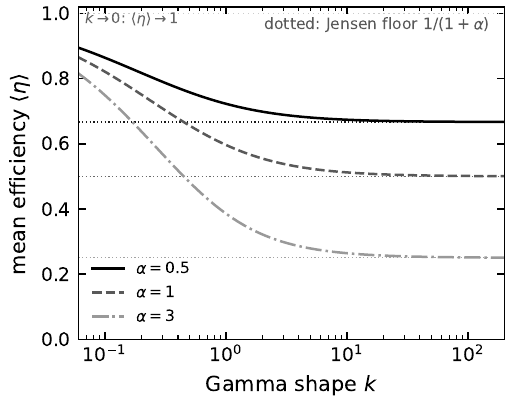}
\caption{Mean efficiency \eqref{eq:gammares} against the Gamma shape $k$ at fixed
mean dissipation, for three values of $\alpha$. Increasing $k$ makes the
dissipation more predictable at the same mean; the curves descend monotonically
(Proposition~\ref{prop:mono}) onto the Jensen floors $1/(1+\alpha)$ (fine dotted)
and rise towards $1$ as $k\to0$ (Corollary~\ref{cor:range})}
\label{fig:2}
\end{figure}

\section*{4. Two-sided bounds from two moments}

\subsection*{4.1 The remaining question}

Corollary~\ref{cor:range} is a negative result with a precise shape: the mean
dissipation bounds the mean efficiency from below and does nothing else. The
bound \eqref{eq:jensen} is sharp, and the opposite extreme is limited only by
$\eta\le1$. The obvious question is what a \emph{second} measurement supplies.
We take that second measurement to be the variance of the dissipation,
\begin{equation}
\sigma^2 \;\equiv\; \frac{\mathrm{Var}(S)}{\langle S\rangle^2}
\;=\;\mathrm{Var}(x),
\label{eq:sigma}
\end{equation}
which is directly accessible in any experiment that already resolves
$\langle S\rangle$ at the single-realisation level. The answer is that two
moments close the problem: they determine $\langle\eta\rangle$ to within an
interval that is sharp at both ends and that shrinks to a point as
$\sigma^2\to0$.

\subsection*{4.2 The upper bound}

\begin{proposition}[Upper bound from the variance]
\label{prop:upper}
Let $S\ge0$ have mean $\mu$ and relative variance $\sigma^2$, and let
$\alpha=T_0\mu/W$. Write $b=1+\sigma^2$ and $\beta=1+\alpha b$. Then
\begin{equation}
\langle\eta\rangle\;\le\;\mathcal{U}(\alpha,\sigma^2)
\;\equiv\;\frac{\sigma^2}{1+\sigma^2}
+\frac{1}{(1+\sigma^2)\left[1+\alpha(1+\sigma^2)\right]}
\;=\;1-\frac{\alpha}{\beta},
\label{eq:upper}
\end{equation}
and the bound is attained.
\end{proposition}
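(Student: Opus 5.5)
The plan is the classical Chebyshev--Markov majorant argument. I would bound the convex function $\eta(x)=(1+\alpha x)^{-1}$ on $x\ge0$ from above by a quadratic $q(x)=A+Bx+Cx^2$. The quadratic is chosen so that its expectation depends only on the first two moments, $\langle x\rangle=1$ and $\langle x^2\rangle=1+\sigma^2=b$. Then $\langle\eta\rangle\le\langle q\rangle=A+B+Cb$ for every admissible law. For sharpness, the contact points of $q$ and $\eta$ should carry a law with these moments.

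\textbf{Step 1: the extremal law.} Take the two-point law with atoms at $x=0$ and $x=c$. Matching $\langle x\rangle=1$ and $\langle x^2\rangle=b$ forces weight $p=1/c$ at $c$, with $pc^2=b$. Hence $c=b$, and the weight at $0$ is $1-1/b=\sigma^2/(1+\sigma^2)$. Its mean efficiency is
\[
\Bigl(1-\tfrac1b\Bigr)\cdot1+\tfrac1b\cdot\tfrac{1}{1+\alpha b}=\mathcal U(\alpha,\sigma^2).
\]
The identity $\mathcal U=1-\alpha/\beta$ is one line of algebra. This law attains the bound, once the bound is shown.

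\textbf{Step 2: the majorant.} Choose $q$ to interpolate $\eta$ at $0$ and to be tangent to $\eta$ at $b$, which gives three linear conditions on $A,B,C$. Then $1-(1+\alpha x)q(x)$ is a cubic with roots $0,b,b$ and leading coefficient $-\alpha C$, so
\[
1-(1+\alpha x)\,q(x)=-\alpha C\,x(x-b)^2 .
\]
Evaluating at $x=-1/\alpha$ gives $1=C(1+\alpha b)^2/\alpha$, so $C=\alpha/\beta^2>0$. Consequently
\[
q(x)-\eta(x)=\frac{\alpha C\,x(x-b)^2}{1+\alpha x}\ge0 \qquad\text{for all } x\ge0 .
\]
This is where Assumption~\ref{as:pos} is used: the factor $x$ must be non-negative.

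\textbf{Step 3: conclusion.} For any law with the stated moments, $\langle\eta\rangle\le\langle q\rangle=A+B+Cb$. The right-hand side depends only on the moments, so it may be evaluated on the two-point law of Step 1. There $q=\eta$ on the support, so $\langle q\rangle=\mathcal U$, which gives both the bound and its attainment without computing $A$ and $B$ explicitly.

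The only step requiring care is the sign in Step 2, namely verifying $C>0$ so that the majorant lies above $\eta$ on the whole half-line. The factorisation trick reduces this to the single evaluation at the pole $x=-1/\alpha$. The degenerate case $\sigma^2=0$, where the two atoms merge and the bound reduces to Proposition~\ref{prop:jensen}'s equality case, should be noted separately.
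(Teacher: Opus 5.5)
Your proposal is correct and is essentially the paper's proof: the same quadratic majorant (matching $\eta$ at $0$, and in value and slope at $b$), the same factorisation of $q(x)(1+\alpha x)-1$ as a positive multiple of $x(x-b)^{2}$, and attainment by the same two-point law; evaluating $\langle q\rangle$ on that law is a tidy substitute for the paper's explicit computation of $1-\alpha/\beta$. There is one arithmetic slip, harmless because only the sign of $C$ is used: evaluating at $x=-1/\alpha$ gives $1=C\beta^{2}/\alpha^{2}$, so $C=\alpha^{2}/\beta^{2}$ as in \eqref{eq:majorant}, not $\alpha/\beta^{2}$.
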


\begin{proof}
Consider the quadratic
\begin{equation}
q(x)\;=\;1-\frac{\alpha(2\beta-1)}{\beta^{2}}\,x+\frac{\alpha^{2}}{\beta^{2}}\,x^{2},
\label{eq:majorant}
\end{equation}
chosen so that $q(0)=\eta(0)$, $q(b)=\eta(b)$ and $q'(b)=\eta'(b)$. Since
$1+\alpha x>0$ on $x\ge0$, the sign of $q-\eta$ is that of
$q(x)(1+\alpha x)-1$, and a direct expansion gives the identity
\begin{equation}
q(x)\left(1+\alpha x\right)-1\;=\;\frac{\alpha^{3}}{\beta^{2}}\;x\,(x-b)^{2},
\label{eq:cubic}
\end{equation}
which is non-negative for every $x\ge0$. Hence $q\ge\eta$ pointwise on the
support, and
\[
\langle\eta\rangle\;\le\;\langle q\rangle
\;=\;1-\frac{\alpha(2\beta-1)}{\beta^{2}}+\frac{\alpha^{2}}{\beta^{2}}\,(1+\sigma^{2})
\;=\;1-\frac{\alpha}{\beta},
\]
using $\langle x\rangle=1$, $\langle x^{2}\rangle=1+\sigma^{2}$ and
$\alpha b=\beta-1$. The right-hand side depends on the distribution only through
the two prescribed moments. Equality requires $q=\eta$ almost surely, which by
\eqref{eq:cubic} confines the support to $\{0,b\}$; the two-point law
\begin{equation}
\Pr\left[x=0\right]=\frac{\sigma^{2}}{1+\sigma^{2}},
\qquad
\Pr\left[x=1+\sigma^{2}\right]=\frac{1}{1+\sigma^{2}}
\label{eq:extremal}
\end{equation}
has the required mean and variance and attains \eqref{eq:upper}.
\end{proof}

\noindent
The bound was checked independently by linear programming over a dense grid of
support points subject to the two moment constraints: the numerical optimum
agrees with \eqref{eq:upper} to $10^{-13}$ or better and the optimiser places all
mass at $0$ and $1+\sigma^{2}$, as \eqref{eq:extremal} requires.

\begin{remark}[Provenance of the method]
\label{rem:moment}
Proposition~\ref{prop:upper} is an instance of the classical Chebyshev--Markov
moment problem --- bounding $\langle f(X)\rangle$ over all distributions with
prescribed moments --- whose extremal solutions are discrete and whose optimal
bounds are certified by polynomial majorants of exactly the kind used above. The
general theory is set out by Karlin and Studden~\cite{KarlinStudden}; modern
treatments recast the problem as a semidefinite
program~\cite{Vandenberghe2007,Bertsimas2005}, and the linear-programming check
just described is the discretised form of that formulation. We give the short
self-contained proof because the majorant is explicit for
$\eta=(1+\alpha x)^{-1}$ and the argument is then two lines, but we claim no
novelty for the machinery. What is new here is its application to stochastic
efficiency, and the physical reading of the extremal law given next.
\end{remark}

\subsection*{4.3 The extremal machine}

Equation~\eqref{eq:extremal} has a direct physical reading. Among all machines
with a given mean and variance of dissipation, the one with the highest mean
efficiency is \emph{intermittently reversible}: with probability
$\sigma^{2}/(1+\sigma^{2})$ it dissipates nothing at all, and otherwise it
dissipates $(1+\sigma^{2})\langle S\rangle$, somewhat more than its average. No
distribution that spreads its dissipation smoothly can do as well. The mechanism
is the convexity already used in Proposition~\ref{prop:jensen}, pushed to its
limit: because $\eta$ saturates at $1$, moving probability towards zero
dissipation buys efficiency at a diminishing rate, so the optimal allocation puts
as much mass as the variance permits exactly at $S=0$.

This is a statement with experimental content, and the objection it invites ---
that a two-point law is an artifact of the moment constraints rather than a
machine anyone could build --- is answered by the majorant construction itself.
No machine sits exactly on the bound; what matters is what happens near it.

\begin{proposition}[Stability of the extremal law]
\label{prop:stability}
Let $x=S/\langle S\rangle$ have unit mean and relative variance $\sigma^{2}$, put
$b=1+\sigma^{2}$, $\beta=1+\alpha b$, and let
\begin{equation}
\psi(x)\;=\;\frac{\alpha^{3}}{\beta^{2}}\,\frac{x\,(x-b)^{2}}{1+\alpha x}\;\ge\;0,
\label{eq:psi}
\end{equation}
which vanishes precisely at $x=0$ and $x=b$. Then the shortfall from the bound is
exactly the mean of $\psi$,
\begin{equation}
\mathcal{U}(\alpha,\sigma^{2})-\langle\eta\rangle\;=\;\langle\psi\rangle ,
\label{eq:shortfall}
\end{equation}
and consequently, for every $\delta\in(0,b/2)$,
\begin{equation}
\Pr\!\left[\,\mathrm{dist}\!\left(x,\{0,b\}\right)>\delta\,\right]
\;\le\;\frac{\mathcal{U}(\alpha,\sigma^{2})-\langle\eta\rangle}{m(\delta)},
\qquad
m(\delta)=\min\!\left\{\psi(\delta),\,\psi(b-\delta),\,\psi(b+\delta)\right\}.
\label{eq:stability}
\end{equation}
\end{proposition}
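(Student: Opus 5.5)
The plan is to reuse the majorant $q$ of Proposition~\ref{prop:upper} and turn its pointwise inequality into an identity. Dividing \eqref{eq:cubic} by $1+\alpha x>0$ gives, for every $x\ge0$,
\[
q(x)-\eta(x)\;=\;\frac{\alpha^{3}}{\beta^{2}}\,\frac{x\,(x-b)^{2}}{1+\alpha x}\;=\;\psi(x).
\]
This shows at once that $\psi\ge0$ and that $\psi$ vanishes exactly at $x=0$ and $x=b$ on the support. Taking expectations and using $\langle q\rangle=\mathcal{U}(\alpha,\sigma^{2})$, which was already computed in the proof of Proposition~\ref{prop:upper} from $\langle x\rangle=1$ and $\langle x^{2}\rangle=b$, gives \eqref{eq:shortfall}. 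Finiteness is not an issue: $0\le\eta\le1$ and $q$ has finite mean because the second moment is finite.

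For \eqref{eq:stability} the plan is Markov's inequality applied to the non-negative variable $\psi(x)$. On the event $A_\delta=\{\mathrm{dist}(x,\{0,b\})>\delta\}$, if $\psi\ge m(\delta)$ holds there, then
\[
m(\delta)\Pr[A_\delta]\;\le\;\langle\psi\,\mathbf{1}_{A_\delta}\rangle\;\le\;\langle\psi\rangle,
\]
and \eqref{eq:shortfall} finishes the argument. Everything therefore reduces to a lower bound on $\psi$ over
\[
A_\delta\cap[0,\infty)\;=\;(\delta,\,b-\delta)\,\cup\,(b+\delta,\,\infty),
\]
which is nonempty and split into two pieces because $\delta<b/2$.

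That lower bound is the main step. I would study the shape of $\psi$ by writing $\psi=(\alpha^{3}/\beta^{2})\,g$ with $g(x)=x(x-b)^{2}/(1+\alpha x)$.
\begin{itemize}
\item On $(b,\infty)$: both $(x-b)^{2}$ and $x/(1+\alpha x)$ are increasing there, so $g$ is increasing, and hence $\psi\ge\psi(b+\delta)$ on $(b+\delta,\infty)$.
\item On $(0,b)$: $g$ is zero at both ends and positive in between. I would show it is unimodal by checking that the numerator of $g'$, namely $(x-b)\bigl[(3x-b)(1+\alpha x)-\alpha x(x-b)\bigr]$, has its bracket equal to $2\alpha x^{2}+3x-b$. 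This bracket has exactly one positive root, so it changes sign only once on $(0,b)$. Then $g$ rises and then falls, and its minimum on $[\delta,b-\delta]$ is attained at an endpoint, $\min\{\psi(\delta),\psi(b-\delta)\}$.
\end{itemize}
Combining the two pieces gives $\psi\ge m(\delta)$ on $A_\delta$, with strict inequality in the interior. The unimodality check is the only place where a computation is actually needed; the rest is bookkeeping.
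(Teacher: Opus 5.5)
Your proof is correct and follows the paper's route: divide the cubic identity \eqref{eq:cubic} by $1+\alpha x$ to get $q-\eta=\psi$, take expectations using $\langle q\rangle=\mathcal{U}$, and apply Markov's inequality on the excluded set. Your check that the bracket in $g'$ reduces to $2\alpha x^{2}+3x-b$, which has a single root in $(0,b)$, proves that $\psi$ is unimodal on $(0,b)$; the paper's justification of the form of $m(\delta)$ uses this fact without proving it.
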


\begin{proof}
Dividing the identity \eqref{eq:cubic} by $1+\alpha x>0$ gives
$q(x)-\eta(x)=\psi(x)$ pointwise on $x\ge0$; taking expectations and using
$\langle q\rangle=\mathcal{U}$, which depends only on the two prescribed moments,
yields \eqref{eq:shortfall}. Since $\psi\ge0$ and $\psi\ge m(\delta)$ on the set
where $\mathrm{dist}(x,\{0,b\})>\delta$, Markov's inequality gives
\eqref{eq:stability}. The stated form of $m(\delta)$ holds because $\psi$
vanishes at $0$ and $b$, is strictly positive in between and beyond, and is
increasing for $x>b$, so its minimum over the excluded set is attained at
$\delta$, $b-\delta$ or $b+\delta$. Expanding \eqref{eq:psi} about $x=b$ shows
$m(\delta)=\Theta(\delta^{2})$ as $\delta\to0$.
\end{proof}

\noindent
Equation~\eqref{eq:shortfall} rewards a slow reading: the deficiency of a machine
from the upper bound \emph{is} the average of a function that measures distance
from the two-point law. Intermittency is therefore neither assumed nor an
artifact --- it is forced by near-optimality, quantitatively, for every
distribution with the given moments. At $\alpha=\sigma^{2}=1$, for instance, a
machine within $10^{-3}$ of the bound has at most $6\%$ of its dissipation
further than $0.5$ from $\{0,\,2\langle S\rangle\}$.

The practical consequence is that the prediction can be tested without requiring
a device to attain the bound exactly. A machine measured to sit \emph{close} to
$\mathcal{U}(\alpha,\sigma^{2})$ must be operating intermittently, with a
substantial fraction of realisations near reversibility, and
\eqref{eq:stability} turns the measured shortfall into a quantitative statement
about the trajectory record --- one that can be checked independently of any
efficiency measurement.

\subsection*{4.4 The variance does not raise the floor}

\begin{proposition}[The lower bound is unchanged]
\label{prop:lowerstill}
For every $\sigma^{2}>0$ the infimum of $\langle\eta\rangle$ over distributions
on $[0,\infty)$ with mean $\mu$ and relative variance $\sigma^{2}$ equals
$(1+\alpha)^{-1}$, and is not attained.
\end{proposition}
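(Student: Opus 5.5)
Proposition~\ref{prop:jensen} already gives $\langle\eta\rangle\ge(1+\alpha)^{-1}$ for every admissible law, with equality only when $x\equiv1$. That degenerate law has variance zero, so for $\sigma^{2}>0$ the bound is strict and the infimum, if it equals $(1+\alpha)^{-1}$, is not attained. The work is therefore to show that the floor can be approached with the variance held fixed at $\sigma^{2}$. I would do this with an explicit family of two-point laws that concentrate almost all their mass at $x=1$ and carry the whole variance in a far-away atom of vanishing weight.

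\textbf{Construction.} For small $\varepsilon>0$ place mass $1-\varepsilon$ at $a_\varepsilon$ and mass $\varepsilon$ at $c_\varepsilon$. The two moment conditions are $(1-\varepsilon)a+\varepsilon c=1$ and $(1-\varepsilon)a^{2}+\varepsilon c^{2}=1+\sigma^{2}$. Writing $a=1-\varepsilon d$ and $c=1+(1-\varepsilon)d$ satisfies the mean identically. The variance then equals $\varepsilon(1-\varepsilon)d^{2}$, so
\[
d=\sigma/\sqrt{\varepsilon(1-\varepsilon)}.
\]
This gives $a_\varepsilon=1-\sigma\sqrt{\varepsilon/(1-\varepsilon)}\to1$, which is positive for $\varepsilon$ small, and $c_\varepsilon\sim\sigma/\sqrt{\varepsilon}\to\infty$. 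Both atoms lie in $[0,\infty)$, so the law is admissible. This is the standard escape-to-infinity mechanism: a vanishing weight at a diverging location can carry a fixed variance.

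\textbf{Limit.} The mean efficiency of this law is
\[
\langle\eta\rangle_\varepsilon=(1-\varepsilon)(1+\alpha a_\varepsilon)^{-1}+\varepsilon(1+\alpha c_\varepsilon)^{-1}.
\]
The first term tends to $(1+\alpha)^{-1}$ because $a_\varepsilon\to1$. The second is at most $\varepsilon$, since $\eta\le1$, so it vanishes. Hence $\langle\eta\rangle_\varepsilon\to(1+\alpha)^{-1}$. Together with the Jensen bound, this identifies the infimum as $(1+\alpha)^{-1}$.

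\textbf{Main obstacle.} There is no real obstacle. The point needing care is conceptual: the far atom must not spoil the limit. It cannot, because $\eta$ is bounded and decreasing, so the escaping mass contributes at most its weight $\varepsilon$. The failure of the variance to raise the floor is exactly the boundedness of $\eta$ at infinity, in contrast to the ceiling, where the pinned atom at $x=0$ does bite. For later use (the third-moment section), I would note that along this family $\langle x^{3}\rangle\sim\varepsilon c_\varepsilon^{3}\to\infty$. This indicates that bounding the third moment is what will lift the floor.
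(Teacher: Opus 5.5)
Your proof is correct and is essentially the paper's: strict Jensen for any non-degenerate law, plus a two-point family with one atom tending to $x=1$ and an atom of vanishing weight escaping to infinity, which contributes nothing in the limit because $\eta$ is bounded. The only difference is cosmetic: you parametrise by the weight $\varepsilon$, whereas the paper parametrises by the near atom $A$ subject to $(1-A)(B-1)=\sigma^{2}$, and your family satisfies exactly that relation.
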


\begin{proof}
The bound $\langle\eta\rangle>(1+\alpha)^{-1}$ is Proposition~\ref{prop:jensen},
strict because no admissible law is degenerate. For the converse take the
two-point family supported on $\{A,B\}$ with $0\le A<1<B$ and
$(1-A)(B-1)=\sigma^{2}$, which has the prescribed moments for every
$A\in[0,1)$. As $A\to1^{-}$ the mass at $A$ tends to $1$ while $B\to\infty$;
since $\eta$ is bounded, the far atom contributes nothing in the limit and
$\langle\eta\rangle\to\eta(1)=(1+\alpha)^{-1}$.
\end{proof}

\noindent
The asymmetry is worth noting. Knowing the variance restricts how much
probability can sit at zero dissipation, and so caps the efficiency; it does not
prevent a law from concentrating near the mean with a vanishing far tail, and so
does not raise the floor. A third moment would be required for that.

\subsection*{4.5 The admissible interval}

Propositions~\ref{prop:jensen} and~\ref{prop:upper} together give
\begin{equation}
\frac{1}{1+\alpha}\;<\;\langle\eta\rangle\;\le\;
\frac{\sigma^{2}}{1+\sigma^{2}}
+\frac{1}{(1+\sigma^{2})\left[1+\alpha(1+\sigma^{2})\right]},
\label{eq:interval}
\end{equation}
both ends sharp, from two measured numbers and no distributional assumption
whatever. The width behaves as
\begin{equation}
\mathcal{U}-\frac{1}{1+\alpha}\;=\;\sigma^{2}\,\frac{\alpha^{2}}{(1+\alpha)^{2}}
+O(\sigma^{4}),
\label{eq:width}
\end{equation}
verified numerically: the exact width is $\left[1+\alpha+\alpha\sigma^{2}\right]^{-1}
\left(1+\alpha\right)$ times \eqref{eq:width}, so at $\alpha=1$ the leading form
overstates it by a factor $1.0005$ at $\sigma^{2}=10^{-3}$ and $1.005$ at
$\sigma^{2}=10^{-2}$. Two limits confirm the picture. As $\sigma^{2}\to0$ the interval
closes onto the deterministic value: a machine with reproducible dissipation has
its mean efficiency fixed by the mean alone. As $\sigma^{2}\to\infty$ the upper
bound tends to $1$ and \eqref{eq:interval} degenerates to
Corollary~\ref{cor:range}, which is thus recovered as the limit in which the
second measurement carries no information.

It is also instructive to locate the maximum-entropy answer of Section~7 within
the interval. The exponential prior has $\sigma^{2}=1$, so its interval is the
$\sigma^{2}=1$ one, and the fraction of that interval which the maximum-entropy
value occupies is $0.98$ at $\alpha=0.01$, $0.58$ at $\alpha=1$ and $0.06$ at
$\alpha=100$: an interior point at every operating point, never an edge --- as it
should be, being a noncommittal choice rather than an extremal one. The location
is not universal. Within the Gamma family \eqref{eq:gamma}, whose members have
$\sigma^{2}=1/k$, the fraction tends to $(1+\alpha)^{-1}$ as the width closes,
the computed values at $\alpha=1$ being $0.5025$ at $k=50$ and $0.5002$ at
$k=500$; the exponential prior is the $k=1$ member and sits higher.

\begin{figure}[H]
\centering
\includegraphics{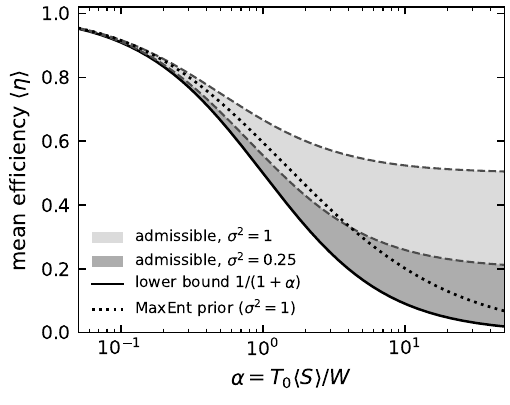}
\caption{The admissible interval \eqref{eq:interval} as a function of $\alpha$,
for two values of the relative variance $\sigma^{2}$ of the dissipation. The
lower edge is the prior-free bound of Proposition~\ref{prop:jensen} and is common
to both; the upper edges (dashed) are Proposition~\ref{prop:upper}. Shrinking
$\sigma^{2}$ closes the band onto the deterministic value. The dotted curve is
the maximum-entropy result \eqref{eq:main}, which has $\sigma^{2}=1$ and lies
inside the corresponding band at every $\alpha$}
\label{fig:3}
\end{figure}

\subsection*{4.6 What two measurements determine}

Section~3 established that one moment of the dissipation determines nothing about
the mean efficiency beyond a floor. Equation~\eqref{eq:interval} is the
corresponding positive statement: two moments determine it to within a computable
interval, sharp at both ends, whose width vanishes with the variance. The
practical procedure requires no prior and no dynamical model. Measure $W$,
$\langle S\rangle$ and $\mathrm{Var}(S)$; form $\alpha$ and $\sigma^{2}$;
\eqref{eq:interval} then brackets the mean efficiency. A measurement falling
outside the interval falsifies the energy accounting \eqref{eq:budget} or the
non-negativity of the coarse-grained dissipation, and a measurement near the
upper edge additionally predicts intermittent operation through
\eqref{eq:extremal}. One question is left open by this: whether $\alpha$ and
$\sigma^{2}$ are free, or whether the dynamics constrain the pair a machine can
present. Section~6.5 shows that they are constrained once the conditioning on $W$
is removed --- the interval's upper edge is capped and its width floored by the
precision of the delivered work alone.

\section*{5. A third moment tightens the floor}

\subsection*{5.1 Why the variance leaves the floor alone}

Proposition~\ref{prop:lowerstill} showed that the second moment caps the mean
efficiency without raising its floor, and its proof shows why. The family
saturating the infimum places an atom at $A\to1^{-}$ together with a vanishing
atom at $B\to\infty$, holding $(1-A)(B-1)=\sigma^{2}$; along it
\begin{equation}
\langle x^{3}\rangle\;\sim\;\sigma^{2}B\;\longrightarrow\;\infty .
\label{eq:m3div}
\end{equation}
The distributions that force the floor down are therefore precisely those with
unbounded third moment, and constraining $\langle x^{3}\rangle$ must exclude
them. This section carries that out, and the result is one-sided: the floor moves
with $m_{3}$ while the ceiling, which uses only two moments, does not.

\subsection*{5.2 The three-moment lower bound}

\begin{proposition}[Lower bound from three moments]
\label{prop:third}
Let $x=S/\langle S\rangle\ge0$ have $\langle x\rangle=1$,
$\mathrm{Var}(x)=\sigma^{2}>0$ and $\langle x^{3}\rangle=m_{3}$, the last
necessarily satisfying $m_{3}\ge(1+\sigma^{2})^{2}$ by \eqref{eq:m3min} below.
Put
\begin{equation}
D\;=\;\frac{m_{3}-1-3\sigma^{2}}{\sigma^{2}} .
\label{eq:Ddef}
\end{equation}
Then
\begin{equation}
\langle\eta\rangle\;\ge\;\mathcal{L}(\alpha,\sigma^{2},m_{3})
\;\equiv\;\frac{1+\alpha+\alpha D}
{(1+\alpha)^{2}+\alpha(1+\alpha)D-\alpha^{2}\sigma^{2}} ,
\label{eq:lower3}
\end{equation}
and the bound is attained.
\end{proposition}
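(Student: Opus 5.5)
The plan is to certify the bound with a cubic \emph{minorant}, mirroring the quadratic majorant of Proposition~\ref{prop:upper}. The extremal law should be a two-point law with both atoms interior, with the cubic touching $\eta$ to first order at each. The guiding fact is the sign of the fourth derivative, $\eta^{(4)}(x)=24\alpha^{4}(1+\alpha x)^{-5}>0$ on $x\ge0$. Let $p$ be the cubic Hermite interpolant of $\eta$ matching value and slope at two points $0\le u<v$. The Hermite remainder formula then gives
\[
\eta(x)-p(x)=\frac{\eta^{(4)}(\xi_x)}{24}\,(x-u)^{2}(x-v)^{2}\;\ge\;0
\]
for every $x\ge0$. (I could instead write down the explicit rational identity $(1+\alpha x)\bigl(p(x)-\eta(x)\bigr)=-c\,(x-u)^{2}(x-v)^{2}$ with $c>0$, as the paper did in \eqref{eq:cubic}.) Hence $\langle\eta\rangle\ge\langle p\rangle$, and $\langle p\rangle$ depends only on $1,\sigma^{2},m_{3}$. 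Equality holds exactly for laws supported on $\{u,v\}$.

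The next step is to choose $u,v$ so that a law on $\{u,v\}$ carries the prescribed moments. I write $s=u+v$ and $\pi=uv$. On the support $x^{2}=sx-\pi$, so the moments obey $m_{2}=s-\pi$ and $m_{3}=s\,m_{2}-\pi$, with $m_{2}=1+\sigma^{2}$. Subtracting gives $s=(m_{3}-1-\sigma^{2})/\sigma^{2}=D+2$, and then $\pi=D+1-\sigma^{2}$. The existence conditions then need checking.
\begin{itemize}
\item The discriminant is $s^{2}-4\pi=D^{2}+4\sigma^{2}>0$, so the atoms are real and distinct.
\item $\pi\ge0$ is equivalent to $m_{3}\ge(1+\sigma^{2})^{2}$, which is the admissibility condition \eqref{eq:m3min}. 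Together with $s>0$ this makes both atoms non-negative.
\item The weights $(v-1)/(v-u)$ and $(1-u)/(v-u)$ are positive provided $u<1<v$. This follows from $(1-u)(v-1)=s-1-\pi=\sigma^{2}>0$ together with $s>0$.
\end{itemize}
This also explains the necessity of $m_{3}\ge(1+\sigma^{2})^{2}$ claimed in the statement. Applying Cauchy--Schwarz in the form $\langle x^{2}\rangle^{2}\le\langle x\rangle\langle x^{3}\rangle$ for $x\ge0$ gives it directly, and I would record that as \eqref{eq:m3min}.

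It remains to evaluate $\langle\eta\rangle$ on this law. Rather than solving for $u,v$ explicitly, I would reduce $1/(1+\alpha x)$ modulo $x^{2}-sx+\pi$. That is, find $A,B$ with $(A+Bx)(1+\alpha x)\equiv1$, so that $\langle\eta\rangle=A+B\langle x\rangle=A+B$. Using $x^{2}\equiv sx-\pi$, matching coefficients gives $A-\alpha\pi B=1$ and $\alpha A+B(1+\alpha s)=0$. Solving,
\[
A+B=\frac{1+\alpha s-\alpha}{1+\alpha s+\alpha^{2}\pi}.
\]
Substituting $s=D+2$ and $\pi=D+1-\sigma^{2}$, the numerator becomes $1+\alpha+\alpha D$. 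The denominator becomes $1+2\alpha+\alpha D+\alpha^{2}(D+1)-\alpha^{2}\sigma^{2}=(1+\alpha)^{2}+\alpha(1+\alpha)D-\alpha^{2}\sigma^{2}$, which is \eqref{eq:lower3}. Because $p$ agrees with $\eta$ at the two atoms, $\langle p\rangle$ equals this same value for every law with the given moments, so the bound and its attainment follow together.

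The main obstacle is the positivity of the minorant globally on $[0,\infty)$, including at the endpoint and for large $x$. The Hermite-remainder argument handles this cleanly because the remainder carries the even factor $(x-u)^{2}(x-v)^{2}$ and $\eta^{(4)}>0$ everywhere on the support. I would nevertheless verify it by the explicit factorisation, to keep the proof in the self-contained style of Proposition~\ref{prop:upper}.

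The boundary case $m_{3}=(1+\sigma^{2})^{2}$ gives $\pi=0$, so $u=0$. There the Hermite interpolant still touches at $0$ with matching slope, which is stronger than needed, so the argument goes through. In this case the law coincides with the extremal law \eqref{eq:extremal}, and $\mathcal{L}=\mathcal{U}$, which is the closing of the bracket announced in the abstract.
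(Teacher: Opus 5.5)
Your proof is correct and follows the paper's route. The cubic Hermite interpolant with double contact at the two atoms of the unique two-point law carrying the three moments is a minorant of $\eta$ on $[0,\infty)$, so $\langle\eta\rangle\ge\langle p\rangle$, with equality on that law; your remainder coefficient $\eta^{(4)}(\xi_x)/24$ is just the divided difference $\eta[u,u,v,v,x]>0$ behind the paper's identity with $c_3<0$. Your parametrization by $s=u+v$, $\pi=uv$ and the reduction modulo $x^{2}-sx+\pi$ usefully make explicit the final simplification and the admissibility of the two-point law, both of which the paper leaves implicit; one small slip is that $u<1<v$ follows from $(1-u)(v-1)=\sigma^{2}>0$ together with the ordering $u<v$, not from $s>0$.
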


\begin{proof}
Write a two-point law on $\{A,B\}$ as $A=1-u$, $B=1+v$ with $u,v>0$; the mean and
variance constraints are equivalent to $uv=\sigma^{2}$ with weights
$p=v/(u+v)$ at $A$ and $u/(u+v)$ at $B$. Expanding the third moment and using
$uv=\sigma^{2}$,
\[
\langle x^{3}\rangle=\frac{v(1-u)^{3}+u(1+v)^{3}}{u+v}
=1+3\sigma^{2}+\sigma^{2}(v-u),
\]
so the three constraints are equivalent to $uv=\sigma^{2}$ together with
$v-u=D$, giving
\begin{equation}
u=\tfrac12\!\left(\sqrt{D^{2}+4\sigma^{2}}-D\right),
\qquad
v=\tfrac12\!\left(\sqrt{D^{2}+4\sigma^{2}}+D\right).
\label{eq:uv}
\end{equation}

Let $q$ be the cubic Hermite interpolant of $\eta$ at the nodes $\{A,A,B,B\}$,
so that $q$ and $\eta$ agree in value and slope at both points. Then
$q-\eta$ has double zeros at $A$ and $B$, and since $1+\alpha x>0$ on $x\ge0$,
\begin{equation}
q(x)\left(1+\alpha x\right)-1\;=\;c_{3}\,\alpha\,(x-A)^{2}(x-B)^{2},
\label{eq:quartic}
\end{equation}
$c_{3}$ being the leading coefficient of $q$. For $\eta(x)=(1+\alpha x)^{-1}$ the
relevant divided difference is
\begin{equation}
c_{3}\;=\;\eta[A,A,B,B]\;=\;\frac{-\alpha^{3}}{(1+\alpha A)^{2}(1+\alpha B)^{2}}\;<\;0,
\label{eq:c3}
\end{equation}
so the right-hand side of \eqref{eq:quartic} is non-positive for every $x\ge0$
and $q\le\eta$ there. Hence
$\langle\eta\rangle\ge\langle q\rangle
=c_{0}+c_{1}+c_{2}(1+\sigma^{2})+c_{3}m_{3}$, which depends on the distribution
only through the three prescribed moments. Evaluating $\langle q\rangle$ at the
two-point law and simplifying with \eqref{eq:uv} gives \eqref{eq:lower3};
equality holds there, since \eqref{eq:quartic} confines the support of any
minimiser to $\{A,B\}$.
\end{proof}

\noindent
Proposition~\ref{prop:third} is the next order of the construction of
Remark~\ref{rem:moment}: the majorant of Proposition~\ref{prop:upper} becomes a
cubic minorant with double contact at both extremal points, and the quartic
\eqref{eq:quartic} replaces the cubic \eqref{eq:cubic}. Again the classical
theory~\cite{KarlinStudden} guarantees the two-point structure in advance; the
work here is evaluating it for this $\eta$.

Expression \eqref{eq:lower3} contains no radical: the square roots in
\eqref{eq:uv} cancel in the combination that appears. Checked against linear
programming over a dense grid with the three moment constraints imposed, it
agrees to the grid resolution ($\sim10^{-7}$) at every parameter set tested. At
$\alpha=\sigma^{2}=1$ it reduces to the rational family
$\mathcal{L}=(m_{3}-2)/(2m_{3}-5)$.

\subsection*{5.3 Both ends of the bracket}

Two limits fix the meaning of \eqref{eq:lower3}.

As $m_{3}\to\infty$ we have $D\to\infty$ and
$\mathcal{L}\to(1+\alpha)^{-1}$: the Jensen floor is recovered, as it must be,
since an unconstrained third moment readmits the escaping family
\eqref{eq:m3div}.

At the other end there is a floor on $m_{3}$ itself, and it holds for every law
on $[0,\infty)$ rather than only for the two-point family. By Cauchy--Schwarz,
$\langle x^{2}\rangle=\langle x^{1/2}\cdot x^{3/2}\rangle
\le\langle x\rangle^{1/2}\langle x^{3}\rangle^{1/2}$, so with $\langle x\rangle=1$
the third moment cannot fall below
\begin{equation}
m_{3}^{\min}=\left\langle x^{2}\right\rangle^{2}=\left(1+\sigma^{2}\right)^{2},
\label{eq:m3min}
\end{equation}
with equality exactly when $x^{1/2}$ and $x^{3/2}$ are proportional in
$L^{2}$, that is when $x$ is supported on at most one non-zero point. Within the
extremal family \eqref{eq:uv} the same value is reached at $u=1$, where $A=0$ ---
and the law there is exactly the extremal \eqref{eq:extremal} of
Proposition~\ref{prop:upper}. Consequently
\begin{equation}
\mathcal{L}\!\left(\alpha,\sigma^{2},m_{3}^{\min}\right)
=\mathcal{U}\!\left(\alpha,\sigma^{2}\right),
\end{equation}
which we have verified to machine precision: at $\alpha=1$, $\sigma^{2}=1$ both
equal $2/3$; at $\alpha=3$, $\sigma^{2}=1/2$ both equal $5/11$. The bracket
therefore \emph{closes to a point} at the smallest third moment compatible with
the first two, and widens monotonically to \eqref{eq:jensen} as $m_{3}$ grows
(Fig.~\ref{fig:4}). Three moments determine
\begin{equation}
\mathcal{L}\!\left(\alpha,\sigma^{2},m_{3}\right)\;\le\;\langle\eta\rangle
\;\le\;\mathcal{U}\!\left(\alpha,\sigma^{2}\right),
\label{eq:bracket3}
\end{equation}
with the lower end sharp and attained by the explicit two-point law
\eqref{eq:uv}.

The upper end requires a word of care, because it is inherited from
Proposition~\ref{prop:upper} and uses only two moments. It remains valid, but it
is no longer attained once $m_{3}$ is prescribed above its minimum. By
Proposition~\ref{prop:stability} the shortfall $\mathcal{U}-\langle\eta\rangle$
equals $\langle\psi\rangle$ with $\psi$ vanishing only at $x\in\{0,b\}$, so
equality forces the law onto that support; with the first two moments fixed such
a law is unique --- it is \eqref{eq:extremal} --- and its third moment is exactly
$m_{3}^{\min}$. Hence $\langle\eta\rangle<\mathcal{U}$ strictly whenever
$m_{3}>m_{3}^{\min}$, and linear programming over the three-moment problem puts
the shortfall in the range $10^{-3}$ to $10^{-2}$ across the parameter sets we
tested. Only the floor in \eqref{eq:bracket3} therefore responds to the third
moment; sharpening the ceiling would require the upper solution of the
three-moment problem, which we do not compute here. What is sharp at both ends is
the two-moment interval \eqref{eq:interval}, and the collapse
$\mathcal{L}(\alpha,\sigma^{2},m_{3}^{\min})=\mathcal{U}(\alpha,\sigma^{2})$ is
the statement that at $m_{3}^{\min}$ the two coincide.

\begin{figure}[H]
\centering
\includegraphics{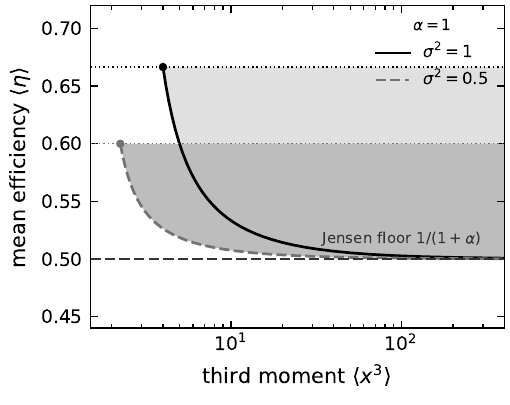}
\caption{The bracket \eqref{eq:bracket3} at $\alpha=1$ as a function of the third
moment, for two values of $\sigma^{2}$. Upper edges (fine dotted) are
$\mathcal{U}(\alpha,\sigma^{2})$, independent of $m_{3}$; lower edges are
$\mathcal{L}$ from \eqref{eq:lower3}. Circles mark $m_{3}^{\min}=(1+\sigma^{2})^{2}$,
where the bracket closes to a point. The long-dashed line is the Jensen floor,
approached as $m_{3}\to\infty$}
\label{fig:4}
\end{figure}

\subsection*{5.4 What the hierarchy shows, and what it costs}

The three results now form a sequence with a clear structure. One moment gives a
floor and nothing else (Proposition~\ref{prop:jensen} and
Corollary~\ref{cor:range}). A second moment adds a sharp ceiling but cannot move
the floor (Propositions~\ref{prop:upper} and~\ref{prop:lowerstill}). A third
moment moves the floor, and moves it all the way to the ceiling at the smallest
value it may take. Each is attained by a two-point law, and each follows from the
same construction: a polynomial that touches $\eta$ at the extremal support and
lies on one side of it everywhere on $[0,\infty)$, so that its expectation is
computable from the prescribed moments alone.

A caveat should accompany \eqref{eq:lower3} rather than follow it. The relative
variance $\sigma^{2}$ is accessible in any experiment that already resolves the
dissipation realisation by realisation; the third moment is not, requiring
substantially more statistics and being correspondingly noisier. The practical
content of this paper therefore remains \eqref{eq:interval}, which needs two
measured numbers. Proposition~\ref{prop:third} is included because it settles the
structural question left open by Proposition~\ref{prop:lowerstill} --- whether
the floor is movable at all, and by what --- and because the collapse at
$m_{3}^{\min}$ shows that the moment sequence closes rather than merely
improving indefinitely.

\section*{6. When the delivered work also fluctuates}

Assumption~\ref{as:task} conditions on $W$. This section removes that
conditioning and asks what survives. Two things do: the existence of the mean
efficiency, which becomes easier rather than harder, and the bounds themselves,
which hold verbatim once the correct dimensionless group is identified. What does
not survive is the interpretation of that group as a ratio of separately measured
averages.

\subsection*{6.1 Existence is not the obstacle}

\begin{proposition}[The joint mean always exists]
\label{prop:jointexist}
Let $W\ge0$ and $S\ge0$ be jointly distributed, not both zero almost surely. Then
$\eta=W/(W+T_0S)\in[0,1]$ pointwise, and $\langle\eta\rangle$ exists.
\end{proposition}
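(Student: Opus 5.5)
The argument is pointwise, and the plan is to show $0\le\eta\le1$ on every realisation and then conclude by boundedness. The only subtlety is the event $\{W=0,\,S=0\}$, where the defining quotient reads $0/0$. So the first step is to fix a convention there. On that event no work is delivered and nothing is dissipated, and any value in $[0,1]$ may be assigned; the natural choice is $\eta=0$, matching the limit along $W\to0$ at fixed $S>0$. By hypothesis this event is not almost sure, but it may still have positive probability. The statement should therefore be read with $\eta$ defined by that convention, or conditionally on its complement. Either way the value assigned lies in $[0,1]$ and affects nothing below.

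Off that event the denominator satisfies $W+T_0S>0$, because both terms are non-negative and at least one is strictly positive (recall $T_0>0$). The numerator satisfies $0\le W\le W+T_0S$, since $T_0S\ge0$. Dividing gives $0\le\eta\le1$ pointwise. This is precisely the mechanism contrasted in the Remark after Proposition~\ref{prop:exist}: the denominator vanishes only together with its numerator, so no pole can be approached from inside the physical region.

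Measurability follows because $\eta$ is a Borel function of $(W,S)$: it is a ratio of measurable functions on the measurable set where the denominator is positive, and a constant on the complement. A measurable function with values in $[0,1]$ is integrable with respect to any probability measure. Hence $\langle\eta\rangle$ exists and lies in $[0,1]$, and in fact every moment $\langle\eta^n\rangle$ exists by the same argument. No assumption on the joint law beyond non-negativity is needed, in particular no independence, no finite moments of $W$ or $S$, and no density.

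The only step needing care is the degenerate $0/0$ event, which is a matter of definition rather than an analytic obstacle. Everything else is the elementary inequality $W\le W+T_0S$. It is worth stating explicitly in the proof that this is where Assumption~\ref{as:pos} enters: if $S$ could be negative, then $W+T_0S$ could vanish while $W>0$, and the divergence of Proposition~\ref{prop:exist} would return.
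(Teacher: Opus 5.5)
Your proposal is correct and follows the paper's own argument: $T_0S\ge0$ gives $0\le W\le W+T_0S$, so $\eta\in[0,1]$ wherever the denominator is positive, and a bounded measurable function has a finite mean. Your convention on the event $\{W=0,\,S=0\}$ addresses a point the paper's proof passes over with the phrase ``wherever the denominator is positive''; the stated hypothesis does not force that event to have probability zero, so making the convention explicit is a worthwhile addition.
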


\begin{proof}
$T_0S\ge0$ gives $0\le W\le W+T_0S$, so the ratio lies in $[0,1]$ wherever the
denominator is positive; a bounded measurable function has a finite mean.
\end{proof}

\noindent
This is worth stating because the natural expectation is the opposite. The
divergence of Proposition~\ref{prop:exist} arises from admitting $S<0$, which
allows the denominator to vanish while the numerator does not; it has nothing to
do with a fluctuating numerator. When $W$ fluctuates but both variables remain
non-negative, numerator and denominator vanish together and the ratio stays
bounded. Relaxing Assumption~\ref{as:task} therefore costs nothing on the
existence question.

\subsection*{6.2 Two means determine nothing at all}

\begin{proposition}[No content in the marginal means]
\label{prop:jointfree}
Fix $\langle W\rangle=\bar W>0$ and $\langle S\rangle=\bar S>0$. Then
$\langle\eta\rangle$ may take any value in the open interval $(0,1)$, and both
endpoints are approached but not attained.
\end{proposition}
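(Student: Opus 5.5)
The plan is to prove the statement in two parts: first that the endpoints are not attained, then that every interior value is reached, using explicit joint laws.

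\emph{Endpoints are not attained.} By Proposition~\ref{prop:jointexist}, $\eta\in[0,1]$ pointwise, so only the exclusion of $0$ and $1$ needs checking. If $\langle\eta\rangle=1$, then $\eta=1$ almost surely. Hence $T_0S=0$ wherever $W+T_0S>0$, so $S=0$ almost surely, which contradicts $\bar S>0$. If $\langle\eta\rangle=0$, then $W=0$ almost surely wherever $\eta$ is defined. Where $W+T_0S=0$ we have $W=0$ anyway, so $W=0$ almost surely, contradicting $\bar W>0$.

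\emph{Approaching $1$.} Take a two-point joint law. With probability $1-\varepsilon$ we set $(W,S)=(w_1,0)$, and with probability $\varepsilon$ we set $(W,S)=(w_2,\bar S/\varepsilon)$. Choose $w_1=w_2=\bar W$, so that both marginal means hold exactly. Then
\[
\langle\eta\rangle=(1-\varepsilon)+\varepsilon\,\frac{\bar W}{\bar W+T_0\bar S/\varepsilon}\;\ge\;1-\varepsilon ,
\]
which tends to $1$ as $\varepsilon\to0$. Here the dissipation is concentrated on rare realisations, exactly as in Corollary~\ref{cor:range}.

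\emph{Approaching $0$.} Do the mirror image and concentrate the work instead. With probability $1-\varepsilon$ set $(W,S)=(0,\bar S)$, and with probability $\varepsilon$ set $(W,S)=(\bar W/\varepsilon,\bar S)$. The mean of $W$ is then $\bar W$ and $S=\bar S$ throughout. The first branch contributes $\eta=0$. The second contributes at most $1$, with weight $\varepsilon$, so $\langle\eta\rangle\le\varepsilon\to0$. If one prefers $W>0$ strictly, replace $0$ by a small $\delta$ and adjust the second atom; this changes $\langle\eta\rangle$ by $O(\delta/\bar S)$.

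\emph{Intermediate values.} Let $P_0$ be a law near the lower end and $P_1$ a law near the upper end, both with the prescribed marginal means. The mixture $P_t=(1-t)P_0+tP_1$ has the same marginal means for every $t\in[0,1]$, because means are linear in the law. Its efficiency is $\langle\eta\rangle_t=(1-t)\langle\eta\rangle_0+t\langle\eta\rangle_1$, which is linear and continuous in $t$. It therefore covers $[\langle\eta\rangle_0,\langle\eta\rangle_1]$. Taking $\varepsilon\to0$ in both constructions fills $(0,1)$.

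\emph{Main point requiring care.} The only real subtlety is the $0/0$ convention at $W=S=0$. Proposition~\ref{prop:jointexist} leaves $\eta$ unspecified on that set, so the constructions should avoid it. They do: at $\varepsilon<1$ every atom has $W+T_0S>0$, and nonattainment was argued above without using a value on that set.
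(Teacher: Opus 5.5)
Your proof is correct, and it reaches the result by a somewhat different route from the paper's.

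\textbf{The paper's construction.} The paper puts the work and the dissipation on \emph{disjoint} atoms. It places mass $1-\varepsilon$ at $(\bar W/(1-\varepsilon),0)$, where $\eta=1$ exactly, and mass $\varepsilon$ at $(0,\bar S/\varepsilon)$, where $\eta=0$ exactly. This gives $\langle\eta\rangle=1-\varepsilon$ exactly, and $\langle\eta\rangle=\varepsilon$ when the roles are exchanged. Because these values are exact, letting $\varepsilon$ run over $(0,1)$ in a single two-point family already realises every point of $(0,1)$. No interpolation step is needed.

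\textbf{Your construction.} You keep $W$ (for the upper end) or $S$ (for the lower end) common to both atoms. The minority atom then has $\eta$ strictly inside $(0,1)$, so you only obtain the inequalities $\langle\eta\rangle\ge1-\varepsilon$ and $\langle\eta\rangle<\varepsilon$. Your mixture step is therefore genuinely needed. It is sound: the marginal-mean constraints and $\langle\eta\rangle$ are both linear in the law, so the attainable values form an interval.

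\textbf{What each buys.} Your device is more robust: it works for any pair of near-extremal laws and shows that the attainable set is always an interval. The paper's choice of atoms is more economical.

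\textbf{Non-attainment.} You argue explicitly that $\langle\eta\rangle=1$ forces $S=0$ almost surely and $\langle\eta\rangle=0$ forces $W=0$ almost surely. You also handle the $0/0$ set correctly, since both variables vanish there. The paper asserts non-attainment but its proof does not argue it, so on this point your version is the more complete one.
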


\begin{proof}
For the supremum take mass $1-\varepsilon$ at $\left(\bar W/(1-\varepsilon),\,0\right)$,
where $\eta=1$, and mass $\varepsilon$ at $\left(0,\,\bar S/\varepsilon\right)$,
where $\eta=0$; the marginal means are $\bar W$ and $\bar S$ and
$\langle\eta\rangle=1-\varepsilon$. For the infimum exchange the roles, placing
mass $1-\varepsilon$ at $\left(0,\,\bar S/(1-\varepsilon)\right)$ and mass
$\varepsilon$ at $\left(\bar W/\varepsilon,\,0\right)$, giving
$\langle\eta\rangle=\varepsilon$.
\end{proof}

\noindent
Proposition~\ref{prop:jointfree} is a sharper negative result than
Corollary~\ref{cor:range}: with $W$ conditioned on, one moment of $S$ at least
supplies a floor, whereas the two marginal means of an unconditioned $(W,S)$
supply nothing whatever. The reason is that $\eta$ is not a function of $W$ and
$S$ separately in any way those means control --- a point the next subsection
makes precise and, in doing so, repairs.

\subsection*{6.3 The reduction}

Write
\begin{equation}
R\;\equiv\;\frac{T_0S}{W},
\qquad\text{so that}\qquad
\eta\;=\;\frac{W}{W+T_0S}\;=\;\frac{1}{1+R}.
\label{eq:Rdef}
\end{equation}
The efficiency depends on the pair $(W,S)$ only through the single non-negative
random variable $R$, and \eqref{eq:Rdef} is algebraically identical to
\eqref{eq:etax} with $\alpha x$ replaced by $R$. Every result of Sections~4--6
therefore transfers without modification, on replacing the moments of $x=S/\mu$
by those of $R$. Writing $\rho=\langle R\rangle$ and
$\varsigma^{2}=\mathrm{Var}(R)/\rho^{2}$:

\begin{proposition}[Joint bounds]
\label{prop:jointbounds}
For any joint law of $(W,S)$ on $[0,\infty)^{2}$ with $\rho<\infty$,
\begin{equation}
\frac{1}{1+\rho}\;<\;\langle\eta\rangle\;\le\;
\frac{\varsigma^{2}}{1+\varsigma^{2}}
+\frac{1}{(1+\varsigma^{2})\left[1+\rho(1+\varsigma^{2})\right]},
\label{eq:jointinterval}
\end{equation}
with both ends sharp, and the three-moment refinement of
Proposition~\ref{prop:third} holds with $(\alpha,\sigma^{2},m_{3})$ replaced by
the corresponding moments of $R$.
\end{proposition}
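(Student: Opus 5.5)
The plan is a reduction: realise $\eta$ as a function of the single variable $R=T_0S/W$ and then reuse the one-variable results. Proposition~\ref{prop:jointexist} gives $\eta\in[0,1]$ pointwise. On the event $W>0$ we have $\eta=1/(1+R)$ with $R\in[0,\infty)$. The event $W=0$, $S>0$ corresponds to $R=+\infty$ and $\eta=0$. Since $\rho<\infty$, $R$ is finite almost surely, so this event, and the null event $W=S=0$, carry zero probability. Hence $\eta=f(R)$ almost surely, with $f(r)=(1+r)^{-1}$ on $[0,\infty)$, and the law of $\eta$ is determined by the law of $R$ alone.

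Next, normalise. If $\rho>0$, set $y=R/\rho$. Then $y\ge0$, $\langle y\rangle=1$, $\mathrm{Var}(y)=\varsigma^{2}$, and $\eta=(1+\rho y)^{-1}$. This is exactly \eqref{eq:etax}, with $(\alpha,x,\sigma^{2})$ replaced by $(\rho,y,\varsigma^{2})$.

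The earlier proofs never used what $x$ represents physically. They used only:
\begin{itemize}
\item non-negativity of $x$;
\item the moment constraints on $x$;
\item the pointwise inequalities \eqref{eq:cubic} and \eqref{eq:quartic} between a polynomial and $(1+\alpha x)^{-1}$ on $[0,\infty)$.
\end{itemize}
So the following apply verbatim to $y$:
\begin{itemize}
\item Proposition~\ref{prop:jensen} gives the lower bound; it is strict because $\varsigma^{2}>0$ makes $y$ non-degenerate.
\item Proposition~\ref{prop:upper} gives the ceiling.
\item Proposition~\ref{prop:third} gives the three-moment refinement, with $m_3=\langle y^3\rangle$.
\end{itemize}
The degenerate case $\rho=0$ (so $R=0$ and $\eta=1$ almost surely) is trivial and will be set aside; the interval is understood for $\rho>0$.

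Sharpness is where the joint setting needs a word, because the extremal laws must now be realised as joint laws of $(W,S)$. Any law $\nu$ of $R$ on $[0,\infty)$ with the prescribed moments can be realised by taking $W\equiv w_0>0$ deterministic and $S=w_0R/T_0$. Thus the one-variable extremals are admissible joint laws:
\begin{itemize}
\item the two-point law \eqref{eq:extremal} attains the ceiling;
\item the escaping family of Proposition~\ref{prop:lowerstill} approaches the floor;
\item the law \eqref{eq:uv} attains $\mathcal{L}$.
\end{itemize}
The feasible set of $R$-laws under the joint formulation therefore coincides with the one-variable feasible set, and the extremal values carry over unchanged.

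I expect the only real obstacle to be the measure-theoretic bookkeeping. First, $R$ must be well defined and finite almost surely, which means checking that $\{W=0\}$ is null given $\rho<\infty$. Second, the strictness of the lower bound must hold under the standing hypothesis $\varsigma^{2}>0$. The algebra itself is inherited with no new computation.
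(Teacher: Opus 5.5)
Your proposal is correct and takes the paper's route. The paper's proof is the one-line reduction via $\eta=1/(1+R)$ to Propositions~\ref{prop:jensen}, \ref{prop:upper} and~\ref{prop:third}, on the grounds that those results use only non-negativity and the prescribed moments of their argument. What you add is sound bookkeeping the paper leaves tacit: the normalisation $y=R/\rho$ (so that $m_3=\langle R^{3}\rangle/\rho^{3}$); the nullity of $\{W=0,\,S>0\}$, which $\rho<\infty$ forces; the implicit hypotheses $\rho>0$ and $\varsigma^{2}>0$, without which the strict floor becomes an equality; and the realisation of each extremal $R$-law with deterministic $W$, so that sharpness holds within the joint class. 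One caveat: that $\{W=S=0\}$ is null is a presupposition for $R$ and $\eta$ to be defined, not a consequence of the stated hypotheses.
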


\begin{proof}
Immediate from \eqref{eq:Rdef} and
Propositions~\ref{prop:jensen}, \ref{prop:upper} and~\ref{prop:third}, all of
which require only that the argument be a non-negative random variable of the
stated moments.
\end{proof}

\noindent
Two conditions attach. First $\rho<\infty$ is a genuine restriction: for
independent $W$ and $S$, $\rho=T_0\langle S\rangle\langle W^{-1}\rangle$, which
diverges whenever the density of $W$ is bounded away from zero at the origin ---
for $W$ exponentially distributed, for instance. The existence problem has not
disappeared but moved: it is now a condition on $R$ rather than on $\eta$, and
where it fails \eqref{eq:jointinterval} remains true but vacuous, the floor
collapsing to zero. Second, $\varsigma^{2}$ is the relative variance of the
ratio, not of the dissipation.

\subsection*{6.4 Why the ratio of means will not do}

The reduction is exact, but it changes what must be measured, and the
substitution a reader is most likely to make is unsafe. By Jensen's inequality,
for independent $W$ and $S$,
\begin{equation}
\rho\;=\;T_0\langle S\rangle\left\langle W^{-1}\right\rangle
\;\ge\;\frac{T_0\langle S\rangle}{\langle W\rangle}\;\equiv\;\alpha_{\text{naive}},
\label{eq:naive}
\end{equation}
so the ratio of means understates the dissipation ratio; and since the floor
$(1+\rho)^{-1}$ decreases in $\rho$, using $\alpha_{\text{naive}}$ produces a
floor that is too \emph{high}. It is therefore not a valid bound. This is not a
remote possibility. Among $6\times10^{4}$ randomly generated two-point joint laws
we found that very nearly half violate the bound computed from
$\alpha_{\text{naive}}$; one explicit case places mass $0.9$ at
$(W,T_0S)=(0.2,0.6)$ and mass $0.1$ at $(9,0.1)$, for which
$\alpha_{\text{naive}}=0.509$ and the resulting floor $0.663$ exceeds the true
mean efficiency $0.324$ by a factor of two.

Correlation only sharpens the warning, since \eqref{eq:naive} need not hold at
all when $W$ and $S$ are dependent, and $\rho$ may then fall on either side of
$\alpha_{\text{naive}}$. The moment $\rho$ must be formed from the ratio
realisation by realisation.

\subsection*{6.5 A precision--efficiency bound}

The reduction of Section~6.3 does more than preserve the bounds. It puts them in
a form to which a dynamical constraint can be attached, because the only place
the machine's dynamics needs to enter is through the single number $\rho$, and
$\rho$ is bounded below by the precision of the delivered work. The result is a
ceiling on the mean efficiency expressed entirely in fluctuation data.

The dynamical input is the uncertainty relation applied not to the dissipation
but to the work current. Applying it to the dissipation would not do: the relation
governs time-antisymmetric currents, and a dissipation obeying
Assumption~\ref{as:pos} cannot be one, since $X\ge0$ together with
$X\circ\Theta=-X$ under a reversal $\Theta$ with respect to which the path measure
is mutually absolutely continuous forces $X\equiv0$. The work current is subject
to no such obstruction.

\begin{assumption}[Work current in a steady state]
\label{as:wcurrent}
Over the observation window $[0,\mathcal{T}]$ the underlying dynamics is a
time-homogeneous Markov jump process on a finite state space, or an overdamped
diffusion, in a nonequilibrium steady state, or is driven periodically by a
time-symmetric protocol $\lambda(t)=\lambda(\mathcal{T}-t)$; the delivered work
$W$ is a time-integrated, time-antisymmetric current of that dynamics; and the
exergy accounting captures a fraction $r\in(0,1]$ of the total entropy
production, $\langle S\rangle=r\,k_B\langle\Sigma_{\rm tot}\rangle$.
\end{assumption}

\noindent
Write
\begin{equation}
\epsilon_{w}^{2}\;:=\;\frac{\mathrm{Var}(W)}{\langle W\rangle^{2}},
\qquad
\gamma\;:=\;\frac{2k_BT_0}{\langle W\rangle},
\label{eq:precdef}
\end{equation}
so that $\epsilon_{w}^{2}$ is the relative variance of the delivered work --- the
precision of the machine's output --- and $\gamma$ the thermal energy measured
against that output, the same small parameter as $\varepsilon$ of
the reciprocal of the work delivered per window in units of $k_BT_0$, doubled. Under Assumption~\ref{as:wcurrent} the
uncertainty relation~\cite{Barato2015,Gingrich2016,Horowitz2017} reads
$\epsilon_{w}^{2}\ge2/\langle\Sigma_{\rm tot}\rangle$, and substituting
$\langle\Sigma_{\rm tot}\rangle=\langle S\rangle/(rk_B)$ together with
$\alpha_{\text{naive}}=T_0\langle S\rangle/\langle W\rangle$ gives
\begin{equation}
\alpha_{\text{naive}}\;\ge\;\frac{r\gamma}{\epsilon_{w}^{2}} .
\label{eq:turnaive}
\end{equation}
A machine whose output is precise must dissipate, and \eqref{eq:turnaive} says how
much, in units of the work it delivers.

What is needed is the same statement for $\rho$, and Section~6.4 warns that
$\rho$ and $\alpha_{\text{naive}}$ are different numbers. The gap is an identity.

\begin{lemma}[When the ratio of means understates the dissipation ratio]
\label{lem:cov}
With $R=T_0S/W$ and $\rho=\langle R\rangle$,
\begin{equation}
\alpha_{\text{naive}}\;=\;\rho+\frac{\mathrm{Cov}(R,W)}{\langle W\rangle},
\qquad\text{so}\qquad
\rho\ \ge\ \alpha_{\text{naive}}
\iff
\mathrm{Cov}(R,W)\le0 .
\label{eq:covcrit}
\end{equation}
In particular $\mathrm{Cov}(R,W)\le0$ whenever $W$ and $S$ are independent, and
$\mathrm{Cov}(R,W)=0$ when $S$ is proportional to $W$.
\end{lemma}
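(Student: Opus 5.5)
The identity follows by writing $T_0S$ as $RW$ and expanding its mean through the covariance, assuming $\langle R\rangle$ and $\langle RW\rangle$ finite and $W>0$ almost surely so that $R$ is defined. Since $T_0S=RW$ pointwise,
\[
T_0\langle S\rangle=\langle RW\rangle=\langle R\rangle\langle W\rangle+\mathrm{Cov}(R,W)=\rho\langle W\rangle+\mathrm{Cov}(R,W).
\]
Dividing by $\langle W\rangle>0$ and recalling $\alpha_{\text{naive}}=T_0\langle S\rangle/\langle W\rangle$ gives the first part of \eqref{eq:covcrit}. The equivalence is then immediate: $\rho\ge\alpha_{\text{naive}}$ holds exactly when $\mathrm{Cov}(R,W)/\langle W\rangle\le0$, and the positive factor $\langle W\rangle$ does not affect the sign.

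Proportionality is the simplest special case. If $S=cW$, then $R=T_0c$ is constant, its covariance with anything vanishes, and $\rho=\alpha_{\text{naive}}$.

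For independence, the plan is to condition on $W$ rather than go through the Jensen statement \eqref{eq:naive}, since conditioning gives the covariance sign directly. Write $R=T_0S\cdot W^{-1}$ with $S$ independent of $W$. Then
\[
\mathrm{Cov}(R,W)=T_0\langle S\rangle\,\mathrm{Cov}(W^{-1},W).
\]
Because $w\mapsto w^{-1}$ is decreasing and $w\mapsto w$ is increasing, Chebyshev's association inequality gives $\mathrm{Cov}(W^{-1},W)\le0$. One can check this by taking an independent copy $W'$ of $W$ and noting
\[
2\,\mathrm{Cov}(W^{-1},W)=\big\langle (W^{-1}-W'^{-1})(W-W')\big\rangle\le0,
\]
since the integrand is pointwise non-positive. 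This holds whenever $\langle W^{-1}\rangle<\infty$, which is the same condition as $\rho<\infty$ here. Together with $\langle S\rangle\ge0$ this gives $\mathrm{Cov}(R,W)\le0$, so $\rho\ge\alpha_{\text{naive}}$ is recovered as in \eqref{eq:naive}.

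The main obstacle is integrability, not algebra. $\mathrm{Cov}(R,W)$ must be finite, which requires $\langle RW\rangle=T_0\langle S\rangle<\infty$ together with $\rho<\infty$. Both are standing hypotheses of Section~6.3, so the covariance decomposition is legitimate. I would state these conditions explicitly at the start of the proof and note that when $\rho=\infty$ the identity degenerates, consistent with the remark after Proposition~\ref{prop:jointbounds}.
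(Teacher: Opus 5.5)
Your proof is correct and follows the paper's argument: the identity from $T_0S=RW$ and the covariance decomposition, $R$ constant when $S\propto W$, and in the independent case the factorisation $\mathrm{Cov}(R,W)=T_0\langle S\rangle\bigl(1-\langle W^{-1}\rangle\langle W\rangle\bigr)$. The only difference is that you obtain $\langle W\rangle\langle W^{-1}\rangle\ge1$ from Chebyshev's association inequality via an independent copy, where the paper uses Cauchy--Schwarz; either works, and stating the integrability conditions explicitly, as you do, is a useful addition.
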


\begin{proof}
$T_0S=RW$, so $T_0\langle S\rangle=\langle RW\rangle=\rho\langle W\rangle
+\mathrm{Cov}(R,W)$; divide by $\langle W\rangle$. For independent $W$ and $S$,
$\mathrm{Cov}(R,W)=T_0\langle S\rangle\left(1-\langle W^{-1}\rangle\langle
W\rangle\right)\le0$ by Cauchy--Schwarz. If $S=cW$ then $R$ is constant.
\end{proof}

The criterion is physically transparent: $\mathrm{Cov}(R,W)\le0$ says the
dissipation-to-work ratio does not rise with the work delivered, so that a
realisation in which the machine does more is not a realisation in which it is
proportionally less efficient. It holds with equality for a machine whose
dissipation tracks its output, and strictly for one whose fluctuations in the two
are independent. It fails when $S$ grows superlinearly in $W$.

\begin{proposition}[Precision--efficiency bound]
\label{prop:pareto}
Under Assumption~\ref{as:wcurrent}, and provided $\mathrm{Cov}(R,W)\le0$,
\begin{equation}
\langle\eta\rangle\;\le\;
\Phi\!\left(\epsilon_{w}^{2},\varsigma^{2};r\gamma\right)
\;:=\;\frac{\varsigma^{2}}{1+\varsigma^{2}}
+\frac{\epsilon_{w}^{2}}
{(1+\varsigma^{2})\left[\epsilon_{w}^{2}+r\gamma(1+\varsigma^{2})\right]} .
\label{eq:pareto}
\end{equation}
\end{proposition}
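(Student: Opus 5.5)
The plan is to combine three ingredients already in place: the joint ceiling of Proposition~\ref{prop:jointbounds}, the covariance identity of Lemma~\ref{lem:cov}, and the uncertainty-relation inequality \eqref{eq:turnaive}.

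First, by Proposition~\ref{prop:jointbounds} we have $\langle\eta\rangle\le\mathcal{U}(\rho,\varsigma^{2})$, where $\mathcal{U}$ is the ceiling \eqref{eq:upper} evaluated at $\alpha=\rho$ and $\sigma^{2}=\varsigma^{2}$. I would then record that, at fixed $\varsigma^{2}$, $\mathcal{U}$ is strictly decreasing in its first argument. This is immediate from the form $\mathcal{U}=1-\alpha/(1+\alpha b)$ with $b=1+\varsigma^{2}$. The map $\alpha\mapsto\alpha/(1+\alpha b)$ has derivative $(1+\alpha b)^{-2}>0$, so $\mathcal{U}$ falls as $\alpha$ grows. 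Consequently any lower bound $\rho\ge\rho_{0}$ yields $\langle\eta\rangle\le\mathcal{U}(\rho_{0},\varsigma^{2})$.

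Second, I would produce such a $\rho_{0}$. Lemma~\ref{lem:cov} together with the hypothesis $\mathrm{Cov}(R,W)\le0$ gives $\rho\ge\alpha_{\text{naive}}$. Under Assumption~\ref{as:wcurrent} the uncertainty relation gives \eqref{eq:turnaive}, namely $\alpha_{\text{naive}}\ge r\gamma/\epsilon_{w}^{2}$. Chaining the two, $\rho\ge r\gamma/\epsilon_w^{2}=:\rho_0$.

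Third, substitute $\rho_0$ into $\mathcal{U}$. The second term of the ceiling is
\[
\frac{1}{(1+\varsigma^{2})\left[1+(r\gamma/\epsilon_{w}^{2})(1+\varsigma^{2})\right]} .
\]
Multiplying numerator and denominator by $\epsilon_w^{2}$ turns this into the second term of \eqref{eq:pareto}, which completes the argument.

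The only step needing care is the thermodynamic input in the second step. The TUR must be applied to $W$ as a genuine time-antisymmetric current in the stated steady-state or time-symmetric-protocol setting, which is exactly what Assumption~\ref{as:wcurrent} provides. The identification $\langle\Sigma_{\rm tot}\rangle=\langle S\rangle/(rk_B)$ must also be used in the direction that preserves the inequality. Because $r\le1$ makes $S$ an underestimate of the total entropy production, the bound $\epsilon_w^2\ge 2/\langle\Sigma_{\rm tot}\rangle=2rk_B/\langle S\rangle$ rearranges to $T_0\langle S\rangle/\langle W\rangle\ge r\gamma/\epsilon_w^{2}$ with no loss. This is the derivation of \eqref{eq:turnaive}, which I take as given. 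Finally, I would check that $\rho<\infty$ is implicitly needed for Proposition~\ref{prop:jointbounds}. If $\rho=\infty$, then $\langle\eta\rangle$ is still at most the limiting ceiling $\varsigma^2/(1+\varsigma^2)\le\Phi$, so the bound holds trivially in that case as well.
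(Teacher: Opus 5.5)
Your proposal is correct and is essentially the paper's own proof: you bound $\langle\eta\rangle$ by $\mathcal{U}(\rho,\varsigma^{2})$ from Proposition~\ref{prop:jointbounds}, use $\partial\mathcal{U}/\partial\rho=-[1+\rho(1+\varsigma^{2})]^{-2}<0$, chain $\rho\ge\alpha_{\text{naive}}\ge r\gamma/\epsilon_w^{2}$ via Lemma~\ref{lem:cov} and \eqref{eq:turnaive}, and substitute with the $\epsilon_w^{2}$ rescaling. The one thing to drop is your $\rho=\infty$ aside, which is unnecessary and not meaningful as stated, because $\varsigma^{2}=\mathrm{Var}(R)/\rho^{2}$ and the covariance hypothesis already presuppose $\rho<\infty$.
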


\begin{proof}
The upper bound of Proposition~\ref{prop:jointbounds} gives
$\langle\eta\rangle\le\mathcal{U}(\rho,\varsigma^{2})$ with
$\mathcal{U}$ as in \eqref{eq:upper}. Differentiating,
\begin{equation}
\frac{\partial\mathcal{U}}{\partial\rho}
=-\frac{1}{\left[1+\rho(1+\varsigma^{2})\right]^{2}}<0
\label{eq:Umono}
\end{equation}
for every $\rho>0$ and $\varsigma^{2}\ge0$, so $\mathcal{U}$ is strictly
decreasing in its first argument and any lower bound on $\rho$ may be substituted
for it. Lemma~\ref{lem:cov} and \eqref{eq:turnaive} give
$\rho\ge\alpha_{\text{naive}}\ge r\gamma/\epsilon_{w}^{2}$; substituting
$\rho=r\gamma/\epsilon_{w}^{2}$ into \eqref{eq:jointinterval} and multiplying the
second term above and below by $\epsilon_{w}^{2}$ yields \eqref{eq:pareto}.
\end{proof}

Three features of \eqref{eq:pareto} are worth separating.

\emph{It is a trade-off, and in the expected direction.} $\Phi$ is strictly
increasing in $\epsilon_{w}^{2}$, with
$\partial\Phi/\partial\epsilon_{w}^{2}=r\gamma
\left[\epsilon_{w}^{2}+r\gamma(1+\varsigma^{2})\right]^{-2}>0$. A machine
whose power output is more reproducible has a strictly lower ceiling on its mean
efficiency. As $\epsilon_{w}^{2}\to\infty$ the constraint disappears,
$\Phi\to1$; as $\epsilon_{w}^{2}\to0$ it bites hardest,
$\Phi\to\varsigma^{2}/(1+\varsigma^{2})$, and a machine with perfectly
reproducible output is held below the mass its dissipation law places at zero.
Equation \eqref{eq:pareto} is thus a Pareto frontier in the plane of output
precision and mean efficiency, with the dissipation's relative variance as the
family parameter.

\emph{At $\varsigma^{2}=0$ it is exactly the bound of Pietzonka, Barato and
Seifert.} Setting the dissipation's relative variance to zero,
\begin{equation}
\Phi\!\left(\epsilon_{w}^{2},0;r\gamma\right)
=\frac{\epsilon_{w}^{2}}{\epsilon_{w}^{2}+r\gamma}
=\left[1+\frac{r\gamma}{\epsilon_{w}^{2}}\right]^{-1},
\label{eq:pbslimit}
\end{equation}
and for a molecular motor with $r=1$, mean velocity $v$, displacement diffusion
coefficient $D$ and external force $f$ observed over time $t$, one has
$\epsilon_{w}^{2}=2D/(v^{2}t)$ and $\gamma=2k_BT_0/(fvt)$, whence
$\gamma/\epsilon_{w}^{2}=vk_BT_0/(Df)$ and \eqref{eq:pbslimit} becomes
$\left(1+vk_BT_0/Df\right)^{-1}$, the bound of~\cite{Pietzonka2016}
(Section~8.6). That bound constrains the ratio of means
$\langle W\rangle/(\langle W\rangle+T_0\langle S\rangle)$, which
Proposition~\ref{prop:jointfree} and Section~6.4 show is not the mean efficiency.

\emph{The extension is strict, and the direction matters.} Since $\Phi$ is
strictly increasing in $\varsigma^{2}$, with
$\partial\Phi/\partial\varsigma^{2}=(r\gamma)^{2}
\left[\epsilon_{w}^{2}+r\gamma(1+\varsigma^{2})\right]^{-2}>0$, the ceiling on
$\langle\eta\rangle$ lies strictly above \eqref{eq:pbslimit} whenever the
dissipation fluctuates. Applying the ratio-of-means bound to the mean of the
ratio is therefore not merely inexact but unsafe in the same way
$\alpha_{\text{naive}}$ was in Section~6.4: it can be violated by a real machine.
Equation \eqref{eq:pareto} is the repair, and it costs one further measured
number, $\varsigma^{2}$.

\begin{figure}[H]
\centering
\includegraphics{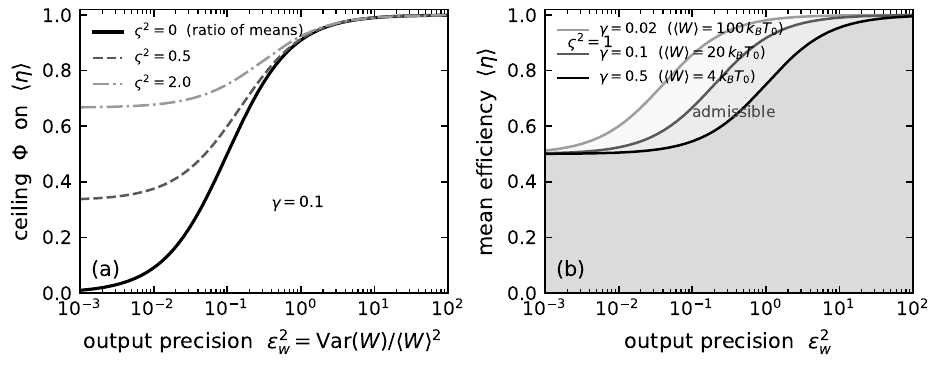}
\caption{The precision--efficiency frontier \eqref{eq:pareto} at $r=1$. (a)
$\Phi$ against the relative variance $\epsilon_{w}^{2}$ of the delivered work,
for three values of the dissipation's relative variance $\varsigma^{2}$, at
$\gamma=0.1$. The solid curve $\varsigma^{2}=0$ is the bound
of~\cite{Pietzonka2016}, which constrains the ratio of means; the mean efficiency
is bounded by the higher curves, and applying the $\varsigma^{2}=0$ curve to it
is unsafe. (b) The admissible region in the $(\epsilon_{w}^{2},\langle\eta\rangle)$
plane at $\varsigma^{2}=1$ for three values of $\gamma=2k_BT_0/\langle W\rangle$:
a smaller machine is pushed further from the top-left corner, where precise output
and high efficiency would coexist}
\label{fig:tur}
\end{figure}

\begin{corollary}[The interval cannot be closed either]
\label{cor:widthfloor}
Under the hypotheses of Proposition~\ref{prop:pareto}, the width
$\Delta:=\mathcal{U}-(1+\rho)^{-1}$ of the interval \eqref{eq:jointinterval}
satisfies
\begin{equation}
\Delta\;=\;\frac{\rho^{2}\varsigma^{2}}
{(1+\rho)\left(1+\rho+\rho\varsigma^{2}\right)}
\;\ge\;\frac{(r\gamma)^{2}\varsigma^{2}}
{\left[\epsilon_{w}^{2}+r\gamma\right]
 \left[\epsilon_{w}^{2}+r\gamma(1+\varsigma^{2})\right]} .
\label{eq:widthfloor}
\end{equation}
\end{corollary}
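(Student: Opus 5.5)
The proof has two parts: an exact formula for the width, and then a monotonicity argument of the same kind used in Proposition~\ref{prop:pareto}.

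\emph{Closed form.} Write $s=\varsigma^{2}$. By \eqref{eq:upper} with $(\alpha,\sigma^{2})$ replaced by $(\rho,\varsigma^{2})$, we have $\mathcal{U}=1-\rho/(1+\rho+\rho s)$. Subtracting $(1+\rho)^{-1}=1-\rho/(1+\rho)$ gives
\[
\Delta=\frac{\rho}{1+\rho}-\frac{\rho}{1+\rho+\rho s}=\frac{\rho^{2}s}{(1+\rho)(1+\rho+\rho s)} ,
\]
which is the equality in \eqref{eq:widthfloor}. It also agrees with the exact width quoted after \eqref{eq:width}, which is a useful check.

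\emph{Monotonicity in $\rho$.} At fixed $s\ge0$, I will show that $\Delta$ is increasing in $\rho>0$. The cleanest route is to write
\[
\Delta=\frac{s}{(1+1/\rho)(1+1/\rho+s)} .
\]
Both factors in the denominator are decreasing in $\rho$ and positive, so $\Delta$ is increasing, strictly so when $s>0$. As a cross-check, I can also differentiate $\rho/(1+\rho)-\rho/(1+\rho+\rho s)$ directly. The derivative is $(1+\rho)^{-2}-(1+\rho+\rho s)^{-2}$, which is non-negative since $s\ge0$.

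\emph{Substitution of the lower bound on $\rho$.} Under the hypotheses of Proposition~\ref{prop:pareto}, Lemma~\ref{lem:cov} together with \eqref{eq:turnaive} gives $\rho\ge\alpha_{\text{naive}}\ge r\gamma/\epsilon_w^{2}=:\rho_{0}$. Monotonicity then yields $\Delta(\rho)\ge\Delta(\rho_{0})$. Inserting $1/\rho_{0}=\epsilon_w^{2}/(r\gamma)$ into the reciprocal form gives
\[
\Delta(\rho_0)=\frac{s}{\bigl(1+\epsilon_w^2/(r\gamma)\bigr)\bigl(1+s+\epsilon_w^2/(r\gamma)\bigr)} .
\]
Multiplying numerator and denominator by $(r\gamma)^{2}$ produces $(r\gamma)^{2}s\big/\bigl\{[\epsilon_w^{2}+r\gamma][\epsilon_w^{2}+r\gamma(1+s)]\bigr\}$, the stated floor.

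\emph{Main obstacle.} The algebra is routine, and the only step needing care is the direction of monotonicity. $\mathcal{U}$ decreases in $\rho$, and so does the Jensen floor, so it is not immediate which way their difference moves. The reciprocal form settles this cleanly, and I will present it first. I will also note that $\varsigma^{2}$ is held fixed while $\rho$ is varied. This is legitimate because the two-moment bound treats $\rho$ and $\varsigma^{2}$ as independent inputs, and the TUR constrains only $\rho$.
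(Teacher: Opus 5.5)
Your proof is correct and is essentially the paper's: the closed form follows from $\mathcal{U}=1-\rho/[1+\rho(1+\varsigma^{2})]$, then $\Delta$ is shown to increase in $\rho$ at fixed $\varsigma^{2}$, and then $\rho\ge r\gamma/\epsilon_{w}^{2}$ from Lemma~\ref{lem:cov} and the uncertainty relation is substituted. The only difference is cosmetic: you get monotonicity from the reciprocal form where the paper differentiates directly, and your cross-check $(1+\rho)^{-2}-(1+\rho+\rho\varsigma^{2})^{-2}$ is exactly the paper's derivative $\rho\varsigma^{2}(\rho\varsigma^{2}+2\rho+2)/[(1+\rho)^{2}(1+\rho+\rho\varsigma^{2})^{2}]$ written over a common denominator.
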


\begin{proof}
The identity follows from $\mathcal{U}=1-\rho/[1+\rho(1+\varsigma^{2})]$ together
with $1+\rho(1+\varsigma^{2})-1-\rho=\rho\varsigma^{2}$. Differentiating,
\[
\frac{\partial\Delta}{\partial\rho}
=\frac{\rho\varsigma^{2}\left(\rho\varsigma^{2}+2\rho+2\right)}
{(1+\rho)^{2}\left(1+\rho+\rho\varsigma^{2}\right)^{2}}\;>\;0
\]
for $\rho,\varsigma^{2}>0$, so $\Delta$ is strictly increasing in $\rho$ and the
bound $\rho\ge r\gamma/\epsilon_{w}^{2}$ established in the proof of
Proposition~\ref{prop:pareto} may be substituted for it.
\end{proof}

\noindent
One inequality therefore does both jobs: it caps the interval's upper edge and
floors its width, the two statements being the monotonicity of $\mathcal{U}$ and
of $\Delta$ in the same variable. Equation \eqref{eq:widthfloor} answers the
question left at the end of Section~4.6. The interval of \eqref{eq:interval} is
not an artefact of the two-moment information state, because the dynamics forbid
the $\rho\to0$ limit in which it closes: a machine cannot have both a precise
output and a negligible dissipation ratio. As $\epsilon_{w}^{2}\to0$ the floor
rises to $\varsigma^{2}/(1+\varsigma^{2})$, the whole of the interval; as
$\epsilon_{w}^{2}\to\infty$ it falls to zero, the constraint disappearing together
with the ceiling. It vanishes at $\varsigma^{2}=0$, as it must: a machine with
reproducible dissipation has its mean efficiency fixed by \eqref{eq:Rdef}
whatever its output precision.

\subsection*{6.6 Consequence for the paper}

Assumption~\ref{as:task} is therefore not a restriction on the validity of the
bounds. It is a device for \emph{expressing} the moments of $R$: when $W$ is
fixed by the task, $R=\alpha x$ and the moments of $R$ are those of the
dissipation, rescaled. When $W$ fluctuates, the bounds are unchanged and the same
moments are required, but they must be measured on the ratio rather than
assembled from separate averages of $S$ and $W$. Everything the paper claims
about $\langle\eta\rangle$ thus holds for a general machine; what conditioning on
$W$ buys is not correctness but access to the inputs.

\section*{7. A maximum-entropy benchmark inside the interval}

Sections~3--6 bound the mean efficiency without committing to a distribution.
This section does the complementary calculation: it evaluates
$\langle\eta\rangle$ for the one distribution singled out by the information
state of Assumption~\ref{as:mean}, the maximum-entropy prior \eqref{eq:prior}.
The result is a closed form, it is an interior point of the bracket
\eqref{eq:interval} rather than an edge of it, and it is a benchmark rather than
a prediction --- a reference value against which a measurement can be read, in
the spirit of the discussion in Section~3.4.

\subsection*{7.1 The mean efficiency}

With the efficiency \eqref{eq:etax} and the maximum-entropy prior
\eqref{eq:prior} of Section~2, the mean efficiency is a one-dimensional integral
that evaluates in closed form.

\begin{proposition}
\label{prop:main}
For $\alpha>0$,
\begin{equation}
\langle\eta\rangle(\alpha)
\;=\;\int_0^\infty\frac{e^{-x}}{1+\alpha x}\,\mathrm{d}x
\;=\;\frac{1}{\alpha}\,e^{1/\alpha}\,E_1\!\left(\frac{1}{\alpha}\right),
\label{eq:main}
\end{equation}
where $E_1(z)=\int_z^\infty t^{-1}e^{-t}\,\mathrm{d}t$.
\end{proposition}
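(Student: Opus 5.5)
The plan is to prove Proposition~\ref{prop:main} by a single change of variables. The first equality needs no argument beyond what is already established. Under the maximum-entropy prior \eqref{eq:prior} the scaled dissipation $x=S/\mu$ has density $e^{-x}$ on $[0,\infty)$, and by \eqref{eq:etax} the efficiency is $\eta(x)=(1+\alpha x)^{-1}$. So $\langle\eta\rangle$ is the stated integral. The integrand is positive and bounded by $e^{-x}$, hence the integral converges absolutely for every $\alpha>0$; this is consistent with Lemma~\ref{lem:scale}.

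For the closed form, I will substitute $t=(1+\alpha x)/\alpha=x+1/\alpha$, so that $\mathrm{d}x=\mathrm{d}t$ and $1+\alpha x=\alpha t$. The lower limit $x=0$ becomes $t=1/\alpha$. Then $e^{-x}=e^{1/\alpha}e^{-t}$, and the integral becomes
\[
\int_{1/\alpha}^\infty \frac{e^{1/\alpha}e^{-t}}{\alpha t}\,\mathrm{d}t
=\frac{1}{\alpha}\,e^{1/\alpha}\int_{1/\alpha}^\infty \frac{e^{-t}}{t}\,\mathrm{d}t
=\frac{1}{\alpha}\,e^{1/\alpha}E_1\!\left(\frac{1}{\alpha}\right).
\]
The last step is the definition of $E_1$. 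The substitution is an affine bijection of $[0,\infty)$ onto $[1/\alpha,\infty)$ with unit Jacobian, and all integrals are of nonnegative functions, so no justification beyond Tonelli/monotone convergence is needed.

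As a consistency check, the result agrees with Proposition~\ref{prop:gamma} at $k=1$. There $U(1,1,z)=e^zE_1(z)$ with $z=1/\alpha$, and \eqref{eq:gammares} gives $(1/\alpha)\,e^{1/\alpha}E_1(1/\alpha)$.

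There is no real obstacle. The only point requiring care is the bookkeeping of the $e^{1/\alpha}$ factor produced by shifting the exponent.
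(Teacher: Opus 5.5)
Your proof is correct and is essentially the paper's own argument: the same shift $t=x+1/\alpha$, with $1+\alpha x=\alpha t$ and $e^{-x}=e^{1/\alpha}e^{-t}$, which leaves exactly the defining integral of $E_1(1/\alpha)$. Your convergence remark (integrand bounded by $e^{-x}$) and your $k=1$ check against Proposition~\ref{prop:gamma} also match what the paper says alongside the result.
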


\begin{proof}
Substituting $t=x+1/\alpha$, so that $1+\alpha x=\alpha t$ and
$\mathrm{d}x=\mathrm{d}t$,
\[
\int_0^\infty\frac{e^{-x}}{1+\alpha x}\,\mathrm{d}x
=\int_{1/\alpha}^\infty\frac{e^{-(t-1/\alpha)}}{\alpha t}\,\mathrm{d}t
=\frac{e^{1/\alpha}}{\alpha}\int_{1/\alpha}^\infty\frac{e^{-t}}{t}\,\mathrm{d}t. \qedhere
\]
\end{proof}

\noindent
The integral in \eqref{eq:main} converges for every $\alpha>0$ because the
integrand is bounded by $e^{-x}$ on the domain of integration --- which is
precisely the content of Assumption~\ref{as:pos} and Proposition~\ref{prop:exist}: on $x\ge0$ the
denominator never approaches zero. The natural comparison throughout what
follows is the \emph{deterministic} value
\begin{equation}
\eta_{\mathrm{det}} \;\equiv\; \eta\!\left(\langle S\rangle\right)
\;=\;\frac{1}{1+\alpha},
\label{eq:det}
\end{equation}
the efficiency a machine would have if it dissipated its mean amount every time.
Proposition~\ref{prop:jensen} shows $\langle\eta\rangle>\eta_{\mathrm{det}}$ always; the
asymptotics quantify by how much.

\subsection*{7.2 The near-reversible limit}

\begin{proposition}
\label{prop:small}
As $\alpha\to0^+$,
\begin{equation}
\langle\eta\rangle(\alpha)\;\sim\;\sum_{n\ge0}(-1)^n\,n!\,\alpha^n
\;=\;1-\alpha+2\alpha^2-6\alpha^3+\cdots,
\label{eq:small}
\end{equation}
an asymptotic series with zero radius of convergence. Consequently
\begin{equation}
\langle\eta\rangle-\eta_{\mathrm{det}}\;=\;\alpha^2+O(\alpha^3).
\label{eq:gap}
\end{equation}
\end{proposition}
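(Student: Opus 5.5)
The plan is to expand the integrand of \eqref{eq:main} in $\alpha x$ and control the remainder explicitly. This is Watson's-lemma reasoning, made exact by a finite geometric expansion. For every $N\ge0$ and $x\ge0$ the algebraic identity
\[
\frac{1}{1+\alpha x}=\sum_{n=0}^{N-1}(-\alpha x)^{n}+\frac{(-\alpha x)^{N}}{1+\alpha x}
\]
holds, with no convergence condition. Integrating against $e^{-x}$ and using $\int_0^\infty x^{n}e^{-x}\,\mathrm{d}x=n!$ gives
\[
\langle\eta\rangle(\alpha)=\sum_{n=0}^{N-1}(-1)^{n}n!\,\alpha^{n}+R_N(\alpha),
\qquad
R_N(\alpha)=(-\alpha)^{N}\int_0^\infty\frac{x^{N}e^{-x}}{1+\alpha x}\,\mathrm{d}x .
\]
Since $0<(1+\alpha x)^{-1}\le1$ on $x\ge0$, we get $|R_N(\alpha)|\le N!\,\alpha^{N}$. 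Hence $R_N=O(\alpha^{N})$ for each fixed $N$, which is exactly the definition of the asymptotic expansion \eqref{eq:small}. As a bonus, the remainder has the sign of the first omitted term and is smaller in magnitude, so the partial sums alternately bracket the true value.

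For the zero radius of convergence, I will apply the ratio test to the coefficients $a_n=(-1)^n n!$. Since $|a_{n+1}/a_n|=n+1\to\infty$, the radius is $0$: the series diverges for every $\alpha>0$. This reflects the fact that $\langle\eta\rangle$ is not analytic at $\alpha=0$, which can also be read off from the closed form $\alpha^{-1}e^{1/\alpha}E_1(1/\alpha)$ and its essential singularity. For the statement only the coefficient growth is needed.

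For \eqref{eq:gap}, I will expand $\eta_{\mathrm{det}}=(1+\alpha)^{-1}=1-\alpha+\alpha^{2}-\alpha^{3}+\cdots$, which converges for $\alpha<1$. Taking $N=3$ above gives $\langle\eta\rangle=1-\alpha+2\alpha^{2}+O(\alpha^{3})$. Subtracting leaves $\alpha^{2}+O(\alpha^{3})$.

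There is no real obstacle. The only point needing care is that the expansion is asymptotic rather than convergent. The explicit remainder bound handles this, since it requires no interchange of an infinite sum with the integral. The leading coefficient $\alpha^{2}$ can be cross-checked against Proposition~\ref{prop:jensen} and \eqref{eq:width}: the gap should be $\tfrac12\eta''(1)\,\mathrm{Var}(x)=\alpha^{2}\sigma^{2}/(1+\alpha)^{3}$ to leading order, and with $\sigma^{2}=1$ this is $\alpha^{2}$.
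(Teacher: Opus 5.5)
Your proof is correct, and it takes a different route from the paper's. The paper does not return to the integral. It substitutes the tabulated large-$z$ expansion $E_1(z)\sim e^{-z}\sum_{n\ge0}(-1)^n n!\,z^{-(n+1)}$ into the closed form \eqref{eq:main} at $z=1/\alpha$, cancels $e^{1/\alpha}$ and one power of $\alpha$, and leaves the zero radius to the evident $n!$ growth. You instead expand the integrand by a finite geometric sum with an exact remainder.

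The two are closer than they look. The shift $t=z+s$ turns $z\,e^{z}E_1(z)$ into exactly your integral with $\alpha=1/z$, so your estimate is the standard proof of the expansion the paper quotes. But yours is self-contained and never uses Proposition~\ref{prop:main}. It also delivers an explicit error control, $|R_N|\le N!\,\alpha^N$ with the sign of the first omitted term, which the paper's proof leaves implicit. The paper's route, in turn, buys brevity and a direct tie to the closed form.

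That error control pays off in the optimal-truncation discussion after the proposition. $N!\,\alpha^N$ is smallest near $N\approx1/\alpha$, where it is $\approx\sqrt{2\pi/\alpha}\,e^{-1/\alpha}$. The weight $x^Ne^{-x}$ then sits near $x\approx1/\alpha$, where $(1+\alpha x)^{-1}\approx\tfrac12$, so the optimal error is $\approx\sqrt{\pi/(2\alpha)}\,e^{-1/\alpha}$. This reproduces the reported $1.8\times10^{-4}$ at $\alpha=0.1$ and $1.1\times10^{-8}$ at $\alpha=0.05$, which the paper compares only loosely with $e^{-1/\alpha}$.

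Two side remarks need fixing, though neither is load-bearing. First, the formula in your cross-check, $\alpha^{2}\sigma^{2}/(1+\alpha)^{3}$, is \eqref{eq:size}, not \eqref{eq:width}. Second, $\alpha=0$ is a logarithmic branch point of the closed form (through the $\ln\alpha$ hidden in $E_1(1/\alpha)$), not an isolated essential singularity. Moreover, the growth of $e^{1/\alpha}$ is compensated by the decay of $E_1(1/\alpha)$ along the positive axis. Non-analyticity therefore cannot be read off the factors; it follows instead from your zero-radius result and the uniqueness of asymptotic expansions.
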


\begin{proof}
Insert $E_1(z)\sim e^{-z}\sum_{n\ge0}(-1)^n n!\,z^{-(n+1)}$ as $z\to\infty$ into
\eqref{eq:main} with $z=1/\alpha$; the factor $e^{1/\alpha}$ cancels $e^{-z}$ and
one power of $\alpha$ cancels the prefactor. Equation~\eqref{eq:gap} follows on
subtracting $\eta_{\mathrm{det}}=1-\alpha+\alpha^2-\alpha^3+\cdots$ term by term.
\end{proof}

\noindent
Two features of \eqref{eq:small} are worth drawing out. First, the leading two
terms coincide with those of $\eta_{\mathrm{det}}$: to first order in the
dissipation, fluctuations in $S$ are invisible, and a measurement accurate only
to $O(\alpha)$ cannot distinguish a fluctuating machine from a deterministic one
with the same mean dissipation. The distinction first appears at second order,
where \eqref{eq:gap} gives a clean prediction with no free parameter. Numerically
the ratio $(\langle\eta\rangle-\eta_{\mathrm{det}})/\alpha^2$ equals $0.952$ at
$\alpha=10^{-2}$ and $0.908$ at $\alpha=2\times10^{-2}$, approaching unity as
$\alpha\to0$.

Second, the series diverges. This is not a defect of the calculation but a
signature of the pole discussed in Section~2: expanding $(1+\alpha x)^{-1}$ in
powers of $\alpha$ and integrating term by term is illegitimate because the
expansion fails for $x>1/\alpha$, a region carrying probability $e^{-1/\alpha}$
under the prior. The series is therefore useful only up to optimal truncation,
at $n\approx1/\alpha$ terms, with an irreducible error of order $e^{-1/\alpha}$.
We confirmed this numerically: for $\alpha=0.1$ the best partial sum is at $n=9$
with error $1.8\times10^{-4}$, against $e^{-10}=4.5\times10^{-5}$; for
$\alpha=0.05$, at $n=19$ with error $1.1\times10^{-8}$ against
$e^{-20}=2.1\times10^{-9}$. The exponentially small ambiguity of the perturbative
series and the exponentially small weight the prior assigns beyond the pole are
the same quantity, which is a useful consistency check on the whole construction.

\subsection*{7.3 The strongly dissipative limit}

\begin{proposition}
\label{prop:large}
As $\alpha\to\infty$,
\begin{equation}
\langle\eta\rangle(\alpha)\;=\;\frac{\ln\alpha-\gamma}{\alpha}
+O\!\left(\frac{\ln\alpha}{\alpha^2}\right),
\label{eq:large}
\end{equation}
with $\gamma$ the Euler--Mascheroni constant, and therefore
\begin{equation}
\frac{\langle\eta\rangle}{\eta_{\mathrm{det}}}\;\longrightarrow\;\ln\alpha-\gamma
\qquad(\alpha\to\infty).
\label{eq:ratio}
\end{equation}
\end{proposition}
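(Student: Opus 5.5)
The plan is to start from the closed form of Proposition~\ref{prop:main}, $\langle\eta\rangle=\alpha^{-1}e^{1/\alpha}E_1(1/\alpha)$. Put $z=1/\alpha\to0^{+}$. Then we use the standard convergent expansion of the exponential integral about the origin,
\[
E_1(z)\;=\;-\gamma-\ln z-\sum_{n\ge1}\frac{(-z)^n}{n\,n!},
\]
which holds for all $z>0$. If one prefers not to cite it, it follows by differentiating $E_1(z)+\ln z$, which gives $(1-e^{-z})/z$. Integrating that from $0$ fixes the constant as $-\gamma$, via $\int_0^1(1-e^{-t})t^{-1}\,\mathrm{d}t-\int_1^\infty e^{-t}t^{-1}\,\mathrm{d}t=\gamma$.

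With this expansion we have $E_1(z)=\ln\alpha-\gamma+O(z)$, where the remainder is bounded by $z$ for $z\le1$. The prefactor gives $e^{z}=1+O(z)$. Multiplying,
\[
e^{z}E_1(z)\;=\;(\ln\alpha-\gamma)\bigl(1+O(1/\alpha)\bigr)+O(1/\alpha)\;=\;\ln\alpha-\gamma+O\!\left(\frac{\ln\alpha}{\alpha}\right).
\]
Multiplying by $1/\alpha$ then yields \eqref{eq:large}.

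For \eqref{eq:ratio}, divide by $\eta_{\mathrm{det}}=(1+\alpha)^{-1}$. This gives
\[
\frac{\langle\eta\rangle}{\eta_{\mathrm{det}}}=\frac{1+\alpha}{\alpha}\,(\ln\alpha-\gamma)+O\!\left(\frac{\ln\alpha}{\alpha}\right)=\ln\alpha-\gamma+O\!\left(\frac{\ln\alpha}{\alpha}\right),
\]
and the error tends to zero. The ratio therefore tends to $\ln\alpha-\gamma$ in the sense that their difference vanishes; both sides diverge, so this is the intended reading of the arrow.

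The only delicate point is the constant $-\gamma$, and the error order must be tracked as $\ln\alpha/\alpha^{2}$ rather than $1/\alpha^{2}$: the cross term between the $O(z)$ from $e^{z}$ and the $\ln\alpha$ from $E_1$ is what dominates. The constant is settled by the standard identity above. Alternatively, one can derive it directly from $\int_0^\infty e^{-x}(1+\alpha x)^{-1}\,\mathrm{d}x$. To do so, split at $x=1$, use $\int_0^1(1+\alpha x)^{-1}\,\mathrm{d}x=\alpha^{-1}\ln(1+\alpha)$, and note that the corrections $\int_0^1(e^{-x}-1)(1+\alpha x)^{-1}\,\mathrm{d}x$ and $\int_1^\infty e^{-x}(1+\alpha x)^{-1}\,\mathrm{d}x$ are $\alpha^{-1}$ times $\int_0^1(e^{-x}-1)x^{-1}\,\mathrm{d}x$ and $\int_1^\infty e^{-x}x^{-1}\,\mathrm{d}x$, up to $O(\ln\alpha/\alpha^{2})$. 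These two integrals combine to $-\gamma$. This route would serve as a check on the sign of the constant.
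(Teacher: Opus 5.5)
Your proposal is correct and follows the paper's own proof. Both start from the closed form $\alpha^{-1}e^{1/\alpha}E_1(1/\alpha)$, insert the convergent small-argument series $E_1(z)=-\gamma-\ln z+O(z)$ at $z=1/\alpha$ together with $e^{1/\alpha}=1+O(1/\alpha)$, let the cross term set the $O(\ln\alpha/\alpha^{2})$ remainder, and then divide by $\eta_{\mathrm{det}}$. Your derivation of the constant $-\gamma$, the explicit remainder bound, and the split-at-$x=1$ cross-check are all correct; they supply detail the paper takes as standard.
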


\begin{proof}
Use the convergent expansion
$E_1(z)=-\gamma-\ln z+\sum_{n\ge1}(-1)^{n+1}z^n/(n\cdot n!)$ as $z\to0$ with
$z=1/\alpha$, giving $E_1(1/\alpha)=\ln\alpha-\gamma+\alpha^{-1}+O(\alpha^{-2})$,
together with $e^{1/\alpha}=1+\alpha^{-1}+O(\alpha^{-2})$. Equation~\eqref{eq:ratio}
follows since $\eta_{\mathrm{det}}=\alpha^{-1}+O(\alpha^{-2})$.
\end{proof}

\noindent
Equation~\eqref{eq:ratio} illustrates, in a case where everything is explicit,
how much a fluctuating dissipation can matter. A deterministic machine
dissipating $T_0\langle S\rangle\gg W$ has efficiency falling as $1/\alpha$;
under the maximum-entropy prior a machine with the \emph{same mean dissipation}
has efficiency falling only as $\ln\alpha/\alpha$, an enhancement that grows,
slowly, without limit. The mechanism is convexity: efficiency saturates at $1$,
so the rare realisations in which the machine dissipates far less than average
contribute disproportionately to the mean, and the more strongly dissipative the
machine is on average, the larger that disproportion becomes. The asymptotic form
is accurate to the digits available: at $\alpha=10^8$ the computed ratio is
$17.8435$ against $\ln\alpha-\gamma=17.8435$.

It should be stressed that the \emph{magnitude} $\ln\alpha-\gamma$ is a property
of this particular prior and not a general law. Only the direction of the effect
survives arbitrary priors, by Proposition~\ref{prop:jensen}; how large it may be
is settled not here but in Section~4, where the second moment of the dissipation
bounds it from above without any distributional commitment.

\begin{remark}
Read physically, characterising a strongly dissipative small machine by its mean
dissipation alone systematically \emph{understates} its mean efficiency. That
this happens at all is distribution-free, by Proposition~\ref{prop:jensen}; the
size of the shortfall is not, and depends on how variable the dissipation is ---
as Section~10.1 makes quantitative and Section~4 bounds. The logarithmic growth in
\eqref{eq:ratio} should therefore be read as what the maximum-entropy prior
predicts, and as an indication of the scale the effect can reach, rather than as
a law obeyed by every machine with a given mean dissipation.
\end{remark}

\subsection*{7.4 Summary of the behaviour}

Fig.~\ref{fig:1} collects the results. The mean efficiency interpolates
monotonically between the two limits, lying above the deterministic curve
\eqref{eq:det} everywhere. The absolute gap
$\langle\eta\rangle-\eta_{\mathrm{det}}$ closes as $\alpha^{2}$ at small
$\alpha$, rises to a maximum of about $0.136$ near $\alpha\approx3.4$, and then
decays to zero: at $\alpha=10^{3}$ it is $5.3\times10^{-3}$ and at $\alpha=10^{6}$
it is $1.2\times10^{-5}$. What grows logarithmically at large $\alpha$ is the
\emph{ratio} \eqref{eq:ratio}, since the deterministic value falls faster than
the mean; the two statements are easily confused and only the second is a
statement about how visible the effect is relative to the baseline. Both asymptotic forms are shown
only over the ranges in which they are accurate; the crossover region
$\alpha\sim1$ is described by neither and requires \eqref{eq:main} itself.

\begin{figure}[H]
\centering
\includegraphics{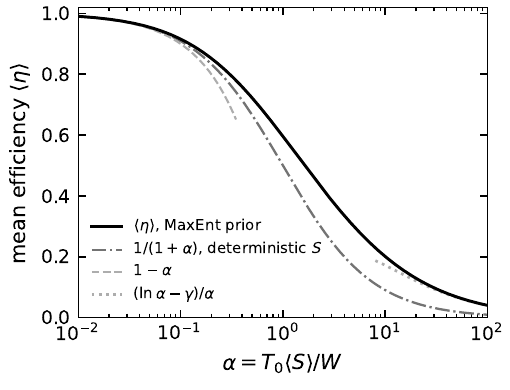}
\caption{Mean efficiency \eqref{eq:main} under the maximum-entropy prior (solid)
against $\alpha=T_0\langle S\rangle/W$, compared with the deterministic value
$1/(1+\alpha)$ (dash-dotted). Light dashed and dotted curves show the
near-reversible form $1-\alpha$ and the strongly dissipative form
$(\ln\alpha-\gamma)/\alpha$, each plotted only where it is accurate. The vertical
separation between the solid and dash-dotted curves is the fluctuation
enhancement of Proposition~\ref{prop:jensen}; on this linear ordinate it peaks
near $\alpha\approx3.4$ and closes at both ends, while the \emph{ratio} of the
two curves grows without bound}
\label{fig:1}
\end{figure}

\section*{8. Relation to existing results}

\subsection*{8.1 What is known about efficiency fluctuations}

The efficiency of a small machine has been studied as a fluctuating quantity for
more than a decade, and the results are considerably sharper than anything
obtained here. Verley \emph{et al.} established that the efficiency of a
stochastic machine obeys a large-deviation principle at long times and that its
rate function has universal features, including the result that the reversible
(Carnot) efficiency is the \emph{least} likely value under symmetric
driving~\cite{Verley2014,VerleyPRE2014}. Manikandan \emph{et al.} extended the
universality analysis from finite state spaces to arbitrary ones and identified
the conditions under which the universal features
fail~\cite{Manikandan2019}. Proesmans and Van den Broeck worked out the
efficiency statistics explicitly in five solvable models~\cite{Proesmans2015},
and Muratore-Ginanneschi and Schwieger obtained the efficiency at maximum power
for sub-micron engines by solving an optimal-transport problem for the driving
protocol~\cite{Muratore2015}. Experimentally, the Brownian Carnot engine of
Mart\'inez \emph{et al.} realises this regime directly~\cite{Martinez2016}.

What these results share is their starting point. In each case a dynamical model
is specified --- a potential and a protocol, a set of transition rates, a
Langevin equation --- and the statistics of efficiency are then \emph{derived}
from it. The output is correspondingly strong: a full distribution, or a rate
function, or an exact optimum.

\subsection*{8.2 What is different here}

The present construction differs in four respects, and it is weaker in all
four.

First, in the works above both the delivered work and the absorbed heat
fluctuate, and the efficiency is the ratio of two random variables. Here, by
Assumption~1, the delivered work is conditioned on and only the dissipation
fluctuates. This is a genuine restriction, not a simplification of the same
problem.

Second, and more importantly, no dynamics are used. The distribution of entropy
production is not derived; it is \emph{assigned}, as the maximum-entropy
distribution consistent with a declared information state. Nothing about the
machine enters beyond the two numbers $W$ and $T_0\langle S\rangle$.

Third, the mathematical machinery is not new: the bounds of Sections~5 and~6 are
classical moment-problem results (Remark~\ref{rem:moment}) evaluated for a
particular $\eta$, whereas the works cited above develop stochastic
thermodynamics proper. Fourth, the output is a single number rather than a
distribution. Section~3
established the price of this: the mean dissipation determines the mean
efficiency only from below, so the number obtained is a benchmark attached to an
information state, not a prediction attached to a machine.

The trade is therefore explicit. The results cited above require a model and
deliver a distribution; the present one requires an energy budget and delivers a
reference value together with a prior-free inequality. These are not competing
answers to one question. They answer different questions, and the second is worth
asking only when the first cannot be.

\subsection*{8.3 Why a mean, when the literature works with distributions}

It is reasonable to ask why the efficiency literature concentrates on
distributions and rate functions rather than on means, and the answer bears
directly on Section~2.

Stochastic efficiency is a ratio whose denominator fluctuates. In the Brownian
engine treated by Proesmans and Van den Broeck the efficiency is, explicitly, the
ratio of two correlated Gaussian variables~\cite{Proesmans2015}; such a ratio has
a Cauchy-like distribution, for which no moment of any order exists. This is
stated as a general property of the conventional figure of merit. The review of
the field records that for $\tilde\eta=-W/Q_{\rm h}$ the moments
$\langle\tilde\eta^{k}\rangle$ diverge for every $k\ge1$, with
$\rho(\tilde\eta)\sim\tilde\eta^{-2}$, because the stochastic input heat in the
denominator can be arbitrarily close to zero~\cite{Holubec2022}; Polettini,
Verley and Esposito put it in the same terms, that the distribution affords no
moments of any order, so that there is neither an average efficiency nor a
mean-square error~\cite{Polettini2015}. Working with $P(\eta)$ and its
large-deviation function is therefore not a stylistic preference but a
necessity.

The present setting escapes it, and the escape is the reason for the choice of
figure of merit. Here the denominator is the consumed exergy $W+T_0S$ rather than
the input heat, so numerator and denominator vanish together and the ratio is
confined to $[0,1]$ whenever $W$ and $S$ are non-negative
(Proposition~\ref{prop:jointexist}); every moment then exists, and the
moment-problem machinery of Sections~4 and~5 has something to act on.
Proposition~\ref{prop:exist} locates the one way back to the pathology --- a
dissipation admitting negative values, which restores a pole inside the physical
region --- and Assumption~\ref{as:pos} excludes it.

This inverts the reading Section~2 might otherwise invite.
Assumption~\ref{as:pos} is not a restriction that narrows an otherwise general
theory; together with the exergetic definition it is the condition under which
a moment-based account of stochastic efficiency is possible, in a setting where
the conventional definition admits none. The paper is not proposing a mean in
place of a distribution as a matter of taste: it is identifying the figure of
merit for which the mean exists, proving exactly when, and then asking what two
measured moments of it determine.

\subsection*{8.4 Relation to the fluctuation theorems}

A clarification is required, because the central object of this paper resembles
one that is already famous. The weight
$e^{-S/\langle S\rangle}$ appearing in the prior \eqref{eq:prior} coincides in
form with the kernel of the fluctuation theorems when $\langle S\rangle=k_B$,
and the integral fluctuation theorem $\langle e^{-S/k_B}\rangle=1$ is among the
best-known results in the field~\cite{Seifert2012}. The resemblance is formal.
Here the weight is a maximum-entropy prior --- an expression of what is not
known about $S$ --- rather than a ratio of forward and reverse path measures. No
fluctuation-theorem result is claimed, and none is used in Sections~3 and~4.

The fluctuation theorem does appear in the paper, but in a different role. In
Appendix~A it is imposed as a \emph{constraint} on the prior, in order
to test whether restricting the support to $S\ge0$ drives the result; the outcome
(Table~\ref{tab:ftprior}) is that it does not. That is a robustness check on an assumption, not a
derivation of a physical law, and the distinction should be kept clear in any
reading of the present work.

There is also a structural reason, independent of the maximum-entropy reading,
why no fluctuation-theorem bound on $\eta$ is available to be used here.
Efficiency is a ratio of two quantities that are both odd under path reversal,
so it is even, $\eta(\Gamma^{\dagger})=\eta(\Gamma)$. Uncertainty relations and
fluctuation-theorem bounds are derived for observables antisymmetric under time
reversal, and the relaxations available do not reach the symmetric case: Indo and
Hasegawa admit observables satisfying
$\phi(\Gamma^{\dagger})\phi(\Gamma)\le0$, which a positive even observable fails
identically~\cite{Indo2025}. Together with the divergence of the moments of
$-W/Q_{\rm h}$ recorded in Section~8.3, this is why the fluctuation literature
bounds efficiency \emph{distributions} rather than efficiency \emph{means}, and
why the ceiling of Section~6.5 is reached by composing an uncertainty relation
for the work current with the moment problem rather than by applying one to
$\eta$ itself.

\subsection*{8.5 Where the two approaches meet}

The natural point of contact is Proposition~\ref{prop:gamma}. Given a dynamical model, one may
compute the distribution of entropy production and, where it is well
approximated within the Gamma family \eqref{eq:gamma}, read off the shape
parameter $k$; \eqref{eq:gammares} then returns the mean efficiency directly.
Conversely, an efficiency measurement together with the mean dissipation
determines an effective $k$ and hence $\langle\ln S\rangle$, a second moment of
the dissipation distribution that is not otherwise easy to access. In that sense
the present construction is best regarded as a coarse complement to the
large-deviation machinery: it extracts limited distributional information from
measurements too crude to support a rate function, and it does so with a stated
inequality --- Proposition~\ref{prop:jensen} --- that holds whatever the underlying dynamics turn
out to be.
\subsection*{8.6 Efficiency bounds from the uncertainty relations}

The uncertainty relations have themselves been used to bound machine efficiency,
and the relationship to Section~6.5 should be stated precisely, because the
bounded object is different in each case.

Pietzonka, Barato and Seifert obtain a universal bound on the efficiency of a
molecular motor~\cite{Pietzonka2016},
$\eta\le\left(1+vk_BT/Df\right)^{-1}$, with $v$ the mean velocity, $D$ the
diffusion coefficient of the displacement and $f$ the external force. The
structural resemblance to the lower bound \eqref{eq:jensen} is real, and so is the
direction of the dependence on dispersion: $D$ sits in the denominator of the
correction, so a noisier output raises their ceiling, just as a noisier
dissipation raises $\mathcal{U}$ here. The statements differ in what fluctuates
and in what is bounded. Theirs constrains the ratio of mean output to mean input
for a motor in a steady state --- a deterministic number --- using the dispersion
of the \emph{output} current. Monnai~\cite{Monnai2025} likewise derives two-sided
bounds on $\eta=\langle W\rangle/\langle Q_1\rangle$, again a ratio of means, and
does not analyse the separation between them.

The object bounded here is $\langle W/(W+T_0S)\rangle$, the mean of a fluctuating
ratio, which differs from the ratio of means by exactly the quantity Section~6.4
studies and which the counterexample there shows can fail in either direction.

The relation to~\cite{Pietzonka2016} is exact rather than analogical.
Proposition~\ref{prop:pareto} composes their dynamical input --- the uncertainty
relation applied to the work current --- with the moment ceiling, and returns
\eqref{eq:pareto}, whose $\varsigma^{2}\to0$ limit \eqref{eq:pbslimit} is their
bound with the correct constant. Their result is therefore the
zero-dissipation-variance member of a one-parameter family that is strictly
increasing in that parameter. The practical consequence is a warning rather than a
generalisation for its own sake: a reader who measures $v$, $D$ and $f$, forms
$\left(1+vk_BT_0/Df\right)^{-1}$ and reads it as a ceiling on the mean trajectory
efficiency is making exactly the substitution Section~6.4 shows to be unsafe, and
\eqref{eq:pareto} says what the correct ceiling is.

The same substitution does a second job that neither reference makes:
Corollary~\ref{cor:widthfloor} floors the \emph{width} of the sharp interval, so
one dynamical inequality both caps the efficiency and bounds how well two
measurements can determine it. Pietzonka et al.\ and Monnai constrain where the
efficiency sits; Proposition~\ref{prop:pareto} corrects that constraint for the
mean of the ratio, and Corollary~\ref{cor:widthfloor} says it cannot be sharpened
away.

One further contrast is worth drawing. Their bounds tighten as the machine becomes
more precise, and are therefore most informative for a well-behaved device.
Corollary~\ref{cor:widthfloor} runs the other way: it is a statement about the
observer's residual ignorance, and it is largest for the most reproducible
machine. Within the class of dynamics the relation admits --- and that class is
narrow, as Section~10.4 sets out --- both statements reach the machine through two
measured relative variances only.

\section*{9. A worked machine: a motor with futile cycles}

The bounds of Sections~3--6 are statements about an information state, and the
question they invite is whether that information state ever binds: for an actual
machine, with an actual dissipation distribution, is \eqref{eq:interval}
informative, and is its upper edge reachable? This section answers both for a
standard chemomechanical motor. The model is exactly solvable, so nothing below
is simulated.

\subsection*{9.1 The model and its exact moments}

A motor consumes free energy $\Delta\mu$ per turnover. With probability $p$ the
turnover is \emph{productive}: it advances the load through one step, delivering
work $w<\Delta\mu$ and dissipating $\Delta\mu-w$. With probability $1-p$ it is
\emph{futile}: the chemical step completes, no work is delivered, and the whole
of $\Delta\mu$ is dissipated. Futile cycles of this kind are the standard
coarse-grained description of slippage in molecular motors, and $p$ is fixed by
the ratio of the two outgoing rates at the branch point.

The machine is observed until it has delivered $N$ productive steps. This
stopping rule is what puts the model in the information state of Section~2: the
delivered work
\begin{equation}
W\;=\;N\,w
\label{eq:motorW}
\end{equation}
is deterministic, while the dissipation fluctuates through the number $M$ of
futile cycles, which is negative-binomial with $N$ successes and success
probability $p$. Writing $T_0S=N(\Delta\mu-w)+M\,\Delta\mu$ and using
$\langle M\rangle=N(1-p)/p$, $\mathrm{Var}(M)=N(1-p)/p^{2}$,
\begin{equation}
\alpha\;=\;\frac{(\Delta\mu-w)+\Delta\mu\,(1-p)/p}{w},
\qquad
\sigma^{2}\;=\;\frac{1}{N}\,
\frac{\Delta\mu^{2}(1-p)/p^{2}}{\left[(\Delta\mu-w)+\Delta\mu(1-p)/p\right]^{2}} .
\label{eq:motorparams}
\end{equation}
The loss fraction $\alpha$ is set by the chemistry and the load alone; the
relative variance carries the whole of the size dependence and falls as $1/N$,
which is Section~10.1 realised in a specific machine. The mean efficiency
$\langle\eta\rangle=\langle W/(W+T_0S)\rangle$ is a convergent sum over the
negative-binomial law and is evaluated below to machine precision.

\subsection*{9.2 A dissipation floor sharpens the ceiling}

This motor cannot dissipate nothing: each productive step costs at least
$\Delta\mu-w$, so $T_0S\ge N(\Delta\mu-w)$ with certainty. In the normalised
variable $x=S/\langle S\rangle$ of \eqref{eq:etax} this is a known floor
$x\ge b$ with
\begin{equation}
b\;=\;\frac{\Delta\mu-w}{(\Delta\mu-w)+\Delta\mu(1-p)/p}\;\in\;[0,1).
\label{eq:motorb}
\end{equation}
Proposition~\ref{prop:upper} does not use such a floor, and pays for it: its
extremal law \eqref{eq:extremal} places an atom at $x=0$, which this machine
forbids. The floor is itself a measurable number --- the reversible cost of the
delivered output, divided by the measured mean dissipation --- and it tightens
the ceiling.

\begin{proposition}[Upper bound with a known dissipation floor]
\label{prop:upperfloor}
Let $x\ge b$ almost surely, with $0\le b<1$, $\langle x\rangle=1$ and
$\langle x^{2}\rangle=1+\sigma^{2}$, and let $\eta(x)=(1+\alpha x)^{-1}$ with
$\alpha>0$. Put
\begin{equation}
q\;=\;\frac{\sigma^{2}}{\sigma^{2}+(1-b)^{2}},
\qquad
c\;=\;1+\frac{\sigma^{2}}{1-b}\;>\;1 .
\label{eq:floorqc}
\end{equation}
Then
\begin{equation}
\langle\eta\rangle\;\le\;\mathcal{U}_{b}(\alpha,\sigma^{2})
\;\equiv\;\frac{q}{1+\alpha b}+\frac{1-q}{1+\alpha c},
\label{eq:upperfloor}
\end{equation}
and the bound is attained, by the two-point law with mass $q$ at $b$ and $1-q$
at $c$. At $b=0$ it reduces to Proposition~\ref{prop:upper}.
\end{proposition}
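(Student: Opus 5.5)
The plan is to repeat the majorant argument of Proposition~\ref{prop:upper} with the contact point at $0$ moved to the floor $b$. Take $q(x)$ to be the quadratic Hermite interpolant of $\eta$ at the nodes $\{b,c,c\}$. It agrees with $\eta$ in value at $b$, and in value and slope at $c$. Since $1+\alpha x>0$ on $x\ge b\ge0$, the sign of $q-\eta$ is the sign of the cubic $P(x)=q(x)(1+\alpha x)-1$. This cubic vanishes at $b$ and has a double zero at $c$, so it factorises automatically as
\[
P(x)=\kappa\,(x-b)(x-c)^{2},
\]
with no expansion needed.

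The only real point is the sign of $\kappa$. The leading coefficient of $P$ is $\alpha$ times that of $q$, and the leading coefficient of $q$ is the divided difference $\eta[b,c,c]$. For $\eta=(1+\alpha x)^{-1}$, the same computation as \eqref{eq:c3} gives
\[
\eta[b,c,c]=\frac{\alpha^{2}}{(1+\alpha b)(1+\alpha c)^{2}}>0 .
\]
Hence $\kappa>0$, so $P\ge0$ on $x\ge b$ and therefore $q\ge\eta$ on the support. Taking expectations gives $\langle\eta\rangle\le\langle q\rangle$. The right-hand side depends on the law only through $\langle x\rangle=1$ and $\langle x^{2}\rangle=1+\sigma^{2}$.

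To evaluate $\langle q\rangle$ without expanding coefficients, I will exhibit an admissible law on which it is easy to compute. Write $u=1-b>0$ and $v=c-1=\sigma^{2}/u$, so that $uv=\sigma^{2}$. As in the proof of Proposition~\ref{prop:third}, a two-point law on $\{1-u,1+v\}$ with weights $v/(u+v)$ and $u/(u+v)$ has unit mean and variance $uv=\sigma^{2}$. The weight at $b$ is
\[
\frac{v}{u+v}=\frac{\sigma^{2}/u}{u+\sigma^{2}/u}=\frac{\sigma^{2}}{\sigma^{2}+(1-b)^{2}},
\]
which is exactly the $q$ of \eqref{eq:floorqc}. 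This law is supported in $[b,\infty)$, so it is admissible. Because $q=\eta$ at $b$ and at $c$, its value of $\langle q\rangle$ equals $q\,\eta(b)+(1-q)\,\eta(c)$. That number is also $\langle q\rangle$ for every other admissible law, since $q$ is a quadratic and the first two moments are fixed. This gives \eqref{eq:upperfloor} and attainment at the same time.

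Setting $b=0$ gives $q=\sigma^{2}/(1+\sigma^{2})$, $c=1+\sigma^{2}$ and $\eta(0)=1$, which reproduces \eqref{eq:upper}. The step needing the most care is the positivity of $\kappa$ and the fact that the factor $(x-b)$ is non-negative only on the restricted support. This is exactly why the floor must be imposed: on all of $[0,\infty)$ the majorant would fail for $x<b$.
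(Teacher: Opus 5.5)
Your proposal is correct and follows the paper's proof essentially step for step: the same quadratic majorant interpolating $\eta$ in value at $b$ and in value and slope at $c$, the same cubic identity $\kappa\,(x-b)(x-c)^{2}$ with $\kappa=\alpha^{3}/[(1+\alpha b)(1+\alpha c)^{2}]$, and the same evaluation of the majorant's mean on the two-point law, which also gives attainment. The only differences are cosmetic: you obtain $\kappa$ from the divided difference where the paper expands the interpolation conditions, and you use $q$ both for the polynomial and for the weight in \eqref{eq:floorqc}, a clash the paper avoids by calling the polynomial $Q$.
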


\begin{proof}
Let $Q$ be the quadratic determined by $Q(b)=\eta(b)$, $Q(c)=\eta(c)$ and
$Q'(c)=\eta'(c)$. Since $1+\alpha x>0$ on the support, the sign of $Q-\eta$ is
that of $Q(x)(1+\alpha x)-1$, and expanding the interpolation conditions gives
the identity
\begin{equation}
Q(x)\left(1+\alpha x\right)-1
\;=\;\frac{\alpha^{3}}{(1+\alpha b)(1+\alpha c)^{2}}\;(x-b)\,(x-c)^{2},
\label{eq:floorcubic}
\end{equation}
which is non-negative for every $x\ge b$. Hence $Q\ge\eta$ pointwise on the
support and $\langle\eta\rangle\le\langle Q\rangle$, a quantity depending on the
distribution only through $\langle x\rangle$ and $\langle x^{2}\rangle$. The
two-point law $\Pr[x=b]=q$, $\Pr[x=c]=1-q$ with \eqref{eq:floorqc} has exactly
those two moments, so $\langle Q\rangle$ equals the right-hand side of
\eqref{eq:upperfloor}; equality in $Q\ge\eta$ requires by \eqref{eq:floorcubic}
that the support lie in $\{b,c\}$, and that law attains the bound. Setting
$b=0$ gives $q=\sigma^{2}/(1+\sigma^{2})$, $c=1+\sigma^{2}$ and recovers
\eqref{eq:upper}.
\end{proof}

\noindent
The bound was checked numerically in two independent ways: random discrete laws
satisfying the three constraints never exceed \eqref{eq:upperfloor}, and direct
maximisation over four-point laws reaches it to six decimal places at every
$(\alpha,\sigma^{2},b)$ tested. The mechanism is the one Section~4.3 identifies:
the ceiling is set by how much probability can sit at the smallest admissible
dissipation, and a floor moves that point away from zero.

\subsection*{9.3 What the bounds say for this motor}

Table~\ref{tab:motor} evaluates the interval for $\Delta\mu=1\,k_{\mathrm B}T$
and $w=0.6\,k_{\mathrm B}T$, so that a productive step costs at least
$0.4\,k_{\mathrm B}T$. The exact mean efficiency lies strictly inside
\eqref{eq:interval} at every operating point, and the floor
\eqref{eq:upperfloor} removes between $40$ and $67$ per cent of the width. Two
moments and one thermodynamic constant therefore locate $\langle\eta\rangle$ of
this machine to a few per cent at $N=5$ and to better than a per cent at
$N=100$, with no prior and no solution of the dynamics.

\begin{table}[H]
\centering
\caption{The motor of Section~9.1 at $\Delta\mu=1$, $w=0.6$. Floor is
Proposition~\ref{prop:jensen}, $\mathcal U$ is Proposition~\ref{prop:upper},
$\mathcal U_b$ is Proposition~\ref{prop:upperfloor}; the last column is the
fraction of the interval width that the dissipation floor removes. The exact
value is a sum over the negative-binomial law, not a simulation}
\label{tab:motor}
\begin{tabular}{rrrrrrrrr}
\toprule
$N$ & $p$ & $\alpha$ & $\sigma^{2}$ & $b$ & floor & $\langle\eta\rangle$ &
$\mathcal U$ & $\mathcal U_{b}$ \\
\midrule
1 & 0.50 & 2.3333 & 1.0204 & 0.2857 & 0.30000 & 0.41589 & 0.59167 & 0.45000 \\
1 & 0.80 & 1.0833 & 0.7396 & 0.6154 & 0.48000 & 0.53554 & 0.62444 & 0.54000 \\
1 & 0.95 & 0.7544 & 0.2704 & 0.8837 & 0.57000 & 0.58474 & 0.61479 & 0.58500 \\
2 & 0.80 & 1.0833 & 0.3698 & 0.6154 & 0.48000 & 0.51564 & 0.56387 & 0.52000 \\
5 & 0.80 & 1.0833 & 0.1479 & 0.6154 & 0.48000 & 0.49699 & 0.51714 & 0.50000 \\
5 & 0.95 & 0.7544 & 0.0541 & 0.8837 & 0.57000 & 0.57482 & 0.57977 & 0.57500 \\
20 & 0.80 & 1.0833 & 0.0370 & 0.6154 & 0.48000 & 0.48466 & 0.48981 & 0.48571 \\
100 & 0.80 & 1.0833 & 0.0074 & 0.6154 & 0.48000 & 0.48095 & 0.48199 & 0.48119 \\
\bottomrule
\end{tabular}
\end{table}

\subsection*{9.4 The extremal machine is a physical one}

Proposition~\ref{prop:upperfloor} is attained by a two-point dissipation law,
and this motor produces one. At $N=1$ the number of futile cycles is
$M=0,1,2,\dots$ with geometric weights; as $p\to1$ the weight beyond $M=1$
becomes negligible and the dissipation becomes two-valued --- the extremal law
of the proposition, with the atom at the floor rather than at zero. A motor that
almost never slips is therefore not merely close to the bound but converges to
it, as Table~\ref{tab:attain} shows: the ratio
$\langle\eta\rangle/\mathcal U_{b}$ is $0.9917$ at $p=0.8$ and $1.000000$ to six
decimals at $p=0.999$.

This is the converse of the reading that Section~4.3 gives of
\eqref{eq:extremal}. There the extremal machine was intermittently reversible
and could be dismissed as an idealisation; here the same extremal structure
appears, with the reversible atom replaced by the thermodynamic floor, in a
machine defined by rates. Saturation of the ceiling is thus a statement about
operating regime --- rare, all-or-nothing losses --- rather than about an
unattainable limit.

\begin{table}[H]
\centering
\caption{Approach to the ceiling at $N=1$ as futile cycles become rare. The
dissipation law tends to the two-point extremal law of
Proposition~\ref{prop:upperfloor}, and the bound is saturated}
\label{tab:attain}
\begin{tabular}{rrrrrr}
\toprule
$p$ & $\alpha$ & $\sigma^{2}$ & $b$ & $\langle\eta\rangle$ &
$\langle\eta\rangle/\mathcal U_{b}$ \\
\midrule
0.500 & 2.3333 & 1.02041 & 0.28571 & 0.415888 & 0.924196 \\
0.800 & 1.0833 & 0.73964 & 0.61538 & 0.535545 & 0.991749 \\
0.900 & 0.8519 & 0.47259 & 0.78261 & 0.568947 & 0.998152 \\
0.950 & 0.7544 & 0.27042 & 0.88372 & 0.584744 & 0.999562 \\
0.990 & 0.6835 & 0.06067 & 0.97537 & 0.596990 & 0.999983 \\
0.999 & 0.6683 & 0.00623 & 0.99750 & 0.599700 & 1.000000 \\
\bottomrule
\end{tabular}
\end{table}

\subsection*{9.5 The size of the correction, checked against a machine}

Proposition~\ref{prop:size} predicts that the fluctuation correction to the
deterministic accounting is $\alpha^{2}\sigma^{2}/(1+\alpha)^{3}$ to leading
order. The motor provides an independent test, since $\sigma^{2}=O(1/N)$ is
generated by the dynamics rather than assumed. At $p=0.8$ the measured excess
$\langle\eta\rangle-(1+\alpha)^{-1}$ divided by that prediction is $0.970$ at
$N=20$, $0.994$ at $N=100$ and $0.9985$ at $N=400$; at $p=0.95$ the
corresponding ratios are $0.957$, $0.991$ and $0.998$. The correction is a few
tenths of a per cent of the efficiency at $N=100$, and of order five per cent at
$N=1$: the regime in which the construction says something a deterministic
accounting does not is the regime of a handful of turnovers, exactly as
Section~10.1 argues.

\section*{10. Discussion}

\subsection*{10.1 When does the effect matter?}

The dimensionless group $\alpha=T_0\langle S\rangle/W$ is not by itself a measure
of smallness: a macroscopic engine losing a tenth of its input has
$\alpha\approx0.1$, comfortably within the range plotted in Fig.~\ref{fig:2}. What
distinguishes a small machine is not the size of $\alpha$ but the size of the
\emph{relative fluctuations} of the dissipation, and it is those that the
corrections of Sections~3 and~4 track.

\begin{proposition}[Magnitude of the fluctuation correction]
\label{prop:size}
For a prior with unit-mean shape parameter $k$, i.e.\ with relative variance
$\mathrm{Var}(S)/\langle S\rangle^2 = 1/k$,
\begin{equation}
\langle\eta\rangle_k-\eta_{\mathrm{det}}
\;=\;\frac{\alpha^2}{k\,(1+\alpha)^3}+O(k^{-2}),
\qquad
\frac{\langle\eta\rangle_k-\eta_{\mathrm{det}}}{\eta_{\mathrm{det}}}
\;=\;\frac{\alpha^2}{k\,(1+\alpha)^2}+O(k^{-2}).
\label{eq:size}
\end{equation}
\end{proposition}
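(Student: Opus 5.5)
The plan is a second-order Taylor (delta-method) expansion of $\eta(x)=(1+\alpha x)^{-1}$ about the mean $x=1$. The remainder will be controlled by the fact that the unit-mean Gamma law $p_k$ of \eqref{eq:gamma} concentrates at $x=1$ as $k\to\infty$, with central moments of order $k^{-\lceil j/2\rceil}$.

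Write $y=x-1$, so that $\langle y\rangle=0$ and $\langle y^2\rangle=1/k$. For $x\ge0$ the algebraic identity
\[
\frac{1}{1+\alpha x}=\frac{1}{1+\alpha}-\frac{\alpha y}{(1+\alpha)^2}+\frac{\alpha^2y^2}{(1+\alpha)^3}-\frac{\alpha^3y^3}{(1+\alpha)^3(1+\alpha x)}
\]
holds exactly; it is the geometric expansion in $\alpha y/(1+\alpha)$, truncated with its closed-form remainder. Taking expectations under $p_k$, the linear term vanishes and the quadratic term gives exactly $\alpha^2/[k(1+\alpha)^3]$. It then remains to show that the remainder $R_k=\alpha^3(1+\alpha)^{-3}\langle y^3/(1+\alpha x)\rangle$ is $O(k^{-2})$.

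This remainder is the only real obstacle. The naive estimate $|y^3/(1+\alpha x)|\le|y|^3$ gives only $\langle|y|^3\rangle=O(k^{-3/2})$, which is not enough. I would therefore expand one order further, using
\[
\frac{1}{1+\alpha x}=\frac{1}{1+\alpha}-\frac{\alpha y}{(1+\alpha)(1+\alpha x)},
\]
which gives
\[
\frac{y^3}{1+\alpha x}=\frac{y^3}{1+\alpha}-\frac{\alpha y^4}{(1+\alpha)(1+\alpha x)}.
\]
For the Gamma law the third central moment is exactly $\langle y^3\rangle=2/k^2$. Since $1+\alpha x\ge1$ on the support, the second piece is bounded in absolute value by $\alpha\langle y^4\rangle/(1+\alpha)$, and $\langle y^4\rangle=3/k^2+6/k^3=O(k^{-2})$. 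Both pieces are therefore $O(k^{-2})$, which gives the first statement in \eqref{eq:size} with an explicit constant. The cumulants of $p_k$ (the $j$-th cumulant is $(j-1)!/k^{j-1}$) supply these moments directly, so no step depends on the closed form \eqref{eq:gammares}. Nonnegativity of $x$ is the only place Assumption~\ref{as:pos} enters.

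The relative form follows by dividing by $\eta_{\mathrm{det}}=(1+\alpha)^{-1}$, which is independent of $k$ and so preserves the $O(k^{-2})$ error. As a consistency check, the leading term is half the small-$\sigma^2$ width \eqref{eq:width} divided by $1+\alpha$, which matches the location fraction $(1+\alpha)^{-1}$ reported in Section~4.5. At $k=1$ and small $\alpha$ it reproduces the $\alpha^2$ gap of \eqref{eq:gap}.
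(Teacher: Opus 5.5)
Your proposal is correct and follows the paper's own route. Both use a second-order expansion of $\eta$ about $x=1$, where $\langle x\rangle=1$ kills the linear term and $\mathrm{Var}(x)=1/k$ gives $\tfrac12\eta''(1)/k=\alpha^{2}/[k(1+\alpha)^{3}]$; the paper stops at ``$\cdots$'', while you make the remainder rigorous. You rightly note that $\langle|y|^{3}\rangle=O(k^{-3/2})$ is not enough, and you close the gap with the exact third central moment $2/k^{2}$ and the bound $1+\alpha x\ge1$. One slip in your closing consistency check: the leading term equals the width \eqref{eq:width} divided by $1+\alpha$, not half of it, and that is exactly why the location fraction is $(1+\alpha)^{-1}$.
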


\begin{proof}
Expand $\langle\eta(x)\rangle$ about $x=1$: since $\langle x\rangle=1$ the linear
term vanishes and $\langle\eta\rangle=\eta(1)+\tfrac12\eta''(1)\mathrm{Var}(x)+\cdots$
with $\eta''(1)=2\alpha^2(1+\alpha)^{-3}$ and $\mathrm{Var}(x)=1/k$.
\end{proof}

\noindent
Numerically the ratio of \eqref{eq:size} to the exact value from \eqref{eq:gammares} is
$0.987$ at $k=20$ and $0.9975$ at $k=100$, for $\alpha=1$. The practical content
is immediate. If a machine's dissipation is the sum of $N$ roughly independent
contributions then $k\sim N$, and the fluctuation correction falls as $1/N$: for
a macroscopic device it is unmeasurable, and $\eta_{\mathrm{det}}$ is the whole
story. The correction reaches tens of per cent only when $k$ is of order unity,
that is when the dissipation varies by of order $100\%$ from realisation to
realisation. That is the regime of single molecular machines and sub-micron
engines, and it is the only regime in which the present construction says
anything a deterministic accounting would not.

\subsection*{10.2 What could be measured}

Three uses follow, of quite different strength.

The first is a falsification test that involves no prior at all. By
Proposition~\ref{prop:jensen}, any machine whose measured mean efficiency satisfies
$\langle\eta\rangle<(1+\alpha)^{-1}$, with $\alpha$ obtained from the measured
work and mean dissipation, contradicts the framework outright --- either the
energy accounting \eqref{eq:budget} is wrong for that device, or $S$ is not
non-negative in the required coarse-grained sense, or the two measurements do not
refer to the same process. Because Proposition~\ref{prop:jensen} assumes nothing about the
distribution, this test cannot be evaded by adjusting a prior.

The second is inferential rather than predictive. Given measurements of
$\langle\eta\rangle$ and $\alpha$, equation~\eqref{eq:gammares} can be inverted for the
effective shape parameter $k$, which by the construction of Section~3.2 is in
one-to-one correspondence with $\langle\ln S\rangle$. An efficiency measurement
therefore reports a second functional of the dissipation distribution beyond its
mean. This is a modest quantity, but it is accessible from data too coarse to
support the reconstruction of a rate function, which is the situation in which
this framework is intended to be used at all.

The third is the sharpest, and it is a prediction about the trajectory record
rather than a statement about an efficiency. By
Proposition~\ref{prop:stability} the shortfall
$\mathcal{U}(\alpha,\sigma^{2})-\langle\eta\rangle$ is exactly $\langle\psi\rangle$,
and $\psi$ vanishes only at $x=0$ and $x=1+\sigma^{2}$. A machine measured close
to the ceiling must therefore spend a controlled fraction of its realisations
near zero dissipation and the remainder near $(1+\sigma^{2})\langle S\rangle$:
the dissipation record must be bimodal. Equation~\eqref{eq:stability} makes this
quantitative without requiring the bound to be attained --- at
$\alpha=\sigma^{2}=1$, a machine within $10^{-3}$ of the ceiling has at most
$6\%$ of its dissipation further than $0.5$ from $\{0,2\langle S\rangle\}$. The
test uses no prior, no dynamical model, and no measurement beyond the histogram
from which $\alpha$ and $\sigma^{2}$ were formed in the first place.

\subsubsection*{A concrete assay}

The framework needs the first two moments of $S$, or of $R=T_0S/W$, over repeated
observation windows, and single-molecule experiments already report the
ingredients. For a rotary motor such as $F_1$-ATPase the accounting closes
without a kinetic model: the chemical potential difference $\Delta\mu$ per
hydrolysis is fixed by the nucleotide concentrations in the buffer, the number
$n$ of hydrolysis events in a window is counted, and the delivered work
$W=\int\tau\,\mathrm{d}\theta$ is read from the trapped bead, so that the
budget \eqref{eq:budget} gives
\begin{equation}
T_0S\;=\;n\,\Delta\mu\;-\;W
\label{eq:assay}
\end{equation}
realisation by realisation. Toyabe \emph{et al.} measure these three quantities
together for $F_1$-ATPase~\cite{Toyabe2011}. Repeating over windows yields the
histogram of $S$, hence $\alpha$, $\sigma^{2}$, the bracket
\eqref{eq:interval}, and --- from the same histogram --- the bimodality test
above. For a translational motor such as kinesin the accounting is identical with
$W=F\,\Delta x$ against the load of an optical force clamp.

Two remarks attach to this. First, Assumption~\ref{as:pos} holds here in the
structural sense of Section~2.1(i) rather than by appeal to aggregation: the
motor consumes its fuel in the forward direction, so $n\Delta\mu\ge W$
realisation by realisation, whatever the size of the fluctuations. Second,
\eqref{eq:assay} determines the capture fraction of
Assumption~\ref{as:wcurrent}. It is $r=1$ whenever the counted fuel consumption
is complete, because the budget then charges the device with every unit of exergy
it received, including dissipation no probe resolves. What defeats this is not
hidden dissipation but hidden \emph{consumption}: futile or unresolved hydrolyses
raise the true input above $n\Delta\mu$ and leave $r<1$. That the energy balance
of these motors is in practice hard to close is documented --- for kinesin, Ariga,
Tomishige and Mizuno find the measured heat and work together fall well short of
the input free-energy change, and attribute the remainder to internal
dissipation~\cite{Ariga2018} --- so $r$ should be argued for in a given assay
rather than assumed.

\subsection*{10.3 Quantum machines}

Nothing in Sections~3--6 refers to classical dynamics. The bounds are statements
about a non-negative random variable and a convex function of it, so they apply
to a quantum device on exactly two conditions: that the accounting
\eqref{eq:budget} holds with a single dead-state temperature, and that the
entropy production is a genuine random variable satisfying
Assumption~\ref{as:pos}. It is worth being precise about when the second
condition holds, because that is where quantum mechanics makes trouble.

Under the two-point measurement scheme --- projective energy measurements before
and after the process --- work and entropy production are bona fide classical
random variables, the quantum fluctuation relations take their familiar
form~\cite{CampisiHanggiTalkner}, and every result of this paper transfers
without modification. The same is true of any protocol whose statistics are
obtained by measurement in the energy eigenbasis.

With initial coherence in that basis the situation changes, and not merely
technically. A no-go theorem forbids a genuine probability distribution for
quantum work that simultaneously respects the first law and reduces to the
two-point measurement result; the standard resolution is to admit a
\emph{quasi}probability, which may take negative or complex
values~\cite{PeiChenQuan}. Where that is the case the present results do not
merely go unproven --- they are not well posed, because the moment problem of
Sections~4 and~5 requires a probability measure to act on, and no such measure
exists. Extending the bracket \eqref{eq:interval} to coherent quantum machines
would therefore require a moment theory for signed measures, which is a different
undertaking from the one carried out here.

Within the measurement-based setting, however, the transfer is immediate, and
the parameters retain their meaning: $\alpha$ is the mean dissipated exergy per
unit useful work and $\sigma^{2}$ the relative variance of the measured entropy
production.

\subsection*{10.4 Scope of the dynamical bound}

Proposition~\ref{prop:pareto} and Corollary~\ref{cor:widthfloor} are the only
results here that import a hypothesis about the dynamics, and they inherit every
restriction attached to the inequality they use. Because those restrictions decide
when the bound may be quoted, they are set out rather than summarised.

\emph{Two conditions beyond the uncertainty relation.} The delivered work must
fluctuate, so Assumption~\ref{as:task} has to be relaxed --- which is why the
result appears in Section~6 and not in Section~4 --- and the covariance condition
$\mathrm{Cov}(R,W)\le0$ of Lemma~\ref{lem:cov} must hold. The second is checkable
from the same data the bound uses, and it fails when the dissipation grows
superlinearly in the delivered work; it should be verified, not assumed. The
capture fraction $r$ is not directly observable, and the bound weakens
monotonically as $r$ falls, so quoting it at $r=1$ presumes that the exergy
accounting charges the device with all of the entropy produced.
Section~10.2 gives the
condition under which that presumption is safe in a single-molecule assay, and
the failure mode that breaks it.

\emph{Network topology is not a restriction, but it is coupled to $r$.} The
uncertainty relation holds for any Markov jump network in a nonequilibrium steady
state, so nothing in Proposition~\ref{prop:pareto} confines the machine to a
single working cycle: competing and futile cycles sharing a state space are
admitted, and the bound is insensitive to the correlations between them. The two
conditions interact, however, and not favourably. A network with slip pathways is
also the setting in which the counted work current is charged with less than the
total entropy production, so precisely those machines whose topology the relation
covers most generally are the ones for which $r<1$ has to be taken seriously.

\emph{Steady state, and steady-state initialisation.} The finite-time
relation~\cite{Horowitz2017} requires stationarity throughout the observation
window, initialisation included; a machine watched from a prepared state over a
single cycle lies outside it. A periodically driven machine is admitted only for a
time-symmetric protocol. Under general time-dependent driving the relation fails
--- Koyuk and Seifert exhibit the failure in the slow-driving limit, where a
current's dispersion becomes negligible while heat continues to be
dissipated~\cite{Koyuk2020} --- and the generalised relation that replaces it
carries a response factor that can vanish, so a driven engine may admit no bound
from this argument at all.

\emph{Known failures.} The uncertainty relation is not universal. It fails for
Markov processes with unidirectional transitions~\cite{Pal2021}, for ballistic and
coherent transport, where time-reversal-breaking fields weaken it
sharply~\cite{Brandner2018}, and --- in a mechanically transparent counterexample
--- for classical pendulum clocks, whose precision is limited by dissipation far
more weakly than the relation permits~\cite{Pietzonka2022}; for time-discrete
Markov chains it holds in modified form~\cite{Proesmans2017}. Each exception is
inherited here.

\emph{What a fluctuation theorem alone gives.} Under the much weaker hypothesis
that some fluctuation theorem holds, a lower bound on $\rho$ survives but a far
weaker one. Timpanaro, Guarnieri, Goold and Landi give the tightest bound
derivable from the exchange fluctuation theorems~\cite{Timpanaro2019},
$\epsilon_{w}^{2}\ge\mathrm{csch}^{2}\!\left(g(\langle\Sigma_{\rm tot}\rangle/2)\right)$
with $g$ the inverse of $x\tanh x$, valid for classical and quantum systems
undergoing non-Markovian and non-stationary processes; Hasegawa and Vu give the
simpler $\epsilon_{w}^{2}\ge2/(e^{\langle\Sigma_{\rm tot}\rangle}-1)$~\cite{Hasegawa2019}.
Both agree with the steady-state relation as $\langle\Sigma_{\rm tot}\rangle\to0$
and decay exponentially thereafter --- a factor $1.5$ weaker at
$\langle\Sigma_{\rm tot}\rangle=1$, four times at $3$, eleven hundred times at
$10$. Inverted, they bound $\langle\Sigma_{\rm tot}\rangle$, and hence $\rho$,
below by a quantity growing only logarithmically in $1/\epsilon_{w}^{2}$: outside
the steady-state hypotheses the ceiling survives in form but not in strength.

\emph{The bound is not sharp.} Equality in \eqref{eq:pareto} requires the moment
ceiling to be attained, \emph{and} $\rho$ to sit at the uncertainty-relation
infimum, \emph{and} $\mathrm{Cov}(R,W)=0$. The last forces $R$ to be constant and
hence $\varsigma^{2}=0$, which is exactly where the first two can hold together;
away from that corner \eqref{eq:pareto} is a valid ceiling but not an attained
one. Whether the attainable maximum --- the moment problem restricted to laws a
dynamics obeying the relation can produce --- differs materially from it is open,
and is the natural next question.

\subsection*{10.5 Limitations}

Four restrictions should be kept in view, three of them stated as assumptions in
Section~2.

\emph{Access to the moments.} Section~6 shows that the bounds themselves do not
require Assumption~\ref{as:task}: they hold for a jointly fluctuating $(W,S)$ on
replacing the moments of the dissipation by those of the ratio $R=T_0S/W$. The
limitation is one of measurement rather than validity. Those moments must be
formed realisation by realisation --- the ratio of separately measured averages
is not a valid substitute, and Section~6.4 exhibits joint laws for which
substituting it yields a floor twice the true mean efficiency --- and
$\langle R\rangle$ may diverge even when $\langle W\rangle$ and $\langle S\rangle$
are both finite, in which case the bounds are true but empty.

\emph{Support of the dissipation.} Assumption~2 restricts $S$ to non-negative
values, and Proposition~\ref{prop:exist} shows this is not optional: without it the mean does
not exist. The quantity computed is therefore a property of coarse-grained
dissipation, over a cycle or an ensemble, and not of individual trajectories,
for which negative excursions are precisely the interesting feature.

\emph{What the accounting requires.} Equation~\eqref{eq:budget} needs a single
reference (dead-state) temperature $T_0$ and the \emph{total} entropy generated;
it does not require the device to touch only one reservoir. The Gouy--Stodola
relation reads $W_{\mathrm{lost}}=T_0S_{\mathrm{gen}}$ whatever the internal
reservoir structure, so \eqref{eq:budget} is an exergy balance and covers
multi-reservoir devices provided the input is measured as exergy rather than as
raw heat. What is genuinely required is that the \emph{task} be the fixed
quantity, in the sense of Assumption~\ref{as:task}; see the remark below.

\emph{The value is not determined.} Corollary~1 remains the governing
limitation: mean dissipation bounds mean efficiency from below and no more. Every
numerical value in this paper is attached to a declared information state.

\subsection*{10.6 Outlook}

Two extensions suggest themselves, one of which is well defined enough to state
precisely.

The moment hierarchy of Sections~5 and~6 terminates in a specific sense: by
\eqref{eq:m3min} the bracket closes at the smallest admissible third moment, so
no fourth moment is needed to pin the value there. What remains open is the
behaviour in between --- whether a fourth moment tightens \eqref{eq:bracket3}
usefully at intermediate $m_{3}$, and whether the extremal laws remain two-point
as the hierarchy is continued.

The substantive extension is to the case in which $\langle R\rangle$ diverges.
Section~6 shows that the bounds hold for a jointly fluctuating $(W,S)$ but become
vacuous when the density of $W$ does not vanish at the origin --- exactly the
regime in which a machine occasionally delivers almost no work. Since $\eta$
remains bounded there by Proposition~\ref{prop:jointexist}, the mean efficiency
is well defined and merely inaccessible to the present method; bounding it would
require functionals of the joint law other than moments of the ratio, such as a
truncated moment or a bound on $\Pr[W<w]$ near the origin. That is the natural
next problem, and it connects directly to the ratio-of-random-variables structure
of the efficiency-fluctuation literature~\cite{Proesmans2015}; unlike the moment
hierarchy, it is not a matter of continuing an existing construction.

\appendix
\renewcommand{\theequation}{A.\arabic{equation}}
\setcounter{equation}{0}
\renewcommand{\theproposition}{A.\arabic{proposition}}
\setcounter{proposition}{0}

\section*{Appendix A. The fluctuation-theorem prior}
\addcontentsline{toc}{section}{Appendix A}

Section~2 restricts the prior to $S\ge0$, and Proposition~\ref{prop:exist} shows
that some such restriction is unavoidable if a mean efficiency is to exist. This
appendix addresses the separate question of whether the \emph{particular} prior
used on that half-line drives the results, by constructing the maximum-entropy
prior that admits negative entropy production while respecting the integral
fluctuation theorem, and comparing the two.

Throughout we work in the dimensionless entropy production $u = S/k_B$, in which
the fluctuation theorem takes its natural form, with density $q(u)$. Since
$H_S[p]=H_u[q]+\ln k_B$, the constant offset does not affect the maximiser.

\subsection*{A.1 The variational problem}

We maximise
\begin{equation}
H[q]\;=\;-\int_{\mathbb{R}}q(u)\ln q(u)\,\mathrm{d}u
\end{equation}
over densities on all of $\mathbb{R}$ --- the support is \emph{not} restricted
--- subject to
\begin{equation}
\text{(C0)}\ \int q\,\mathrm{d}u=1,
\qquad
\text{(C1)}\ \int e^{-u}q\,\mathrm{d}u=1,
\qquad
\text{(C2)}\ \int u\,q\,\mathrm{d}u=m,
\end{equation}
where (C1) is the integral fluctuation theorem $\langle e^{-S/k_B}\rangle=1$
\cite{Seifert2012} and $m=\mu/k_B$.

\subsection*{A.2 Stationarity}

With multipliers $\lambda_0,\lambda_1,\lambda_2$, the first variation of
\[
H[q]-\lambda_0\!\left(\textstyle\int q-1\right)
     -\lambda_2\!\left(\textstyle\int e^{-u}q-1\right)
     -\lambda_1\!\left(\textstyle\int uq-m\right)
\]
vanishes when $-\ln q(u)-1-\lambda_0-\lambda_1u-\lambda_2e^{-u}=0$, since
$\delta(-q\ln q)/\delta q=-\ln q-1$. Absorbing the constant into the
normalisation,
\begin{equation}
q(u)\;=\;\frac{1}{Z}\exp\!\left(-\lambda_1u-\lambda_2e^{-u}\right).
\label{eq:Aform}
\end{equation}
Each constraint contributes its own function to the exponent, as expected.

\subsection*{A.3 Normalisation}

\begin{proposition}
\label{prop:AZ}
$\displaystyle Z(\lambda_1,\lambda_2)=\int_{\mathbb{R}}
e^{-\lambda_1u-\lambda_2e^{-u}}\,\mathrm{d}u
=\Gamma(\lambda_1)\,\lambda_2^{-\lambda_1}$, convergent if and only if
$\lambda_1>0$ and $\lambda_2>0$.
\end{proposition}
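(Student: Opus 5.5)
The plan is to compute the integral directly by the substitution $t=e^{-u}$, which turns it into a Gamma integral, and then to read off convergence from the behaviour of the integrand at the two ends of the real line. With $t=e^{-u}$, so that $u=-\ln t$ and $\mathrm{d}u=-\mathrm{d}t/t$, the map $u\in\mathbb{R}\mapsto t\in(0,\infty)$ is a smooth orientation-reversing bijection. The factor $e^{-\lambda_1u}$ becomes $t^{\lambda_1}$, so
\[
Z=\int_0^\infty t^{\lambda_1-1}e^{-\lambda_2t}\,\mathrm{d}t .
\]
For $\lambda_2>0$ the further rescaling $s=\lambda_2t$ gives $\lambda_2^{-\lambda_1}\int_0^\infty s^{\lambda_1-1}e^{-s}\,\mathrm{d}s=\Gamma(\lambda_1)\lambda_2^{-\lambda_1}$, provided the Gamma integral converges, which holds exactly when $\lambda_1>0$. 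Since the integrand is positive, Tonelli justifies the change of variables without any prior convergence assumption. Finiteness of $Z$ is therefore equivalent to finiteness of the transformed integral.

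For the ``only if'' direction I would split the parameter space into cases, using the transformed integral throughout.

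\emph{Case $\lambda_2<0$:} the integrand $t^{\lambda_1-1}e^{|\lambda_2|t}$ grows exponentially as $t\to\infty$, so the integral diverges for every $\lambda_1$.

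\emph{Case $\lambda_2=0$:} the integrand is $t^{\lambda_1-1}$, and $\int_0^\infty t^{\lambda_1-1}\,\mathrm{d}t$ diverges at one end or the other for every real $\lambda_1$. In the original variable this is the statement that $e^{-\lambda_1u}$ is never integrable over $\mathbb{R}$.

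\emph{Case $\lambda_2>0$, $\lambda_1\le0$:} near $t=0$ the integrand behaves like $t^{\lambda_1-1}$ with exponent at most $-1$, which is non-integrable at the origin. In the original variable this corresponds to $u\to+\infty$, where $e^{-\lambda_2e^{-u}}\to1$ and $e^{-\lambda_1u}$ fails to decay.

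Together these cases give convergence if and only if $\lambda_1>0$ and $\lambda_2>0$.

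I expect no real obstacle. The only care needed is in the boundary case $\lambda_2=0$, and in phrasing the divergence arguments as comparisons of a positive integrand on a neighbourhood of each end, so that the equivalence is airtight rather than heuristic.
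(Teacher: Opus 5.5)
Your proposal is correct and is essentially the paper's proof: you use the same substitution $v=e^{-u}$, which reduces $Z$ to $\int_0^\infty v^{\lambda_1-1}e^{-\lambda_2 v}\,\mathrm{d}v$, and you read convergence off the two endpoints. Your case split over $\lambda_2<0$, $\lambda_2=0$, $\lambda_2>0$ is a more careful version of the paper's one-line endpoint argument, whose claim that the $v\to\infty$ end needs $\lambda_2>0$ only holds once $\lambda_1>0$ is known; also, for a positive integrand the substitution is justified by the change-of-variables formula in $[0,\infty]$, not by Tonelli, though that is only a naming slip and not a gap.
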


\begin{proof}
Substitute $v=e^{-u}$, so $u=-\ln v$ and $\mathrm{d}u=-\mathrm{d}v/v$; as $u$
runs over $(-\infty,\infty)$, $v$ runs over $(\infty,0)$. Since
$e^{-\lambda_1u}=v^{\lambda_1}$,
\[
Z=\int_0^\infty v^{\lambda_1}e^{-\lambda_2v}\,\frac{\mathrm{d}v}{v}
 =\int_0^\infty v^{\lambda_1-1}e^{-\lambda_2v}\,\mathrm{d}v
 =\frac{\Gamma(\lambda_1)}{\lambda_2^{\lambda_1}} .
\]
Convergence at $v\to0$ requires $\lambda_1>0$; at $v\to\infty$ it requires
$\lambda_2>0$. In the original variable, the factor $e^{-\lambda_2e^{-u}}$
controls $u\to-\infty$ and $e^{-\lambda_1u}$ controls $u\to+\infty$.
\end{proof}

\noindent
The same substitution identifies the law of $v$:
\begin{equation}
q(u)\,\mathrm{d}u=\frac{\lambda_2^{\lambda_1}}{\Gamma(\lambda_1)}
v^{\lambda_1-1}e^{-\lambda_2v}\,\mathrm{d}v ,
\end{equation}
so $v=e^{-S/k_B}$ is Gamma distributed with shape $\lambda_1$ and rate
$\lambda_2$. This is what makes the remaining steps elementary.

\subsection*{A.4 Fixing the multipliers}

For a Gamma law $\langle v\rangle=\lambda_1/\lambda_2$, so the fluctuation
theorem (C1) forces
\begin{equation}
\lambda_2=\lambda_1\equiv a ,
\end{equation}
leaving a one-parameter family: $v\sim\mathrm{Gamma}(a,a)$, of unit mean. Using
$\mathbb{E}[\ln v]=\psi(a)-\ln b$ for $\mathrm{Gamma}(a,b)$, the mean constraint
(C2) reads
\begin{equation}
m\;=\;\langle u\rangle\;=\;-\mathbb{E}[\ln v]\;=\;\ln a-\psi(a)\;\equiv\;\varphi(a).
\label{eq:Aphi}
\end{equation}

\begin{proposition}
\label{prop:Aphi}
$\varphi(a)=\ln a-\psi(a)$ is strictly decreasing on $(0,\infty)$, with
$\varphi(0^+)=+\infty$ and $\varphi(\infty)=0^+$. Hence \eqref{eq:Aphi} has a
unique solution $a>0$ for every $m>0$, and none for $m\le0$.
\end{proposition}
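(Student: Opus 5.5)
The plan is to differentiate, $\varphi'(a)=1/a-\psi'(a)$, and show that $\psi'(a)>1/a$ for every $a>0$. For this I would use the series representation of the trigamma function, $\psi'(a)=\sum_{n\ge0}(a+n)^{-2}$. The map $t\mapsto(a+t)^{-2}$ is strictly decreasing on $[0,\infty)$, so on each interval $[n,n+1]$ the term $(a+n)^{-2}$ strictly exceeds the integral of $(a+t)^{-2}$ over that interval. Summing over $n\ge0$ gives
\[
\psi'(a)\;>\;\int_0^\infty\frac{\mathrm{d}t}{(a+t)^{2}}\;=\;\frac1a .
\]
Hence $\varphi'(a)<0$ on $(0,\infty)$, so $\varphi$ is strictly decreasing. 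Since $\varphi$ is $C^1$ there, it is also continuous.

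For the limits I would use two standard expansions of the digamma function. As $a\to0^+$, the recurrence $\psi(a)=\psi(a+1)-1/a$ gives $\psi(a)=-1/a+O(1)$. Therefore $\varphi(a)=\ln a+1/a+O(1)$, and the $1/a$ term dominates $\ln a$, so $\varphi(a)\to+\infty$. As $a\to\infty$, the asymptotic expansion $\psi(a)=\ln a-1/(2a)+O(a^{-2})$ gives $\varphi(a)=1/(2a)+O(a^{-2})\to0$.

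The approach from above, $\varphi(\infty)=0^+$, then follows at once: a strictly decreasing function that tends to $0$ at infinity is strictly positive everywhere. I would also give an independent check via the representation in~\eqref{eq:Aphi}. There $\varphi(a)=\ln\mathbb{E}[v]-\mathbb{E}[\ln v]$ with $v\sim\mathrm{Gamma}(a,a)$ of unit mean. Strict concavity of the logarithm, together with the fact that $v$ is non-degenerate, makes this quantity strictly positive by Jensen's inequality. This also identifies the second law $m>0$ as a consequence of the constraint (C1).

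Finally, $\varphi$ is a continuous, strictly decreasing bijection from $(0,\infty)$ onto $(0,\infty)$. By the intermediate value theorem, every $m>0$ has exactly one preimage $a$. No $m\le0$ is attained, because $\varphi>0$.

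I expect the main obstacle to be the strict inequality $\psi'(a)>1/a$. I would handle it with the integral comparison above, which is elementary once the trigamma series is granted. Everything else reduces to quoting standard digamma asymptotics.
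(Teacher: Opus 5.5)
Your proposal is correct and takes the paper's route: differentiate, use the trigamma series to get $\psi'(a)>1/a$, and read both limits off the standard small-$a$ and large-$a$ behaviour of $\psi$. The integral comparison, the positivity and intermediate-value steps, and the Jensen check are details the paper leaves implicit. The Jensen check is essentially the paper's remark that follows the proposition.
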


\begin{proof}
$\varphi'(a)=1/a-\psi'(a)$, and $\psi'(a)=\sum_{n\ge0}(a+n)^{-2}>1/a$, so
$\varphi'<0$. The limits follow from $\psi(a)=-1/a-\gamma+O(a)$ as $a\to0$ and
$\psi(a)=\ln a-1/(2a)+O(a^{-2})$ as $a\to\infty$, the latter giving
$\varphi(a)\simeq1/(2a)$.
\end{proof}

\begin{remark}
The restriction $m>0$ in Proposition~\ref{prop:Aphi} is not an additional
assumption. By convexity of $x\mapsto e^{-x}$ and Jensen's inequality, (C1) gives
$1=\langle e^{-u}\rangle\ge e^{-\langle u\rangle}$ and hence $\langle
u\rangle\ge0$, with equality only for $u$ degenerate at $0$. The second law
emerges from the fluctuation-theorem constraint, and the maximum-entropy family
reproduces exactly the admissible range: although its support is all of
$\mathbb{R}$, this prior cannot represent a violation of the second law on
average.
\end{remark}

\subsection*{A.5 The stationary point is the maximiser}

$H$ is strictly concave and (C0)--(C2) are linear functionals of $q$, so the
stationary point is the unique global maximiser. Explicitly, for bounded $h$ with
$\langle h\rangle_q=\langle e^{-u}h\rangle_q=\langle u\,h\rangle_q=0$, the
perturbation $q_\varepsilon=q(1+\varepsilon h)$ satisfies all three constraints
exactly for every $\varepsilon$, and
\begin{equation}
H[q_\varepsilon]\;=\;H[q]-\frac{\varepsilon^2}{2}\int q\,h^2\,\mathrm{d}u
+O(\varepsilon^3),
\label{eq:Asecond}
\end{equation}
the first-order term vanishing because $\ln q$ is a linear combination of
$1,u,e^{-u}$, to each of which $h$ is orthogonal. The entropy at the maximum is
\begin{equation}
H[q]\;=\;\ln Z+a\langle u\rangle+a\langle e^{-u}\rangle
\;=\;\ln\Gamma(a)-a\ln a+a\left(\varphi(a)+1\right).
\end{equation}

\subsection*{A.6 Numerical confirmation}

Proposition~\ref{prop:AZ} agrees with quadrature to $10^{-15}$ relative for
$(\lambda_1,\lambda_2)$ away from the convergence boundary. With
$\lambda_2=\lambda_1=a$ the constraints are satisfied to machine precision and
$\langle u\rangle$ matches $\ln a-\psi(a)$ to $10^{-12}$ or better
(Table~\ref{tab:Aconstraints}). For $a=1.7$ the closed-form entropy gives
$H=1.2496609153$ against $1.2496609153$ by quadrature. Taking $h$ a bounded
combination of Gaussian bumps orthogonalised against $\{1,e^{-u},u\}$, the drop
$H[q]-H[q_\varepsilon]$ was positive at every $\varepsilon$ tested and matched
\eqref{eq:Asecond} with ratios $0.945$, $0.970$, $0.985$, $0.992$, $0.996$ for
$\varepsilon=0.4,0.2,0.1,0.05,0.025$, converging to unity as $\varepsilon\to0$.

\begin{table}[H]
\centering
\caption{Constraint check for $v=e^{-S/k_B}\sim\mathrm{Gamma}(a,a)$, by
quadrature. Note $\langle u\rangle=\gamma$ at $a=1$: the unit-rate case
corresponds to mean dissipation $\mu=\gamma k_B$}
\label{tab:Aconstraints}
\begin{tabular}{ccccc}
\toprule
$a$ & $\int q\,\mathrm{d}u$ & $\langle e^{-u}\rangle$ & $\langle u\rangle$ & $\ln a-\psi(a)$\\
\midrule
$0.5$ & $1.0000000000$ & $1.0000000000$ & $1.27036285$ & $1.27036285$\\
$1.0$ & $1.0000000000$ & $1.0000000000$ & $0.57721566$ & $0.57721566$\\
$2.0$ & $1.0000000000$ & $1.0000000000$ & $0.27036285$ & $0.27036285$\\
$5.0$ & $1.0000000000$ & $1.0000000000$ & $0.10332024$ & $0.10332024$\\
$20$  & $1.0000000000$ & $1.0000000000$ & $0.02520828$ & $0.02520828$\\
\bottomrule
\end{tabular}
\end{table}

\subsection*{A.7 Consequence for the main text}

Conditioning this prior on the physical region $S\ge0$ --- unavoidable by
Proposition~\ref{prop:exist} --- and comparing with the exponential prior
\eqref{eq:prior} at equal mean dissipation gives Table~\ref{tab:ftprior}. Two
conventions are needed to read that table and are stated here rather than left
implicit. The row label $\mu/k_B$ is the mean of the prior \emph{before}
conditioning, which fixes the multiplier $a$; the conditioned mean is the larger
number $m_{+}=\langle u\mid u\ge0\rangle$, and it is $m_{+}$, not $\mu$, at which
the exponential prior is matched. And $\alpha=T_0\langle S\rangle/W$ needs a work
output as well as a mean dissipation, so the table fixes $W=2T_0\mu$, giving
$\alpha=m_{+}/2\mu$. With those choices the two priors agree to $3.0\%$ at
$\mu=0.5\,k_B$ and to $0.09\%$ at $\mu=10\,k_B$, the relative difference being
taken against the exponential value, and the agreement improving monotonically as
the mean dissipation grows --- to $0.014\%$ at $\mu=30\,k_B$. The mechanism is
that as $a\to0$ the conditioned density tends to $a\,e^{-au}$, which is the
exponential prior itself.
Admitting negative entropy production in a fluctuation-theorem-consistent way
therefore changes the results of Sections~3 and~4 negligibly. What cannot be
relaxed is the conditioning on $S\ge0$ itself, which Proposition~\ref{prop:exist}
shows is required for the mean to exist at all.

\begin{table}[H]
\centering
\caption{Mean efficiency under the fluctuation-theorem prior conditioned on
$S\ge0$, against the exponential prior \eqref{eq:prior} at the same effective
$\alpha$. The row label $\mu/k_B$ is the \emph{unconditioned} mean, which fixes
the multiplier; the conditioned mean is larger, and the comparison is made at the
conditioned mean with $W=2T_0\mu$, so that $\alpha=m_{+}/2\mu$ (Section~A.7). The
final column is relative to the exponential value. Agreement improves
monotonically as $\mu/k_B$ grows}
\label{tab:ftprior}
\begin{tabular}{ccccc}
\toprule
$\mu/k_B$ & $\alpha$ & FT prior & exponential prior & difference \\
\midrule
$0.5$  & $1.143$ & $0.5531$ & $0.5705$ & $3.0\%$ \\
$1$    & $0.929$ & $0.5994$ & $0.6104$ & $1.8\%$ \\
$2$    & $0.785$ & $0.6365$ & $0.6422$ & $0.9\%$ \\
$5$    & $0.664$ & $0.6713$ & $0.6732$ & $0.3\%$ \\
$10$   & $0.605$ & $0.6890$ & $0.6896$ & $0.09\%$ \\
\bottomrule
\end{tabular}
\end{table}

\section*{Statements and Declarations}

\subsection*{Data and code availability}
This work reports no experimental data. All numerical results, figures and
verification checks reported in the paper are reproducible from the scripts
provided as Supplementary Information, which regenerate every figure and every
numerical value quoted in the text, re-derive each proposition symbolically, and
test the bounds against randomly generated distributions. The same scripts are
archived, in the exact state that produced the numbers reported here, at
\url{https://doi.org/10.5281/zenodo.22812101}.


\end{document}